\definecolor{mygray}{RGB}{211,211,211} % Adjust the color as needed
\newcommand{\attack}{\textsc{HiDRA}}
\newcommand{\defence}{robust aggregator}
\newcommand{\DEFENCES}{Robust Aggregators}
\newcommand{\defences}{robust aggregators}
\newcommand{\Defences}{Robust aggregators}
\newcommand{\bias}{bias}
\newcommand{\threshold}{\xi}
\newtheorem{theorem}{Theorem}
\theoremstyle{definition}
\newtheoremstyle{proofsketch}
  {\topsep} % Space above
  {\topsep} % Space below
  {\normalfont} % Body font
  {} % Indent amount
  {\itshape} % Theorem head font
  {.} % Punctuation after theorem head
  {.5em} % Space after theorem head
  {} % Theorem head spec (can be left empty, meaning 'normal')
\theoremstyle{proofsketch}
\newtheorem*{proofsketch}{Proof Sketch}
\titleformat{\paragraph}[runin]{\bfseries}{}{0pt}{}[]
\titlespacing{\paragraph}{0pt}{\baselineskip}{1em}
\begin{document}
%
% paper title
% Titles are generally capitalized except for words such as a, an, and, as,
% at, but, by, for, in, nor, of, on, or, the, to and up, which are usually
% not capitalized unless they are the first or last word of the title.
% Linebreaks \\ can be used within to get better formatting as desired.
% Do not put math or special symbols in the title.
\title{\Large \bf Attacking Byzantine Robust Aggregation in High Dimensions}

% author names and affiliations
% use a multiple column layout for up to three different
% affiliations
% \author{\IEEEauthorblockN{Michael Shell}
% \IEEEauthorblockA{School of Electrical and\\Computer Engineering\\
% Georgia Institute of Technology\\
% Atlanta, Georgia 30332--0250\\
% Email: http://www.michaelshell.org/contact.html}
% \and
% \IEEEauthorblockN{Homer Simpson}
% \IEEEauthorblockA{Twentieth Century Fox\\
% Springfield, USA\\
% Email: homer@thesimpsons.com}
% \and
% \IEEEauthorblockN{James Kirk\\ and Montgomery Scott}
% \IEEEauthorblockA{Starfleet Academy\\
% San Francisco, California 96678-2391\\
% Telephone: (800) 555--1212\\
% Fax: (888) 555--1212}}
\author{\IEEEauthorblockN{Sarthak Choudhary\textsuperscript{*},
Aashish Kolluri\textsuperscript{*} and
Prateek Saxena}
\IEEEauthorblockA{School of Computing,
National University of Singapore\\
csarthak76@gmail.com 
\{aashish7, prateeks\}@comp.nus.edu.sg
}}

% \author{Sarthak Choudhary\thanks{Equal Contribution} Aashish Kolluri\thanks{Equal Contribution} Prateek Saxena}

% conference papers do not typically use \thanks and this command
% is locked out in conference mode. If really needed, such as for
% the acknowledgment of grants, issue a \IEEEoverridecommandlockouts
% after \documentclass

% for over three affiliations, or if they all won't fit within the width
% of the page (and note that there is less available width in this regard for
% compsoc conferences compared to traditional conferences), use this
% alternative format:
% 
%\author{\IEEEauthorblockN{Michael Shell\IEEEauthorrefmark{1},
%Homer Simpson\IEEEauthorrefmark{2},
%James Kirk\IEEEauthorrefmark{3}, 
%Montgomery Scott\IEEEauthorrefmark{3} and
%Eldon Tyrell\IEEEauthorrefmark{4}}
%\IEEEauthorblockA{\IEEEauthorrefmark{1}School of Electrical and Computer Engineering\\
%Georgia Institute of Technology,
%Atlanta, Georgia 30332--0250\\ Email: see http://www.michaelshell.org/contact.html}
%\IEEEauthorblockA{\IEEEauthorrefmark{2}Twentieth Century Fox, Springfield, USA\\
%Email: homer@thesimpsons.com}
%\IEEEauthorblockA{\IEEEauthorrefmark{3}Starfleet Academy, San Francisco, California 96678-2391\\
%Telephone: (800) 555--1212, Fax: (888) 555--1212}
%\IEEEauthorblockA{\IEEEauthorrefmark{4}Tyrell Inc., 123 Replicant Street, Los Angeles, California 90210--4321}}

% use for special paper notices
%\IEEEspecialpapernotice{(Invited Paper)}

% make the title area
\maketitle
\begingroup\renewcommand\thefootnote{*}
\footnotetext{These authors contributed equally to this work.}
\endgroup
% As a general rule, do not put math, special symbols or citations
% in the abstract
\begin{abstract}
Training modern neural networks or models typically requires averaging over a sample of high-dimensional vectors. Poisoning attacks can skew or bias the average vectors used to train the model, forcing the model to learn specific patterns or avoid learning anything useful. Byzantine robust aggregation is a principled algorithmic defense against such biasing. Robust aggregators can bound the maximum bias in computing centrality statistics, such as mean, even when some fraction of inputs are arbitrarily corrupted. Designing such aggregators is challenging when dealing with high dimensions. However, the first polynomial-time algorithms with strong theoretical bounds on the bias have recently been proposed. Their bounds are {\em independent} of the number of dimensions, promising a conceptual limit on the power of poisoning attacks in their ongoing arms race against defenses.

In this paper, we show a new attack called \attack~on practical realization of strong defenses which subverts their claim of dimension-independent bias. \attack~highlights a novel computational bottleneck that has not been a concern of prior information-theoretic analysis. Our experimental evaluation shows that our attacks almost completely destroy the model performance, whereas existing attacks with the same goal fail to have much effect. Our findings leave the arms race between poisoning attacks and provable defenses wide open.

\end{abstract}

% no keywords

% For peer review papers, you can put extra information on the cover
% page as needed:
% \ifCLASSOPTIONpeerreview
% \begin{center} \bfseries EDICS Category: 3-BBND \end{center}
% \fi
%
% For peerreview papers, this IEEEtran command inserts a page break and
% creates the second title. It will be ignored for other modes.
\IEEEpeerreviewmaketitle

%-------------------------------------------------------------------------------
\section{Introduction}
\label{sec:intro}
%-------------------------------------------------------------------------------

Machine learning training algorithms often have to compute an average over a set of vectors. The standard stochastic gradient descent (SGD) algorithm for neural network training, as an example, computes the average of gradient vectors derived from data samples in each step. Similarly, in federated learning, local models are trained at individual worker machines and then sent to a central service that averages over these vectors to get the global model. 

Several security problems arise if some of these vectors can be maliciously crafted. For instance, poisoning attacks corrupt training data samples so that gradients computed from them during SGD skew or bias the learned model. The {\em bias} is how far the average computed from the partially corrupted vectors can be from the benign (uncorrupted) value.
Such biasing attacks on ML training can severely deteriorate the training accuracy of the model~\cite{xie2019dba,fang2020local,xie2020fall}, exacerbate privacy concerns~\cite{tramer2022truth}, and create unfairness that did not exist in the dataset~\cite{solans2020poisoning,zhang2022pipattack}.
Local models sent by malicious worker machines in federated learning can directly be corrupted before being sent to the central service.
%

% The integration of machine learning (ML) in distributed setups has emerged as a transformative paradigm and  different from the traditional centralized setup. Distributed machine learning leverages the combined computational power of networked machines. This offers several advantages such as simultaneous processing of extensive datasets and large models across multiple machines, thus, boosting scalability and data sovereignty~\cite{}. In this paradigm, there is typically a master machine and several worker machines which collaborate with each other to train an ML model. At a high level, the training proceeds in rounds wherein each round the master machine sends the current model state to the worker machines. The worker machines independently train the model locally on their locally-stored datasets and return the resulting {\em local updates} (intermediate models) to the master machine. Local updates are high-dimensional vectors. The master machine {\em aggregates} these local updates to get the aggregated ML model which is then used for the next training round. 

A natural security question arises about the robustness of such averaging. It is easy to see that even a single corrupted input vector can result in arbitrarily biasing the average when taking a simple arithmetic mean. 
%
% The result is then directly used to update the machine learning model at each training step. 
%
It is desirable to design algorithms that compute an {\em aggregate} statistic over given vectors, similar to the mean, which cannot be biased much even by a strong adversary. 
The {\em Byzantine Robust Aggregation} problem captures this goal: Imagine a fraction $\epsilon$ of vectors can be arbitrarily corrupted, then can we ensure the bias in the computed aggregate is small~\cite{huber1992robust, tukey1960survey}? Robust aggregation algorithms bound how much the attacker can skew the average model at each step in training, thereby offering a principled limit on the effect of attacks. 

% The problem becomes challenging when vectors are high-dimensional. We, therefore, refer to the problem as \defence, short for {\bf B}yzantine {\bf R}obust {\bf A}ggregation in {\bf H}igh Di{\bf M}ensions.

Designing \defences~is an algorithmic challenge that has inspired many attempts. These aggregators, as shown in Table~\ref{tab:robust_gaurantees}, most often provide a 
 theoretical upper bound
\footnote{The more precise asymptotical bounds are in Table~\ref{tab:robust_gaurantees}. We will often drop the $\Sigma$ term, as it is a constant for a given distribution.}
 on the maximum bias of $O(\sqrt{\epsilon d})$~\cite{yin2018byzantine, pillutla2022robust, chen2020distributed, blanchard2017machine}.
%
% as well as celebrated theoretical advances in robust high-dimensional statistics recently~\cite{diakonikolas2019robust,diakonikolas2017being}.
%
% (see ). Many \defence~algorithms share a common insight: the aggregation function is designed to be {\em robust to outliers}, unlike the arithmetic mean. 
% For instance, algorithms based on the geometric median, coordinate-wise trimmed-mean, and median are such examples~\cite{}. 
%
The bias is the $L_2$ norm of the maximum adversarial error induced, $\epsilon$ is the fraction of vectors controlled by the adversary, and $d$ is the number of dimensions of the vectors. The issue in practice is that the number of dimensions $d$ is usually the parameter size of the ML model, which can be in millions or billions. The bias bound being dependent on $d$ makes it vacuously large for modern ML systems. Hence, we refer to these algorithms as weakly bounded (or {\em weak}). Concrete poisoning attacks that have a severe impact on weak algorithms have been shown recently~\cite{fang2020local,xie2020fall, shejwalkar2021manipulating}. 

On the theoretical side, robust aggregation or robust mean estimation has been a long-standing challenge in high-dimensional statistics.
The primary goal is to give {\em stronger bounds of $O(\sqrt{\epsilon})$ on the bias}, without the $\sqrt{d}$ factor, which is the statistically optimal ~\cite{zhu2023byzantine, diakonikolas2019recent}. The classical Tukey median~\cite{tukey1960survey} achieves the goal but best known algorithms for computing it have running time exponential in $d$~\cite{amaldi1995complexity}. A new line of algorithms gives the first polynomial time solutions with strong bounds on bias~\cite{diakonikolas2017being} and their first practical realization strategy is shown recently~\cite{zhu2023byzantine}. Table~\ref{tab:robust_gaurantees} summarizes the guarantees of strong aggregators.
% %
% These strong algorithms 
% % are designed for the related problem of robust mean estimation.  Robust mean estimation algorithms aim to estimate the mean of a distribution from samples robustly even if $\epsilon$ fraction of these samples have been arbitrarily corrupted.
% % %
% % They 
% assume access to the natural variance in the distribution of (benign) vectors being aggregated and use it to compute a weighted average, which assigns less weight to larger outlier vectors, possibly zero. 
% %
% Their guarantee follows from the principle that corrupted samples will have to increase the maximum variance post corruption significantly in order to adequately bias the weighted mean. These algorithms can naturally serve as solutions to \defence~where the natural benign variance can be estimated, for example, from a small set of known benign samples in each training round. 
It has been experimentally verified that they completely thwart existing poisoning attacks which aim to reduce the trained model performance, thus establishing the state-of-the-art~\cite{zhu2023byzantine}. Importantly, these are the only polynomial-time algorithms known to have strong bounds on the bias independent of $d$---a crucial security property when working with high-dimensional vectors as in modern ML training.

\paragraph{Our work.} Since none of the existing attacks defeat strong \defences, even when working in low dimensions, it is natural to ask: Are there {\em any} attacks that create bias matching the theoretical upper bound of $O(\sqrt{\epsilon})$ in them? 
In this paper, we present the first effective attack against strong \defences, called \attack\footnote{\attack~is short for \underline{\textbf{Hi}}gh \underline{\textbf{D}}imensional attack on \underline{\textbf{R}}obust \underline{\textbf{A}}ggregators.}. \attack~induces bias matching their analytical upper bounds in low dimensional settings. Our attack, thus, shows that the prior theoretical bounds for strong defenses are tight.

\begin{table}[t]
    \centering
    \caption{Comparison of the worst-case bias between different robust aggregation algorithms. $||\Sigma||_2$ is maximum variance of the uncorrupted sample. $\tilde{O}( \cdot)$ ignores the constant and logarithmic factors in the computation complexity. The number of vectors is $n$ and dimensions $d$.} 
    \label{tab:robust_gaurantees}
    \renewcommand{\arraystretch}{1.5}
    \resizebox{\columnwidth}{!}{
    \begin{tabular}{|c|c|c|}
        \hline
        \textbf{Algorithm} & \textbf{Max. Bias} & \textbf{Comp. Complexity} %& \textbf{Attacks} 
        \\
        \hline
        Weak~\defences& &
        \\
        \cline{1-1} 
        \label{row:suboptimal_start}Median ~\cite{yin2018byzantine} & $\tilde{O}(\sqrt{\epsilon d}) \cdot ||\Sigma||_2^{\frac{1}{2}}$ & $\tilde{O}(nd)$
        %& \multirow{2}{3cm}{ Median and Trimmed Mean attacks ~\cite{fang2020local,xie2020fall}} 
        \\
        %\cline{1-3}
       Trimmed Mean ~\cite{yin2018byzantine}& $\tilde{O}(\sqrt{\epsilon d}) \cdot ||\Sigma||_2^{\frac{1}{2}}$ & $\tilde{O}(nd)$
        %&
        \\
        \label{row:suboptimal_end} Geometric Median ~\cite{pillutla2022robust, blanchard2017machine, chen2020distributed} & $\tilde{O}(\sqrt{\epsilon d}) \cdot ||\Sigma||_2^{\frac{1}{2}}$ & $\tilde{O}(nd)$
        \\
        \hline
        Strong~\defences& &
        \\
        \cline{1-1} 
    \label{row:optimal_start}Filtering~\cite{diakonikolas2017being}& $\tilde{O}(\sqrt{\epsilon}) \cdot ||\Sigma||_2^{\frac{1}{2}}$ & $\tilde{O}(\epsilon n \cdot d^3)$ \\
        %& 
        %\multirow{2}{3cm} {\hspace{1.3cm}$-$} \\
        %\cline{1-3}
        No-Regret~\cite{hopkins2020robust} & $\tilde{O}(\sqrt{\epsilon}) \cdot ||\Sigma||_2^{\frac{1}{2}}$ & $\tilde{O}((n + d^3)\cdot d)$ %& 
        \\
        %\hline
        %\cline{1-3}
        \label{row:optimal_end}SoS ~\cite{kothari2017outlierrobust} & $\tilde{O}(\sqrt{\epsilon}) \cdot ||\Sigma||_2^{\frac{1}{2}}$ & poly$(n, d)$ %&
        \\
        Tukey Median~\cite{tukey1960survey} & $\tilde{O}(\sqrt{\epsilon}) \cdot ||\Sigma||_2^{\frac{1}{2}}$ & NP-Hard in $d$ 
        %& %$-$ 
        \\
        \hline
    \end{tabular}
    }
\end{table}

More importantly, we observe that the bounds given by prior theoretical analysis make {\em idealized computational assumptions} which hold primarily when the number of dimensions is small. As the number of dimensions increases, existing \defences~run into a fundamental computation bottleneck.
Practical realizations of these defenses, therefore, when working with high dimensions, have to break down the given vectors into multiple low-dimensional chunks to solve for. 
Our \attack~attack induces a near optimal\footnote{The known upper bound is $\tilde{O}(\sqrt{\epsilon d})$, see Table~\ref{tab:robust_gaurantees}.} bias per chunk, resulting in a total bias of $\Omega(\sqrt{\epsilon d})$ in the high dimensional setting. {\em This is in sharp contrast to the promised dimension independent $\tilde{O}(\sqrt{\epsilon})$ bias}.
A factor of $\sqrt{d}$ translates to several orders of magnitude worse bias, even for moderately sized neural networks that have a million parameters.
We analytically derive the lower bound on the bias achieved by our \attack~attack on the above chunking procedure of $\Omega(\sqrt{\epsilon d})$ and experimentally confirm that corruption of input vectors using \attack~hits this lower bound. \attack~is thus, again, near-optimal for high $d$. 
%

% \prateek{The difference is drastic---a factor proportional $\sqrt{d}$ can give orders of magnitude worse bias even for small models of a few thousands of parameters. We analytically derive the factor and check that experimental effect is close to it.} 

%
% To tackle this computational bottleneck, a natural procedure, which state-of-the-art defenses also use, is to partition the dimensions into disjoint chunks and aggregate across all vectors for each chunk separately. The results of each chunk can be concatenated to derive the final high-dimensional aggregate.
%

\paragraph{Experimental results.}
We employ \attack~towards creating indiscriminate or untargeted poisoning attacks as a concrete application~\cite{aisec2017}. Untargeted poisoning attacks aim to prevent ML models from learning useful information and have been difficult to achieve with prior attacks when strong~\defences~are used.
However, our attack consistently results in a drastic drop in the accuracy of trained models even when using strong \defences. For example, in several instances \textbf{the original model accuracy of over $\mathbf{80\%}$ drops to below $\mathbf{10\%}$ with $\mathbf{\epsilon=0.2}$ fraction of vectors corrupted} using \attack. Under the same setup, prior attacks induce {\em below $5\%$} drop in performance.

\paragraph{Computational Bottleneck is fundamental.} 
The computational bottleneck targeted by \attack~is fundamental to the problem of robust aggregation in general, not specific to a single algorithm.
The key idea in all strong \defences~is to filter out vectors far from the mean in the {\em direction of maximum variance} of the given vectors. For this, strong aggregators compute the maximum variance direction. It turns out that this is no accident. We show a quasilinear time reduction from the problem of computing the maximum variance direction approximately to that of strong robust aggregation, for a non-trivial class of inputs (see Section~\ref{sec:tradeoff}). In other words, solving strong aggregation efficiently would imply finding the approximate maximum variance direction efficiently for this class of inputs.

%Specifically, given a strong robust aggregator $f$ with computational complexity $T(f)$, one can compute a very close approximation of the direction of maximum variance in a given set of vectors $Y$ from a non-trivial class of inputs, with running time $T(f) + \tilde{\Theta}(|Y|)$.

To our knowledge, the best-known algorithms for computing the maximum variance direction of given vectors have $O(d^3)$~\footnote{Algorithms for exact maximum variance run in $O(d^3)$, while the approximate algorithms with desired error rates operate in $\tilde{O}(n^2d)$. We refer readers to section~\ref{sec:tradeoff} for more details.} time complexity~\cite{cuppen1980divide,arbenz2012lecture}
% In practical scenarios of large machine learning models, vectors are high dimensional and computing their maximum variance direction is very expensive.
%
Their highly optimized implementations\footnote{https://numpy.org/doc/stable/reference/generated/numpy.linalg.eigh.html} remain $O(d^3)$, despite being widely used for decades.
For modern ML models, $d$ can be $2^{20}$ to $2^{40}$, so $O(d^3)$ is prohibitively costly.
The computational bottleneck in strong aggregators, therefore, does not appear easy to work around in full generality.

\paragraph{Our contribution.} 
% Robust aggregation of high-dimensional vectors in the presence of Byzantine attackers is an important problem, both in algorithmic research on robust statistics and in practical distributed learning. 
%
We present the first effective attacks against provable byzantine robust aggregation defenses that have the strongest bias bounds. Our attacks are {\em near optimal} and experimentally surpass the efficacy of existing untargeted poisoning attacks for these defenses. They create a severe loss in performance of trained models. The vulnerability is {\em computational}, highlighting the gaps between idealized information-theoretic analysis of these defenses and their practical realization. As a result, designing practical and provable defenses that limit the bias in poisoning attacks remains an open question for the future.

% We show that prior provable defenses with ideal security characteristics, when run on computationally bounded machines, have fundamental computational bottlenecks. Our attacks exploit these bottlenecks to create a large bias in the aggregate, effectively nullifying their theoretical guarantees.
% Thus, design of provable defenses for Byzantine robust aggregation for high-dimensional vectors in practical computational setups remains an open challenge.

% \begin{figure*}[t]
%   \centering
%   \includegraphics[width=\textwidth]{}
%   \caption{Federated Learning}
%   \label{fig:fed_learning}
% \end{figure*}
%-------------------------------------------------------------------------------
\section{Background \& Problem Setup}
\label{sec:backandprob}
%-------------------------------------------------------------------------------
The central question we are interested in is that of computing a ``mean-like" aggregate statistic of $n$ vectors when a small unknown fraction $\epsilon$ of them has been arbitrarily corrupted. The defender wants the computed aggregate to be {\em robust}, i.e., not too far from the mean of uncorrupted vectors. The problem arises in many applications, though we are motivated by its use in machine learning tasks. 

% We provide an overview of the problem and provable algorithms for it.

\subsection{The Problem: Byzantine Robust Aggregation}

There are $n$ vectors $X= \{ x_1, ..., x_n\}$, for example $n$ gradients of a neural network, where each vector $x_i \in \mathbb{R}^d$. A {\em Byzantine} adversary who has access to $X$ can replace an $\epsilon < \frac{1}{2}$ fraction of vectors in $X$ and with arbitrarily chosen vectors, resulting in a set $Y = \{y_1, ..., y_n\}$ called the $\epsilon$-corrupted set. The defender is given $Y$ and some auxiliary information about $X$. The robust aggregation problem is to specify an aggregation function $f: R^{n\times d}\rightarrow \mathbb{R}^d$, such that the {\em bias}---the difference between $f(Y)$ and the arithmetic mean of $X$ for all possible choices of $Y$---is bounded by $\lambda$. Formally, there exists some $\lambda>0$ such that:
\begin{align}
\bias = \max_{Y}{\lVert f(Y) - \frac{1}{n}\sum_{i=1}^n x_i \rVert_2} \leq \lambda \label{eq:bias}
\end{align}

% , for all $L_p$ norms~\cite{zhu2023byzantine}. We use the standard $L_2$ norm here. 

There are known information-theoretic limits for which sets we can solve robust aggregation. Solving the above problem with provable guarantees in full generality, without making any assumptions about the {\em uncorrupted} samples $X$, is not possible. One must minimally assume that an upper bound on the variance of uncorrupted samples is known~\cite{diakonikolas2017being, lai2016agnostic, kothari2017outlierrobust}---without it, there exist sets where a small number of corruptions make it impossible to recover the original mean. We will therefore assume that a reasonably good estimate of maximum variance is known in advance. The maximum variance is the {\em largest eigenvalue} or the {\em spectral norm} of $\Sigma$, denoted by $||\Sigma||_2$, where $\Sigma \in \mathbb{R}^{d \times d}$ is the covariance matrix of $X$ defined below:

\begin{align*}
\Sigma = \frac{1}{n}\cdot\left[\sum_{i=0}^{n}{(x_i -\hat{\mu})^T \cdot (x_i - \hat{\mu})}\right]\text{;      }
\hat{\mu} = \frac{1}{n}\sum_{i=0}^{n}{x_i}
\end{align*}

The statistically optimal (best possible) bias $\lambda$ is a multiplicative factor, independent of $d$, times the maximum variance $||\Sigma||^{\frac{1}{2}}_2$. We refer to this multiplicative factor as the {\em bound} when clear from context. $||\Sigma||^{\frac{1}{2}}_2$ is fixed by the distribution $X$ is drawn from, but it may depend on $d$.

The robust aggregation problem is also called robust mean estimation in statistics~\cite{huber2004robust, tukey1960survey, diakonikolas2019robust, diakonikolas2019recent, lai2016agnostic, hopkins2020robust, kothari2017outlierrobust, kothariJacob, kothari2022polynomial}. The above assumption about knowing 
$||\Sigma||_2$ is standard in prior work on robust mean estimation~\cite{diakonikolas2017being}.

\subsection{Application: Untargeted Poisoning in SGD}
\label{sec:setup}

The standard training algorithm for neural networks is stochastic gradient descent (SGD)~\cite{bottou1991stochastic}.  We will use training with SGD as the baseline throughout the paper.

\paragraph{Aggregation in SGD.} SGD works by initializing the ML model parameters or weights $w \in \mathbb{R}^d$ with random values and iteratively updating it in sequential steps. In each step $t$, it selects a set of $n$ samples uniformly at random from the training dataset and computes the aggregate:  
$$ w^{t+1} = w^{t} - \frac{\eta}{n} \sum_{i=1}^n \nabla Q_i(w^t) $$

Here, the number of samples taken in each set is a constant $n$, the gradient of the $i^{th}$ data point in the set is $\nabla Q_i(w)$, and the learning rate $\eta$ is a constant. The gradients are vectors in $\mathbb{R}^d$. The choice of loss functions, neural network architecture, and so on are not relevant for the purpose of modeling the problem of robust aggregation.

The vanilla SGD procedure, as described above, is susceptible to adversarial corruptions of gradients. This can result in arbitrary bias since a simple arithmetic mean (average) of gradients is computed in each training step. A more robust aggregation technique than arithmetic mean is a natural mitigation for attacks that try to bias the aggregate.

\paragraph{Threat model.} 
We assume that the defender will use a strong \defence~$f$ instead of naive arithmetic mean during the SGD computation. These aggregators take as input the corrupted samples $Y$ and the auxiliary information $||\Sigma||_2$ about the benign samples. In practice, one can estimate $||\Sigma||_2$ experimentally based on a small set of clean unpoisoned samples. We assume the defender knows $||\Sigma||_2$ to use in $f$, and the attacker also knows $f$ and $||\Sigma||_2$.

The Byzantine attacker model admits a powerful attacker that can arbitrarily corrupt vectors being aggregated. Such an attack model abstracts away from how the attacker induces the corruption, for instance, whether through poisoning of datasets~\cite{aisec2017} or direct control over the vectors~\cite{shen2016auror}.
In typical centralized training, we are concerned with attackers who control training data samples and can poison them. Gradients computed from poisoned samples are thus influenced significantly by the attacker but not necessarily fully controlled. In contrast, certain federated learning setups offer more direct control over gradient vectors being aggregated.
In federated learning, a server collaborates to train a machine learning model with several clients or worker machines. Clients compute local updates (vectors) to the model on their local datasets and send them to the server, which aggregates them as in vanilla SGD. It is easy to see that a compromised or malicious worker has direct and complete control over the vector they send. Our Byzantine attacker model thus captures both the centralized and federated learning setups.

The adversary is {\em fully adaptive}, meaning that they corrupt with complete knowledge of the function $f$ and $||\Sigma||_2$ in every training round. The $\epsilon$-corrupted sample of $X$, denoted as $Y$, is given to the robust aggregator (defender).

\begin{figure*}[t]
    \centering
    \includegraphics[scale=0.75]{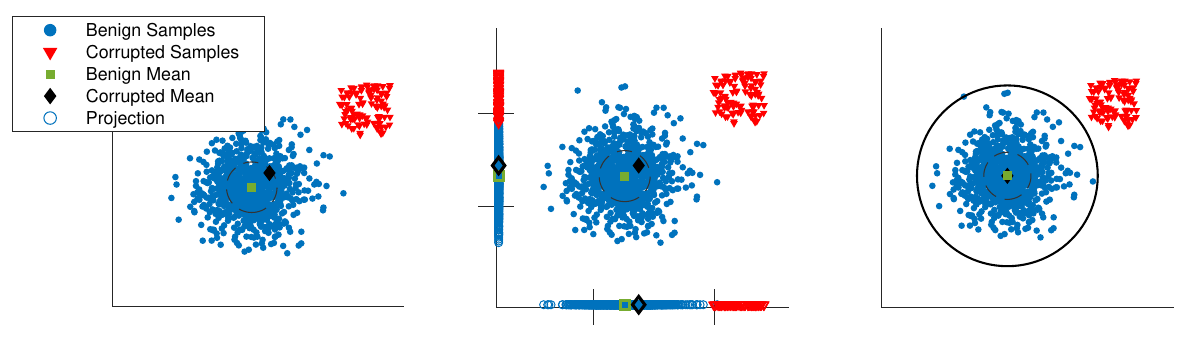}
    \caption{Left: Gaussian samples with 0.1 ($\epsilon$) fraction of arbitrarily corrupted data, highlighting the mean shift post-corruption where the dotted circle is the standard deviation ($\sigma$) boundary around the mean. Middle:  Trimmed mean by dimension, establishing dimension-wise thresholds to contain corrupted mean within an order of $d$. Right: Strong \defence~defenses, applying a single threshold based on variance to restrict corrupted mean to a constant distance.
}
\label{fig:weak-strong-defenses}
\end{figure*}

\paragraph{Untargeted model poisoning attacks.}
While there can be many attacker goals of biasing models during training, we focus solely on one type of attacks, namely {\em untargeted poisoning}~\cite{aisec2017}, to show practical value.
The attack corrupts the gradient vectors computed during training. The primary objective of untargeted poisoning is to bias the aggregated model at each step such that it exhibits a high error rate for training examples, misclassifying them to any class other than the correct one generically. Effectively, the ML model updates in each SGD step learn little useful information. The final trained model is unusable, leading to denial-of-service. 

Untargeted poisoning attacks are a subclass of poisoning attacks. There are $2$ other prominent categories: targeted~\cite{bhagoji2019analyzing} and backdoor~\cite{xie2019dba} poisoning. Targeted attacks force the trained ML model to misclassify a specific class of inputs. Backdoor attacks misclassify inputs with artificially planted trigger patterns.  We do not study these $2$ categories here, though we believe our key ideas can be adapted to them.

Prior untargeted attacks assume two types of attacker knowledge about the uncorrupted samples $X$, i.e., full-knowledge and partial knowledge. In the full knowledge setting, the adversary has direct access to $X$ and is able to arbitrarily manipulate an $\epsilon < 1/2$ fraction of inputs from $X$. In the partial-knowledge setting, the adversary does not have access to the complete $X$. In particular, for instance, it cannot exactly compute the mean of $X$ since it only knows its own uncorrupted vectors, not those of other clients, as is expected in typical federated learning setups. 
The rest of the capabilities in the partial-knowledge setup are the same as in the full-knowledge setting.
We experimentally evaluate both these settings.
Our formal analysis, though, is restricted to the full-knowledge case because it is difficult to generically say how much the subset known to the adversary statistically deviates from the full $X$ in different practical setups.

\section{\DEFENCES}
\label{sec:strong_defences}
%
% A provable defense against such Byzantine attackers can bound the bias. 
% Notable progress has been made in designing \defences~with strong bounds, as summarized in Table~\ref{tab:robust_gaurantees}. 

Given vectors $Y$ created from corrupting some $\epsilon \cdot n$ in $X$, the aggregator does not know which elements in $Y$ are corrupted. One principled way is to solve the problem is to distinguish {\em inliers} from {\em outliers} in $Y$, and lower the contribution of the latter when computing a measure of  centrality.
Nearly all robust aggregators work with this principle, either explicitly or implicitly, but differ
in how they do so. 
% This common insight is no accident; please see Section~\ref{sec:tradeoff} for why computing outliers and robust aggregates with strong bounds are closely related problems.

% To distinguish outliers from inliers, the algorithm must know an estimate of the maximum variance in the uncorrupted X, specifically given by spectral norm of $\Sigma \in \mathbb{R}^{d \times d}$, the covariance matrix of $X$. 
% %
% Therefore, the defender knows $||\Sigma||_2$ (or an conservative upper bound) to use in the defense algorithm. It can, for instance, calculate $||\Sigma||_2$ from clean samples.
\begin{algorithm}[t]
\caption{Meta-algorithm for strong \defences}
\label{alg:existing_algos}
\begin{algorithmic}[1]
\REQUIRE $\epsilon$-corrupted set $Y = \{y_1, ..., y_n\} \subseteq \mathbb{R}^d$, $n$, $\epsilon$, and $||\Sigma||_2$\\
\ENSURE $\tilde{\mu}$ robust mean 
\STATE $\xi := k \cdot ||\Sigma||_2$ \hspace{0.5cm} $\triangleright$ Choose $\sqrt{20} < k \leq 9$ ~\cite{zhu2023byzantine, diakonikolas2019robust}
\STATE $Y' = Y$
\FOR{$j=0$ {\bfseries to} $j= 2 \cdot n \cdot \epsilon - 1$} 
    \IF{$||\text{Cov}(Y')||_2 \leq \xi$}
        \RETURN $\tilde{\mu} = \frac{1}{n} \sum_{i=1}^ny'_i$
    \ELSE
        \STATE $Y'  \leftarrow $ OutlierRemovalSubroutine($Y'$, $\epsilon$, $||\Sigma||_2$) 
    \ENDIF
\ENDFOR
\RETURN $\tilde{\mu} = \frac{1}{n} \sum_{i=1}^ny'_i$
\end{algorithmic}
\end{algorithm}

\begin{figure*}[t]
\label{fig:3algos}
  \begin{minipage}[t]{.33\textwidth}
\begin{algorithm}[H]
\caption{FILTERING ~\cite{diakonikolas2017being}}
\label{alg:filtering}
\scriptsize
\begin{algorithmic}[1]
\REQUIRE $\epsilon$-corrupted set $Y = \{y_1, ..., y_n\} \subseteq \mathbb{R}^d$, $||\Sigma||_2$  \\
\ENSURE $Y' = \{y'_1, ..., y'_n\}$ updated set $Y$ removing or undermining outliers \\
\STATE $c_i := 1/n, \hspace*{0.25cm} i \in [n]$ \hspace{0.5cm} $\triangleright$ Initialize equal weights
\STATE $\mu_c := \sum_{i = 1}^n c_iy_i$  \hspace{0.95cm} $\triangleright$ Compute mean of $Y$
\STATE $v  \leftarrow$ largest eigenvector of $\text{Cov}(Y)$
\STATE $\tau_{i} := \langle y_i - \mu_{c}, v\rangle ^2, \hspace*{0.25cm} i \in [|Y|]$
\STATE $c_i := c_i(1 - \frac{\tau_{i}}{\tau_{max}}), \hspace*{0.25cm} i \in [|Y|]$ 
\STATE $cnt$ := count vectors in $Y$ with $c_i \neq 0$
\STATE $c_i := c_i / \sum_{i=1}^{cnt} c_i$ \hspace{0.1cm} $\triangleright$ Normalize the new weights  
\RETURN $\{c_1\cdot y_1, \dots, c_n\cdot y_n\}$
\vspace{1.75cm}
\end{algorithmic}
\end{algorithm}
  \end{minipage}%
  \hfill
  \begin{minipage}[t]{.33\textwidth}
\begin{algorithm}[H]
\caption{NO-REGRET ~\cite{hopkins2020robust}}
\label{alg:noregret}
\scriptsize
\begin{algorithmic}[1]
\REQUIRE $\epsilon$-corrupted set $Y = \{y_1, ..., y_n\} \subseteq \mathbb{R}^d$, $||\Sigma||_2$ \\
\ENSURE $Y' = \{y'_1, ..., y'_n\}$ updated set $Y$ removing or undermining outliers \\
\STATE $l_{i,j} = ||y_i - y_j||_2$ for all $i,j \in [n]$
\STATE $\eta := 0.5 / (\text{max}(l_{i,j}))^2$  $\triangleright$ Step size for re-weighting  
\STATE $c_i := 1/n, \hspace*{0.25cm} i \in [n]$ \hspace{0.5cm} $\triangleright$ Initialize equal weights
\STATE $\mu_c := \sum_{i = 1}^n c_iy_i$  \hspace{0.95cm} $\triangleright$ Compute mean of $Y$
\STATE $v  \leftarrow$ largest eigenvector of $\text{Cov}(Y)$
\STATE $\tau_{i} := \langle y_i - \mu_{c}, v\rangle ^2, \hspace*{0.25cm} i \in [|Y|]$
\STATE $c_i := c_i(1 - \tau_{i} \cdot \frac{\epsilon \eta}{2 ||\Sigma||_2 d}), \hspace*{0.25cm} i \in [|Y|]$ 
\STATE $\Delta_{n, \epsilon} := \{\Tilde{C} | \sum \Tilde{c}_i = 1, \Tilde{c}_i \leq \frac{1}{(1 - \epsilon)n}\}$
\STATE $C := \text{arg min}_{\Tilde{C} \in \Delta_{n, \epsilon}} D_{KL}(\Tilde{C}||C)$   
\RETURN $\{c_1 \cdot y_1, \dots, c_n \cdot y_n\}$
\vspace{0.7cm}
\end{algorithmic}
\end{algorithm}
  \end{minipage}%
  \hfill
  \begin{minipage}[t]{.33\textwidth}
\begin{algorithm}[H]
\caption[Caption for LOF]{Sum of Squares (cf.~\cite{kothari2017outlierrobust})}
\scriptsize
\begin{algorithmic}[1]
\REQUIRE $\epsilon$-corrupted set $Y = \{y_1, ..., y_n\} \subseteq \mathbb{R}^d$, $||\Sigma||_2$\\
\ENSURE $Y' = \{y'_1, ..., y'_n\}$ updated set $Y$ removing or undermining outliers \\
\STATE $W := \{w_1, \dots, w_n\}$ $\hspace*{0.05cm} \triangleright$ create variables for weights
\STATE $Y':= \{y'_1, \dots, y'_n\}$ $\hspace*{0.5cm} \triangleright$ create variables for $Y'$
\STATE $\mu' := \frac{1}{n}\sum_{i=1}^{n} y'_i$ $\hspace*{0.6cm} \triangleright$ create a variable for mean
\STATE $\Sigma' := \frac{1}{n}\sum_{i=1}^{n}(y'_i - \mu')^{T} \cdot (y'_i - \mu)$ 
\STATE $\mathbb{\tilde{E}} \leftarrow $ initialize a SoS program with $W$ and $Y'$
\STATE $\mathbb{\tilde{E}}$.add($w_i^2 = w_i$ for every $i \in [n]$) $\triangleright$ add constraints 
\STATE $\mathbb{\tilde{E}}$.add($\sum_{i=1}^{n} w_i = (1 - \epsilon)n$)
\STATE $\mathbb{\tilde{E}}$.add($w_iy'_i = w_iy_i$ for every $i \in [n]$)
\STATE $\mathbb{\tilde{E}}$.add($\frac{1}{n} \sum\langle y'_i - \mu', v\rangle^2 \leq 9 ||\Sigma||_2$) for all $v \in \mathbb{R}^d$
\STATE $\mathbb{\tilde{E}}$.solve() $ \hspace*{0.5cm} \triangleright$ solve the program to get operator $\mathbb{\tilde{E}}$
\RETURN $\{\mathbb{\tilde{E}}(y'_1), \dots, \mathbb{\tilde{E}}(y'_n)\}$
\label{alg:sos}
\end{algorithmic}
\end{algorithm}
  \end{minipage}
  \caption{Pseudocode for OutlierRemovalSubroutine in Alg.~\ref{alg:existing_algos} as instatiated by $3$ different strong \defences.}
  \label{fig:3algos}
\end{figure*}

\paragraph{\Defences~with weak bounds on bias.}
Weak robust aggregators, such as Trimmed Mean and Median, compute measures of centrality {\em per dimension}.
When we compute central tendencies for each dimension separately, an adversary can always create a bias of $\Theta(\sqrt{\epsilon})$ in a dimension, proportional to the benign variance in that dimension, by corrupting $\epsilon$ fraction of $X$. This is true for $X$ sampled from any distribution with bounded variance (see Sec. 3 in \cite{diakonikolas2019robust} and Fact 1.2 in~\cite{diakonikolas2019recent}). 
So, it is difficult to bound the total bias better than $O(\sqrt{\epsilon d})$ by analyzing individual dimensions~\cite{diakonikolas2019robust, lugosi2021robust}. 
%\prateek{It follows from observing the following: Take a spherical Gaussian $G$, and consider its corrupted version $G'$ as $(1 - \epsilon) G + \epsilon \infty$. $G$ will have it mean far from $G'$ by at least $\sqrt{\epsilon}$.}
% Consider $X_i$, the version of uncorruputed $X$ sorted by values in the $i^{th}$ dimension. A single vector corruption in $X$ can shift the Median by one location in the sorted $X_i$, for all $1 \leq i \leq d$. This shifts the median by $1$ and all $\epsilon \cdot n$ corrupted vectors in $Y$ together introduce bias $\epsilon$ in each dimension, giving a total bias proportional to $\sqrt{\epsilon  d}$ in $L_2$ norm.

Let us take the example of Trimmed mean for illustration. Fig.~\ref{fig:weak-strong-defenses} (middle) shows a corrupted $2$-d sample that induces high bias against it. The attacker can create $\epsilon \cdot n$ outliers positioned at the extreme of benign samples along each dimension, displacing the trimmed mean in each dimension by $\sqrt{\epsilon}$, giving total bias of $O(\sqrt{\epsilon d})$. Recent untargeted poisoning attacks have also experimentally shown that such high biases are practical on weak aggregators~\cite{fang2020local}.

\paragraph{\Defences~with strong bounds on bias.} Strong robust aggregators differ from weak ones in that they look at the magnitude of the vectors in $Y$ along {\em all possible vector directions}, not just their individual components per dimension.
They are able to provide substantially better upper bounds of $\tilde{O}(\sqrt{\epsilon})$ on bias, independent of $d$, because of a key fact: If two sets share at least $(1 - \epsilon)$ fraction of elements and exhibit equal maximum variance, then their respective means cannot deviate by more than $O(\sqrt{\epsilon})$ from each other (see the explanation below Fact $1.2$ in~\cite{diakonikolas2019recent}). 
The spectral norm $||\Sigma||_2$ of the covariance matrix captures the maximum variance across all vector directions (it is the largest eigenvalue). Using this principle, strong \defences~check if the variance of the corrupted inputs $Y$ exceeds the expected $||\Sigma||_2$ known from clean samples. 

The first polynomial-time strong aggregators have been proposed only in the last decade~\cite{diakonikolas2017being,diakonikolas2019recent}. Existing poisoning attacks fail to create much bias against them~\cite{zhu2023byzantine}.

Alg.~\ref{alg:existing_algos} is a common backbone for all polynomial-time strong aggregators. The basic idea is to compute a weighted average of vectors in $Y$. Lower weights are given to outliers. The way the weights are computed varies across different strong aggregators and is abstracted as the OutlierRemovalSubroutine (Line 7). The entire procedure is iterative. In each iteration,
the vectors in $Y$ are re-weighted until the maximum variance of the newly weighted samples falls below 
a threshold $\xi$ (Line 4). The threshold depends on $||\Sigma||_2$ (Line 1). This process ensures that all outliers in $Y$ are assigned minimal weight, rendering the mean of the re-weighted points a reliable estimate of the true mean.

The value $\xi$ depends on the natural variance ($||\Sigma||_2$) of the benign samples, trading off how many inliers vs. outliers the procedure is willing to exclude. If the defense chooses $\xi$ much smaller than the natural variance of benign samples, it will filter out benign samples. For example, a constant fraction (about $0.27$) of benign vectors are statistically expected to be between $||\Sigma||_2^{\frac{1}{2}}$ and $2\cdot ||\Sigma||_2^{\frac{1}{2}}$ for normal (Gaussian) distributions---therefore, defenses which filter out points in this range are likely to have bias at least proportional to $\sqrt{\epsilon}$, for $\epsilon=0.27$, even without any corruption. Therefore, practical defenses would use $\threshold > k \cdot ||\Sigma||_2$, for some constant $k$. Prior works suggest using $k=\sqrt{20}$ ~\cite{zhu2023byzantine} and $k=9$~\cite{diakonikolas2017being}. 
The provable bound on bias then is $\tilde{O}(\sqrt{\epsilon}) \cdot ||\Sigma||^{\frac{1}{2}}_2$~\cite{diakonikolas2017being}.

Fig.~\ref{fig:weak-strong-defenses} (right) plots
the behavior of Alg.~\ref{alg:existing_algos} on the same example that creates a large bias against the weak trimmed mean. The samples outside the radius (dark solid circle) defined by $\xi$ have near zero weight. The resulting estimate is much closer to the true mean than the trimmed mean.

The weight assignment procedure varies across all the different strong aggregators proposed in prior work: Filtering~\cite{diakonikolas2017being}, No-Regret~\cite{hopkins2020robust}, and SoS~\cite{kothari2017outlierrobust}.
Fig.~\ref{fig:3algos} shows the different strategies to implement the OutlierRemovalSubroutine the $3$ aggregators use---all follow the same meta-procedure of Alg.~\ref{alg:existing_algos}. Readers need not understand their details, they are included to make the paper self-contained. We experimentally evaluate them in Section~\ref{sec:eval}, when possible.

\paragraph{Computational bottleneck.} The maximum variance direction of $Y$ (Line 3 in Alg.~\ref{alg:filtering}) is computationally expensive to compute. We will later show in Section~\ref{sec:tradeoff} that it is fundamental to the problem of robust aggregation, not just to the above solutions. 
This computational bottleneck forces practical realizations of Alg.~\ref{alg:existing_algos} to operate on smaller subsets of dimensions at a time, as explained in Section~\ref{sec:our_attack}.

% \paragraph{Our Goal.} We present the first attack called \attack~against practical realizations of strong \defence~algorithms. We show that in the worst case, \attack~induces bias proportional to $d$ in high dimensional settings.
% \prateek{Put DnC, No-reg, Filtering, SUM-of-Sq as instances of the meta-algo. If you like we can keep SOS in Appendix as it's really not practical.}

%-------------------------------------------------------------------------------
\section{The \attack~Attack}
\label{sec:our_attack}
%-------------------------------------------------------------------------------

% The variance of the samples drawn from original unknown distribution and the variance of the corrupted samples are fundamental to the robust mean estimation problem. We explain how existing \defence~algorithms rely on these variances explicitly or implicitly to do robust aggregation. Finally, we introduce \attack~that can be considered as a ``by-product'' of this connection between these variances and the design of \defence~algorithms.
We propose a principled untargeted poisoning attack to defeat existing polynomial-time strong aggregators.

\subsection{Warm up: Attack in Low Dimensions} 
% Existing poisoning attacks fail against strong \defences, even when working with a small number of dimensions~\cite{zhu2023byzantine}. 
% We will show in Section~\ref{sec:eval} that this is because the existing attacks create little bias in the presence of these defences. 
% \attack~creates maximum possible bias that matches the aggregators' upper bounds on bias. 
 
The invariant in strong \defences~(see Alg.~\ref{alg:existing_algos}) is that $Y'$, a re-weighted (scaled) version of $Y$, has maximum variance below a threshold $\threshold$. The attack preserves this invariant: Given uncorrupted vectors $X$, it replaces $n\cdot \epsilon$ of them to construct $Y'$ such that  $\text{Cov}(Y')$ is below the threshold. Thus, all the corrupted vectors will be used in the final weighted mean statistic computed by Alg.~\ref{alg:existing_algos}. The attack procedure is given in Alg.~\ref{alg:our_attack_algo}. It computes the mean vector $\hat{\mu}$ of $X$ (line 1) and creates corrupted vectors that hit the maximum allowable deviation from the mean towards zero, without getting filtered out. 
Specifically, line $5$ computes the exact magnitude of the corruptions to create along the direction of $-\hat{\mu}$ such that after corrupting $n \cdot \epsilon$ vectors (as in line $6$), the resulting $Y$ has maximum variance below $\threshold$.

To achieve that, recall that the threshold $\threshold$  for which the guarantees of the robust aggregator hold, is always higher than the benign sample variance along any direction. Ideally, the threshold should be $\geq 9 \cdot \sigma_{max}^2$ as given by theoretical analyses in prior works~\cite{diakonikolas2017being}. Prior implementations for FILTERING and NO-REGRET~\cite{zhu2023byzantine} use $\sqrt{20}$ times a constant estimate of $\sigma_{max}^2$ as the threshold. We make a key observation that there exists a  direction $\hat{s}$, along which corrupted vectors can shift the aggregate far from the original mean to the maximum value, while ensuring that the variance in all other directions remains below the threshold.
Specifically, using the magnitude $z$ given in Eqn.~(\ref{eq:z}) below increases the bias as much as possible without exceeding the threshold.  
We use the offset $-z\hat{s}$ for all corrupted vectors.

\begin{gather}
    z = \sqrt{\frac{\threshold - \sigma_{max}^2}{\epsilon^2 + \epsilon \cdot (1 - \epsilon)^2}}  - \mu^{\hat{s}}\label{eq:z}\\
    \mu^{\hat{s}} = \frac{1}{n}\sum_{i=0}^{n}\langle x_i, \hat{s} \rangle
\end{gather}

Our analysis shows that the above strategy will not increase variance in any other direction beyond the threshold (see Lemma~\ref{lem:l2} in Section~\ref{sec:analysis}). This is necessary to not trigger the outlier removal procedure.  The magnitude of corruption is thus maximized, subject to the threshold invariant remaining valid. Our analysis presented in Section~\ref{sec:analysis} proves that using the value of $z$ given in Eq~\ref{eq:z}, \attack~hits the theoretical upper bound on the bias. We confirm it experimentally for practical setups as well in Section~\ref{sec:eval}. Fig. \ref{fig:single_chunk_brahm} illustrates the corruption strategy described above, depicting the placement of all corrupted vectors just inside the variance threshold boundary.

\begin{algorithm}[t]
\caption{Pseudo-code describing \attack}
\label{alg:our_attack_algo}
\begin{algorithmic}[1] % Set the line numbering to start at 1
% \REQUIRE Input data
% \ENSURE Output data
\REQUIRE benign set $X = \{x_1, ..., x_n\}$, fraction of corruption $\epsilon$, variance threshold $\threshold$
\ENSURE $\epsilon$-corrupted set $Y = \{y_1, ..., y_n\}$ 
\STATE $\hat{\mu} := \frac{1}{n} \sum_{i=1}^n x_i$  \hspace{0.3cm} $\triangleright$ compute the mean of $X$
\STATE $\hat{\Sigma} := \frac{1}{n} \sum_{i=1}^n (x_i - \hat{\mu})^{T} \cdot (x_i - \hat{\mu})$
% \STATE $\hat{\sigma}_{max}^2 = ||\hat{\Sigma}||_2$ \hspace{0.5cm} $\triangleright$ Compute the spectral norm of the sample covariance matrix 
\STATE $\hat{s} := \frac{\hat{\mu}}{||\hat{\mu}||_2}$
\STATE $\sigma_{max}^2 := \frac{\threshold}{\sqrt{20}}$ \hspace{0.65cm} $\triangleright$ estimate $\sigma_{max}^2$ using $\threshold$ 
\STATE $z = \sqrt{\frac{\threshold - \sigma_{max}^2}{\epsilon^2 + \epsilon(1-\epsilon)^2}}$ \hspace{0.1cm}  $\triangleright$ magnitude of corruption along $\hat{s}$ 
\STATE $y_i := \hat{\mu} - \hat{s} \cdot z$ for all $i \in [1, n \cdot \epsilon]$
\STATE $y_i := x_i$  for all $i \in [n \cdot \epsilon + 1, n]$
\RETURN $Y$
\end{algorithmic}
\end{algorithm}

The direction of the perturbation, i.e., along $-\hat{\mu}$, is chosen to reduce the final $\hat{\mu}$ closer to zero, motivated specifically in untargeted poisoning attacks. If the updates in SGD are close to zero, they carry lesser signals about the training samples, effectively disallowing the model from learning.
Other forms of poisoning can use other directions based on their goals, but we focus only on untargeted poisoning.

Note that our attack does {\em not} have bias dependent on $d$ in low dimensions, as expected from prior theoretical analysis of strong \defences. Our attack does close the gap between the best-known attacks and the upper bound, showing the tightness of prior theoretical analyses.

% \ash{Make SOS algorithm better match with the remaining two.}
\begin{figure}[t]
    \centering
    \includegraphics[scale=0.75]{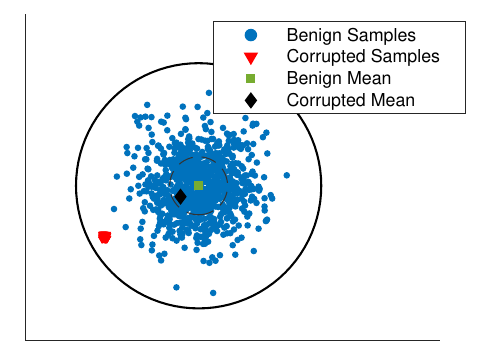}
    \caption{\attack~: Corruptions crafted within variance threshold, yet biasing the mean to an order of $\Omega(\sqrt\epsilon)$}
\label{fig:single_chunk_brahm}
\end{figure}

\subsection{Vulnerability in High Dimensions}

The analysis of strong \defences~makes idealized computational assumptions. The expensive steps in these defenses are computing the maximum variance and a corresponding maximum variance direction of inputs $Y$. Recall that the maximum variance corresponds to the spectral norm of the sample covariance matrix $||\text{Cov}(Y)||_2$, i.e., the largest eigenvalue. The maximum variance direction is the direction of one of the eigenvectors with largest eigenvalue.
% The covariance matrix has $d \times d$ elements and these quantities have to be computed potentially multiple times, once on each re-weighting. 
The exact computation of $||\text{Cov}(Y)||_2$ is $O(d^3)$ \cite{cuppen1980divide} due to iterative matrix multiplication steps. 
% Fig.~\ref{fig:eig_computation} shows that 
As a rough estimate, it takes about $150$ seconds on a single CPU core of a modern desktop to compute the largest eigenvector sequentially and its eigenvalue for $d$$=$$10^4$. Averaging in SGD is over gradient vectors which can be much larger, for instance, small CNNs need $d=3$ $\times 10^{5}$ and modern large language models can have $d=10^{10}$ to $10^{12}$. Therefore, given finite memory per computational device, the computation over $d \times d$ matrices has to be split into smaller matrices when $d$ is large.

% , potentially multiple times, before they terminate. Therefore, in ML tasks, they cannot be implemented as-is. For instance, a CNN with just 2 convolutional layers and 2 classification layers that we use for CIFAR10 dataset has $300,000$ parameters. ML models with high performance on this dataset such as VGG can have several million parameters. As we show in 

% \input{figs/eigenvalue_decomposition}

\paragraph{Practical realization of strong \defences.}
To overcome the computational bottleneck, recent works split the dimensions into disjoint chunks that are small enough to be computed on individually~\cite{zhu2023byzantine,shejwalkar2021manipulating}.
We illustrate how practical realization of strong \defences~do this. Let the original dimension of samples be $d$ and FILTERING be the robust aggregator chosen. Then, we take 
the first $m$ dimensions of all vectors as the first chunk,
the next $m$ dimensions of all vectors as the second chunk, and so on. Each vector is partitioned into chunks of size say $m=1000$. This will result in each vector having $c = \lfloor \frac{d}{m} \rfloor$ chunks. The FILTERING aggregator will be run on the first chunk, i.e., first $m$ dimensions of all samples to find the robust aggregate, and similarly on each subsequent chunk. We get $c$ robust aggregates, one for each chunk, which are then concatenated to output the final output vector of dimension $d$. 
%There are two key advantages to such workaround---each chunk is of small enough size to compute the spectral norm fast and the aggregator can be run on many chunks in parallel. 
This generic chunking strategy is detailed in Algorithm~\ref{alg:brahm}.

\begin{algorithm}[t]
\caption{Practical realizations of strong \defences~in high dimensions} 
\label{alg:brahm}
\begin{algorithmic}[1]
\REQUIRE $\epsilon$-corrupted set $Y = \{y_1, \dots, y_n\} \subseteq \mathbb{R}^d $, partition size $m$, $||\Sigma||_2$, a \text{strong aggregator} $\mathcal{A}$ following Alg.~\ref{alg:existing_algos}
\ENSURE Robust mean of $Y$, $\mu \in \mathbb{R}^d$ 
\STATE $\mu = [0,\ldots,0]$ \hspace{0.5cm} $\triangleright$ initialize with zero vector in $\mathbb{R}^d$
\STATE $i = 0$ 
\WHILE{$i < d$}
    \STATE $\Tilde{Y} \leftarrow$ set of sampled $Y$ using indices in $[i, i + m-1]$
    \STATE $\mu[i, i + m-1] := \mathcal{A}(\tilde{Y}, ||\Sigma||_2)$   $\triangleright$ get chunk aggregate 
    \STATE $i := i + m$
\ENDWHILE
\RETURN $\mu$
\end{algorithmic}
\end{algorithm}

% \prateek{It may be best to give a table of constants used.
% % Also, what are the typical relationships between c and d?}
\paragraph{Vulnerability.} 
Chunking introduces a new source of vulnerability.  Running robust aggregation on different chunks independently creates an opportunity for the adversary to bias the result on each chunk separately. Specifically, the adversary can bias the aggregate of each chunk by $\Omega(\sqrt{\epsilon})$ using Algorithm~\ref{alg:our_attack_algo}. Since the final aggregated vector is the concatenation of all of the aggregates of each chunk, the biases from the aggregates from all chunks add up. Therefore, the bias in the final aggregated model is  $\Omega(\sqrt{\epsilon c})$. 

Since the number of chunks $c$ is proportional to $d$, the total bias introduced by \attack~is $\Omega(\sqrt{\epsilon d})\cdot\sqrt{\threshold}$.
Formally, Theorem~\ref{thm:main_theorem} in Section~\ref{sec:analysis} proves our claim. 
% \begin{theorem}
% \label{thm:vulnerability}
%    Let $d$ be the dimension of the aggregated model, \rme~be the robust mean aggregator, $\epsilon$ be fraction of byzantine clients, and $c$ is the number of chunks. The aggregated model bias of \defence~algorithm using \rme~as outlined in Algorithm~\ref{alg:existing_algos}, $\bias_{\defence}$, is upper bounded by $O(\sqrt{\epsilon c}\cdot ||\Sigma||^{\frac{1}{2}}_2)$.  
% \end{theorem}

% \begin{proof}
%     Consider $(p_1, p_2, \ldots, p_c)$ denote $c$ chunks where each $p_i$ is a set of $\lfloor \frac{d}{c} \rfloor$ indices from $[1,2,\ldots,d]$. Let $Y[p_i] = \left\{ y_j[p_i] | j\in[1,2,\ldots\numclients] \right\}$ denote the set of vectors formed from the local updates $y_j$ with the indices in $p_i$.
%     \begin{align*}
%         \bias_{\rme}[p_i] &= \left\lVert \rme(Y[p_i],\epsilon, \threshold) - \frac{1}{n}\sum_{i=1}^n x_i[p_i] \right\rVert_2\\
%         \bias_{\rme}[p_i] &\leq k\cdot\sqrt{\epsilon}\cdot\sqrt{||\Sigma||_2}\text{, }k\in R\\
%         \bias_{\defence} &= \sqrt{\sum_{i=1}^{i=c}\left(\bias_{\rme}[p_i]\right)^2}\\
%         \bias_{\defence} &\leq \sqrt{\sum_{i=1}^{i=c}\left(k^2\epsilon||\Sigma||_2\right)}\\
%         \bias_{\defence} &\leq k\cdot\sqrt{\epsilon c}\cdot||\Sigma||_2^{\frac{1}{2}}\\
%     \end{align*}
% \end{proof}

\paragraph{Multi-chunk \attack.} Multi-chunk strong \defences~have to decide the threshold to use per chunk. Prior work has used a constant function to bound the variance of each chunk by a fixed constant $\sigma_{max}^2$, which is the maximum value of the variance across all chunks (line 5)~\cite{zhu2023byzantine}. Our attack therefore uses $\sigma_{max}^2$ as a replacement for $||\Sigma||_2$ in Alg.~\ref{alg:our_attack_algo}.  One can extend the defense to consider an adaptive threshold, a different constant multiple of the variance $\sigma_i^2$ for chunk $i$. The \attack~attack would simply use Alg.~\ref{alg:our_attack_algo} for each chunk separately using the respective threshold. The analysis remains the same because the threshold used in each chunk must be proportional to the natural variance $\sigma_i^2$ in that chunk, in order to avoid filtering out inliers. So, the threshold used must be at least a constant times $\sigma_i^2$.

\paragraph{Full vs. Partial Knowledge Setups.}
Algorithm~\ref{alg:our_attack_algo} extends to the partial-knowledge setting as well. The only change is to estimate the $\hat{\mu}$ (see Line $1$) using a subset of benign vectors that the adversary has access to (e.g. its own). 
We have evaluated both setups in Section~\ref{sec:eval}. 

In summary, we have shown \attack, an attack strategy to create a bias of $\Omega(\sqrt{\epsilon d})$ in high dimensions. This defeats the main advantage offered by  strong \defences~over weak ones. Practical incarnations of strong \defences~that give $\tilde{O}(\sqrt{\epsilon})$ bias, therefore, remain elusive.

\subsection{Theoretical Optimality Analysis}
\label{sec:analysis}

% Recall that \attack~introduces adequate bias in every chunk. To do that, it is sufficient to satisfy 2 criteria. First, the inputs for that chunk should be corrupted in such a way that the \defence~algorithms do not filter them out. Second, the corrupted inputs should increase the bias for that chunk up to $O(\sqrt{\epsilon})$. Specifically, \attack~designs corruptions that will have {\em maximum achievable magnitude while altering the spectral norm of the sample covariance matrix to match the threshold $\threshold$}.

% \paragraph{Direction of Corruptions.} 
% In the training round $t$, if $\hat{s}:w_{t+1} - w_{t}$ is the direction of the change in weights then we choose $- \hat{s}$ to place the corruptions of magnitude $z$. 

% \paragraph{Magnitude of Corruptions.}

% For the robust aggregators' guarantees to hold, recall that the threshold $\threshold$ is always higher than the maximum variance $||\Sigma||_2$. To not trigger the outlier removal procedures, we design the corrupted updates as $-z\hat{s}$ where $z$ is given by Equation~\ref{eq:z}.

% \subsection{Theoretical Optimality}
% \label{sec:theorems}

% For our theoretical analysis, we choose the full-knowledge setting where the adversary has knowledge of the benign client's updates $X$. 

Our Theorem~\ref{thm:main_theorem} asserts that \attack~against practical realizations of strong \defences~will result in a bias that will be at least $\Omega(\sqrt{\epsilon d})$. Prior known bias for such aggregators is $\tilde{O}(\sqrt{\epsilon d})$, so our attack is nearly optimal.
Our analysis presented here is for the full-knowledge setting. It remains the same for the partial-knowledge setting, modulo the estimation error in computing $\hat{\mu}$ (Line $1$ in Alg.~\ref{alg:our_attack_algo}), which varies by datasets.

\begin{theorem} \label{thm:main_theorem}
    \attack~, as outlined in Algorithm~\ref{alg:our_attack_algo}, will result in a bias of  $\Omega(\sqrt{\epsilon d})\cdot ||\Sigma||_2^{\frac{1}{2}}$ against Alg.~\ref{alg:brahm} in the worst case.
\end{theorem}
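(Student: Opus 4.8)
The plan is to decouple the analysis across chunks and then reassemble. Since Algorithm~\ref{alg:brahm} runs the strong aggregator $\mathcal{A}$ independently on each of the $c = \lfloor d/m\rfloor$ disjoint coordinate blocks and then concatenates the $c$ chunk-level estimates, and since distinct chunks occupy orthogonal coordinate subspaces, the final error vector is the direct sum of the per-chunk error vectors. Consequently the squared total bias decomposes exactly as $\mathrm{bias}^2 = \sum_{j=1}^{c}\mathrm{bias}_j^2$, where $\mathrm{bias}_j$ is the $L_2$ deviation of $\mathcal{A}$'s output on chunk $j$ from the true chunk mean. This reduces the theorem to (i) a per-chunk lower bound of $\Omega(\sqrt{\epsilon})\cdot\|\Sigma_j\|_2^{1/2}$ and (ii) a summation argument over the $c=\Theta(d)$ chunks.

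For the per-chunk bound, the first step is to certify that \attack~evades the filtering loop. I would invoke Lemma~\ref{lem:l2} to show that placing all $n\epsilon$ corrupted vectors at $\hat{\mu}-z\hat{s}$ with $z$ as in Eqn.~(\ref{eq:z}) keeps the maximum variance of the corrupted chunk below the threshold in \emph{every} direction, with equality attained only along $\hat{s}$. Because the termination test $\|\mathrm{Cov}(Y')\|_2\le\threshold$ on Line~4 of Algorithm~\ref{alg:existing_algos} then passes immediately, $\mathcal{A}$ applies no outlier removal and returns the plain mean of the corrupted chunk. A direct second-moment computation shows that this mean is displaced from the true chunk mean by $\epsilon z$ along $\hat{s}$. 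Finally, for small $\epsilon$ the denominator of Eqn.~(\ref{eq:z}) satisfies $\epsilon^2+\epsilon(1-\epsilon)^2 = \Theta(\epsilon)$, so $z=\Theta(\sqrt{(\threshold-\sigma_{\max}^2)/\epsilon})$; and since the defense must choose $\threshold = k\cdot\sigma_{\max}^2$ for a constant $k>1$ (else it filters inliers), we have $\threshold-\sigma_{\max}^2 = \Theta(\|\Sigma_j\|_2)$. Hence $\mathrm{bias}_j = \epsilon z = \Theta(\sqrt{\epsilon(\threshold-\sigma_{\max}^2)}) = \Omega(\sqrt{\epsilon})\cdot\|\Sigma_j\|_2^{1/2}$.

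To finish, I would exhibit a worst-case benign input on which every chunk carries a constant fraction of the global variance, i.e.\ $\|\Sigma_j\|_2 = \Theta(\|\Sigma\|_2)$ for all $j$ (an isotropic sample, whose variance is spread uniformly across coordinates, is the canonical witness). Summing the per-chunk bound then yields $\mathrm{bias}^2 = \sum_{j=1}^{c}\Omega(\epsilon)\|\Sigma_j\|_2 = \Omega(\epsilon)\cdot c\cdot\|\Sigma\|_2 = \Omega(\epsilon)\cdot(d/m)\cdot\|\Sigma\|_2$, and since the chunk size $m$ is a constant this is $\Omega(\epsilon d)\cdot\|\Sigma\|_2$, giving $\mathrm{bias} = \Omega(\sqrt{\epsilon d})\cdot\|\Sigma\|_2^{1/2}$ as claimed. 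I expect the main obstacle to be the evasion certificate of Lemma~\ref{lem:l2}: the delicate point is that concentrating the entire corruption budget along the single direction $\hat{s}$ inflates the variance not only along $\hat{s}$ but, through the induced mean shift $\epsilon z\hat{s}$, also in its cross-terms with the benign spread in other directions, so one must verify that $\hat{s}$ remains the maximizing direction and that calibrating the $\hat{s}$-variance to exactly $\threshold$ via Eqn.~(\ref{eq:z}) caps the variance along \emph{all} directions simultaneously.
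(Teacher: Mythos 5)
Your proposal is correct and follows essentially the same route as the paper's own proof: your evasion certificate is exactly the paper's Lemma~\ref{lem:l2}, your per-chunk displacement computation (plain mean returned, shifted by $\epsilon z$ along $\hat{s}$, giving $\Omega(\sqrt{\epsilon})\cdot\sqrt{\threshold}$ bias) is the content of Lemma~\ref{lem:l1}, and your orthogonal-coordinate summation $\mathrm{bias}^2=\sum_j \mathrm{bias}_j^2$ over $c=\Theta(d)$ chunks is Lemma~\ref{lem:l3}. The only cosmetic difference is that you make the worst case explicit via an isotropic witness with $\|\Sigma_j\|_2=\Theta(\|\Sigma\|_2)$ on every chunk, whereas the paper converts its $\sqrt{\threshold}$-scaled bound into one scaled by $\|\Sigma\|_2^{\frac{1}{2}}$ by directly invoking the condition $\sqrt{\threshold}\geq\|\Sigma\|_2^{\frac{1}{2}}$.
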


\begin{proof}
    We prove three Lemmas~\ref{lem:l1}, ~\ref{lem:l2}, and~\ref{lem:l3} that are stated below to prove this theorem.  We provide complete proofs for the Lemmas in Appendix~\ref{sec:full_proofs}.
    
    In the Lemma~\ref{lem:l1}, we first prove that along any direction $\hat{s}$, we can corrupt an $\epsilon$ fraction of samples to increase the variance along that direction to the maximum possible value defined by the threshold $\threshold$. We show it suffices to use a magnitude of perturbation close to that given in Eq.~\ref{eq:z} previously, to create $\Omega(\sqrt{\epsilon})\cdot||\Sigma||_2^{\frac{1}{2}}$ bias with high probability.

    In the second Lemma~\ref{lem:l2}, we claim that if the corrupted vectors have magnitude of $z = \sqrt{\frac{\threshold - \sigma_{max}^2}{\epsilon^2 + \epsilon(1-\epsilon)^2}} - \mu^{\hat{s}}$ the maximum variance will be $\threshold$ and the maximum variance direction will be $\hat{s}$. Thus, none of the points will be filtered out (or reweighted) by strong~\defences~(see Alg.~\ref{alg:existing_algos}). %\prateek{Can you say why there is an error term in the lemma statement?}

    It follows from Lemma $1.1$ and $1.2$ that \attack~designs corruptions that do not get filtered out by strong \defences~and create $\Omega(\sqrt{\epsilon})\cdot||\Sigma||_2^{\frac{1}{2}}$ bias per chunk.  

    Finally, Lemma~\ref{lem:l3} proves that if the \defence~runs  independently over $c$ different chunks of original vectors, as done in Alg.~\ref{alg:brahm}, with threshold $\threshold$, then the bias is $\Omega(\sqrt{\epsilon c})\cdot \sqrt{\threshold}$. When $\sqrt{\threshold}\geq ||\Sigma||_2^{\frac{1}{2}}$ and $c = \lfloor\frac{d}{m}\rfloor$ where $m$ is a constant representing the size of each chunk such that $m \ll d$, then the bias is about $\Omega(\sqrt{\epsilon d})\cdot ||\Sigma||_2^{\frac{1}{2}}$.
\end{proof}

\begin{restatable}[]{lemma}{lemla}
\label{lem:l1}
    Let $\hat{s}$ be any direction, $\mu^{\hat{s}}$ be the component of benign mean along $\hat{s}$, and $\sigma_{\hat{s}}$ be the variance along $\hat{s}$. 
    Then, replacing an $\epsilon\cdot n$ vectors in $X$ with $-z (\hat{s})$, 
    where $z = \sqrt{\frac{\threshold - \sigma_{\hat{s}}^2}{\epsilon^2 + \epsilon \cdot (1 - \epsilon)^2}} - \mu^{\hat{s}}$ $\pm \frac{\delta \cdot \sigma_{\hat{s}}}{ \sqrt{n}}$, results in bias of $\Omega(\sqrt{\epsilon})||\Sigma||^{\frac{1}{2}}$ with probability at least $1 - \exp(-\delta^2)$, for all $ \delta > 1$.
\end{restatable}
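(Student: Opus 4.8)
The plan is to reduce the claim to a one‑dimensional statement about the projections of the samples onto $\hat{s}$, since the $L_2$ bias is bounded below by its component along any fixed direction. First I would set $p_i := \langle x_i, \hat{s}\rangle$, so the benign projections have empirical mean $\mu^{\hat{s}}$ and variance $\sigma_{\hat{s}}^2$, while the $\epsilon n$ corrupted vectors all project onto the single value $q := \langle -z\hat{s}, \hat{s}\rangle = -z$. Computing the mean of the corrupted set along $\hat{s}$ and subtracting $\mu^{\hat{s}}$ gives the exact identity
\begin{equation*}
\langle \tilde{\mu} - \hat{\mu}, \hat{s}\rangle = \epsilon\bigl(\mu^{\hat{s}} - q\bigr) - \epsilon\bigl(\mu^{\hat{s}} - \bar{p}_{\mathrm{rem}}\bigr),
\end{equation*}
where $\bar{p}_{\mathrm{rem}}$ is the empirical mean of the $\epsilon n$ removed projections. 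The first term is the intended signal; the second is a sampling fluctuation arising because the removed block need not average exactly to $\mu^{\hat{s}}$.

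Next I would lower‑bound the signal term. Writing $q = -z = \mu^{\hat{s}} - A$ with $A := \sqrt{(\threshold - \sigma_{\hat{s}}^2)/(\epsilon^2 + \epsilon(1-\epsilon)^2)}$ makes the signal equal to $\epsilon A$. The key elementary fact is that the denominator satisfies $\epsilon^2 + \epsilon(1-\epsilon)^2 = \epsilon(1 - \epsilon + \epsilon^2) = \Theta(\epsilon)$ uniformly for $\epsilon \in (0, \tfrac12)$, since $1 - \epsilon + \epsilon^2 \in [\tfrac34, 1]$ on that interval. Combined with $\threshold \ge k\,||\Sigma||_2 = \Theta(||\Sigma||_2)$ and $\sigma_{\hat{s}}^2 \le ||\Sigma||_2$, this yields $A = \Theta(\sqrt{\threshold/\epsilon})$ and hence $\epsilon A = \Theta(\sqrt{\epsilon\threshold}) = \Omega(\sqrt{\epsilon})\,||\Sigma||_2^{1/2}$. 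This same $A$ is exactly the displacement that drives the variance along $\hat{s}$ up to the threshold (the invariant that Lemma~\ref{lem:l2} certifies); here I only need the induced shift, not the variance bookkeeping.

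The main obstacle, and the reason the statement is probabilistic, is controlling the fluctuation term $\epsilon(\mu^{\hat{s}} - \bar{p}_{\mathrm{rem}})$ together with the $\pm\,\delta\sigma_{\hat{s}}/\sqrt{n}$ slack in the admissible choice of $z$ (which models the attacker's estimation error for $\mu^{\hat{s}}$ and $\sigma_{\hat{s}}$ from samples). Treating the benign projections as independent draws with variance $\sigma_{\hat{s}}^2$, the removed‑block mean $\bar{p}_{\mathrm{rem}}$ concentrates around $\mu^{\hat{s}}$; a sub‑Gaussian (Hoeffding‑type) tail bound gives $|\mu^{\hat{s}} - \bar{p}_{\mathrm{rem}}| \le \delta\sigma_{\hat{s}}/\sqrt{\epsilon n}$, and the corresponding slack in $z$ stays within $\delta\sigma_{\hat{s}}/\sqrt{n}$, each failing with probability at most $\exp(-\delta^2)$ for the appropriate tail constant and all $\delta > 1$. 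Both contribute at most $O(\delta\sigma_{\hat{s}}\sqrt{\epsilon}/\sqrt{n})$ to the shift.

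Finally, since the signal is $\Theta(\sqrt{\epsilon}\,\sqrt{\threshold})$ while the error terms scale as $\sqrt{\epsilon}\,\sigma_{\hat{s}}/\sqrt{n} \ll \sqrt{\epsilon}\,\sqrt{\threshold}$ for the large $n$ and threshold constant $k$ used in practice, the fluctuation is dominated by a constant factor and cannot cancel or flip the sign of the signal. Thus the net shift remains $\Omega(\sqrt{\epsilon})\,||\Sigma||_2^{1/2}$ with probability at least $1 - \exp(-\delta^2)$, and bounding the full $L_2$ bias below by $|\langle\tilde{\mu} - \hat{\mu}, \hat{s}\rangle|$ closes the argument. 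The only delicate part is the concentration bookkeeping: choosing the tail inequality so that the failure probability matches the claimed $1 - \exp(-\delta^2)$ exactly, and verifying the error term never dominates the signal.
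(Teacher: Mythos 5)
Your proposal is correct and takes essentially the same route as the paper's proof: both lower-bound the $L_2$ bias by its component along $\hat{s}$, decompose the resulting mean shift into the signal $\epsilon(z+\mu^{\hat{s}})=\epsilon A$ plus a sampling fluctuation of the benign block controlled by a Chernoff-type bound at failure probability $\exp(-\delta^2)$, and conclude via $\epsilon^{2}+\epsilon(1-\epsilon)^{2}=\epsilon(1-\epsilon+\epsilon^{2})=\Theta(\epsilon)$ together with $\threshold-\sigma_{\hat{s}}^{2}=\Omega(||\Sigma||_2)$. The only differences are cosmetic: the paper additionally re-derives the prescribed $z$ by solving the variance-threshold equation inside this proof (a step you rightly observe can be deferred to Lemma~\ref{lem:l2}), and your displayed identity carries a harmless sign-convention flip, since its left side should read $\langle\hat{\mu}-\tilde{\mu},\hat{s}\rangle$.
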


\begin{restatable}[]{lemma}{lemlb}
\label{lem:l2}
    Let $\hat{s}$ be in the direction of the benign aggregate mean $\mu = \frac{1}{n} \sum_{i=1}^n x_i$ and $Y$ be $\epsilon$-corrupted set of vectors. Let the corrupted vectors be along $-\hat{s}$ with magnitude $z = \sqrt{\frac{\threshold - \sigma_{max}^2}{\epsilon^2 + \epsilon(1-\epsilon)^2}} - \mu^{\hat{s}}$. Then, $||\Sigma_Y||_2 \leq \threshold$, where $||\Sigma_Y||_2$ is the spectral norm of the covariance matrix of $Y$.
\end{restatable}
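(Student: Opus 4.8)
The plan is to reduce the spectral-norm bound to a one-dimensional comparison by decomposing every vector (and every candidate test direction) along $\hat{s}$ and its orthogonal complement. Writing an arbitrary unit direction as $v = \alpha\hat{s} + w$ with $w \perp \hat{s}$ and $\alpha^2 + \|w\|_2^2 = 1$, the variance of $Y$ along $v$ expands as
\[
\sigma_Y^2(v) = \alpha^2\, V_s + 2\alpha\,\Gamma(w) + V_\perp(w),
\]
where $V_s$ is the variance of $Y$ along $\hat{s}$, $V_\perp(w)$ is the variance of the orthogonal components along $w$, and $\Gamma(w)$ is the $\hat{s}$–$w$ cross-covariance. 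Since $\|\Sigma_Y\|_2 = \max_v \sigma_Y^2(v)$, it suffices to control these three pieces and show their combination never exceeds $\xi$.

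First I would handle the two diagonal terms. For $V_s$, I invoke Lemma~\ref{lem:l1}: the magnitude $z$ was chosen precisely so that placing the $\epsilon n$ corrupted copies along $-\hat{s}$ drives the variance along $\hat{s}$ up to the threshold, giving $V_s = \xi$. For $V_\perp$, I use the structural fact that every corrupted vector lies exactly on the line spanned by $\hat{s}$ and therefore has zero orthogonal component; the corrupted points thus contribute no spread in any direction $w \perp \hat{s}$, so $V_\perp(w)$ is at most the orthogonal variance of the retained benign points, which is bounded by $\sigma_{max}^2 \le \xi$.

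The crucial and most delicate step is controlling the cross term $\Gamma(w)$, since a nonzero coupling between $\hat{s}$ and an orthogonal direction could in principle push the top eigenvalue above $\max\{V_s, V_\perp(w)\} = \xi$. Here I would exploit the defining property of $\hat{s}$: because $\hat{s}$ points along the benign mean $\mu$, the benign sample has \emph{zero} mean component orthogonal to $\hat{s}$, and the corrupted points also project to the origin of the orthogonal subspace. Consequently the orthogonal mean of $Y$ vanishes and $\Gamma(w)$ collapses to the benign covariance $\hat{s}^\top \Sigma\, w$. For the (near-)isotropic benign distributions under which these aggregators are analyzed, $\hat{s}$ is a principal axis of $\Sigma$, so $\Gamma(w) = 0$ and $\Sigma_Y$ is block-diagonal across the split of $\hat{s}$ and $\hat{s}^\perp$. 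Then $\|\Sigma_Y\|_2 = \max\{V_s,\ \max_w V_\perp(w)\} = \max\{\xi,\ \sigma_{max}^2\} = \xi$, which is the claim.

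I expect the main obstacle to be exactly this cross-term control: in full generality $\hat{s}$ need not be an eigenvector of the benign covariance, and any residual finite-sample correlation $\Gamma(w)$ would inflate the spectral norm above $\xi$ by a lower-order amount. I would close this gap the same way Lemma~\ref{lem:l1} handles its fluctuations, treating the residual cross-covariance as an $O(\sigma_{max}^2/\sqrt{n})$ deviation that is dominated by the slack $\xi - \sigma_{max}^2 = (k-1)\sigma_{max}^2$ built into the threshold, so that $\|\Sigma_Y\|_2 \le \xi$ holds with high probability.
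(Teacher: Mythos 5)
Your decomposition $v=\alpha\hat{s}+w$ is workable, but your disposal of the cross term $\Gamma(w)$ is a genuine gap. As you note, $\Gamma(w)$ collapses to the benign cross-covariance $\hat{s}^{\top}\Sigma\,w$; the problem is that this is a \emph{population-level} quantity, not a finite-sample fluctuation. Nothing in the lemma's hypotheses makes $\hat{s}$ (the benign \emph{mean} direction) an eigenvector of the benign covariance $\Sigma$, and for anisotropic benign data (gradient distributions are far from isotropic) one can have, e.g., $\Sigma=\sigma_{max}^2\,uu^{\top}$ with $u=(\hat{s}+w)/\sqrt{2}$, giving $\hat{s}^{\top}\Sigma\,w=\sigma_{max}^2/2$ --- order $\sigma_{max}^2$, independent of $n$. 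So your fallback, ``treat the residual as $O(\sigma_{max}^2/\sqrt{n})$ and absorb it in the slack $\threshold-\sigma_{max}^2$,'' fails twice: the term does not shrink with $n$, and there is no slack left to absorb it, because you already spent all of it asserting $V_s=\threshold$. Indeed, if $V_s=\threshold$ exactly and $\Gamma(w)\neq 0$, tilting $v$ slightly off $\hat{s}$ gives $v^{\top}\Sigma_Y v>\threshold$, so the conclusion would actually be false under your accounting. The two situations are mutually exclusive: $V_s=\threshold$ requires the benign variance along $\hat{s}$ to equal $\sigma_{max}^2$, i.e.\ $\hat{s}$ maximizes the benign Rayleigh quotient, which forces $\hat{s}$ to be a top eigenvector of $\Sigma$ and kills $\Gamma$; conversely, when $\hat{s}$ is not an eigenvector, $\Gamma(w)$ can be $\Theta(\sigma_{max}^2)$, but then $V_s$ sits strictly below $\threshold$ by the compensating deficit $\sigma_{max}^2-\hat{s}^{\top}\Sigma\hat{s}$. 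Your proof never reconciles this bookkeeping.

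The repair --- and this is what the paper's proof does --- is to never split the benign quadratic form at all. Since every corrupted point lies on the line spanned by $\hat{s}$, the corruption contributes to $\Sigma_Y$ only a rank-one term along $\hat{s}\hat{s}^{\top}$, whose size is $K(\mu^{\hat{s}}+z)^2=\threshold-\sigma_{max}^2$ by the choice of $z$ (with $K=\epsilon^2+\epsilon(1-\epsilon)^2$). Hence for \emph{every} unit direction $\hat{t}$ one has
\begin{align*}
\bar{\sigma}_{\hat{t}}^2 \;=\; K\bigl((\mu^{\hat{s}}+z)\langle\hat{s},\hat{t}\rangle\bigr)^2+\sigma_{\hat{t}}^2 \;\leq\; K(\mu^{\hat{s}}+z)^2+\sigma_{max}^2 \;=\; \bar{\sigma}_{\hat{s}}^2 \;=\; \threshold,
\end{align*}
where the whole benign part $\sigma_{\hat{t}}^2$ (which, in your basis, bundles the diagonal \emph{and} cross contributions together) is bounded by $\sigma_{max}^2$ simply by the definition of maximum benign variance, and the rank-one part is bounded by its value at $\hat{t}=\hat{s}$. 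This comparison needs no case analysis on whether $\hat{s}$ is a principal axis and no high-probability qualifier: it is deterministic given the bound on benign variance. If you want to keep your $\hat{s}$-versus-$\hat{s}^{\perp}$ decomposition, the fix is the same in spirit: group $\Gamma(w)$ with the benign portions of $V_s$ and $V_\perp$ and bound $v^{\top}\Sigma v\le\sigma_{max}^2$ as a single quantity, rather than bounding the three pieces separately after having already pushed $V_s$ to the threshold.
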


\begin{restatable}[]{lemma}{lemlc}
\label{lem:l3}
    If strong \defences~are used to robustly aggregate over $c$ chunks, as in Algorithm~\ref{alg:brahm} with a single threshold $\threshold$, then \attack~performed over each chunk, as in Algorithm~\ref{alg:our_attack_algo}, will result in a bias of $\Omega(\sqrt{\epsilon c})\cdot \sqrt{\threshold}$.
\end{restatable}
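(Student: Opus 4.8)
The plan is to reduce Lemma~\ref{lem:l3} to the single-chunk guarantees already established in Lemmas~\ref{lem:l1} and~\ref{lem:l2}, and then combine the per-chunk biases across the $c$ disjoint chunks. First I would observe that because Algorithm~\ref{alg:brahm} partitions the $d$ coordinates into $c$ disjoint chunks of size $m$ and runs the aggregator $\mathcal{A}$ independently on each, the attacker can treat every chunk as an independent instance of the low-dimensional problem. Crucially, the same $\epsilon n$ corrupted vectors suffice for all chunks simultaneously: for each corrupted vector, its coordinates lying in chunk $j$ are set according to Algorithm~\ref{alg:our_attack_algo} applied to chunk $j$ (using that chunk's mean direction $\hat{s}_j$ and magnitude $z_j$). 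Since the chunks occupy disjoint coordinate blocks, these per-chunk assignments never conflict, so a single consistent corruption of $\epsilon n$ whole vectors realizes the attack on every chunk at once.

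Next I would invoke Lemma~\ref{lem:l2} on each chunk $j$ to guarantee that the corrupted chunk's covariance has spectral norm at most $\threshold$, so the aggregator $\mathcal{A}$ does not filter (or down-weight) any corrupted point and simply returns the weighted mean; and invoke Lemma~\ref{lem:l1} to obtain a per-chunk bias $b_j = \Omega(\sqrt{\epsilon})\cdot\sqrt{\threshold}$ between $\mathcal{A}$'s chunk output and the chunk's true mean. Here I use that the threshold is a fixed constant multiple of the chunk variance, so the $\sqrt{\threshold}$ and $||\Sigma||_2^{1/2}$ scalings agree up to constants absorbed into $\Omega(\cdot)$.

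The final step is the geometric combination. Because the output of Algorithm~\ref{alg:brahm} is the concatenation of the $c$ chunk aggregates, and the true mean is likewise the concatenation of the $c$ chunk means, the total bias is the $L_2$ norm over disjoint coordinate blocks, i.e. $\mathrm{bias} = \sqrt{\sum_{j=1}^c b_j^2}$ by the Pythagorean identity. Substituting $b_j^2 = \Omega(\epsilon)\cdot\threshold$ yields $\mathrm{bias} = \sqrt{c\cdot\Omega(\epsilon\threshold)} = \Omega(\sqrt{\epsilon c})\cdot\sqrt{\threshold}$, as claimed.

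The main obstacle I anticipate is handling the probabilistic nature of Lemma~\ref{lem:l1}: its bias guarantee holds only with probability $1-\exp(-\delta^2)$ per chunk. To ensure all $c$ chunks simultaneously attain their $\Omega(\sqrt{\epsilon})$ bias, I would apply a union bound over the $c$ chunks, which forces $\delta = \Omega(\sqrt{\log c})$. I would then check that the resulting correction term $\pm \delta\sigma_{\hat{s}}/\sqrt{n}$ in the magnitude $z_j$ is lower-order (it vanishes as $n\to\infty$ and only perturbs the constant inside $\Omega(\cdot)$), so the quadrature sum is unaffected asymptotically. A secondary point to verify is that the per-chunk variances are comparable enough that a single threshold $\threshold$ never causes some chunk to filter its corruptions---this is exactly the content of Lemma~\ref{lem:l2} together with the assumption that $\threshold$ is at least a constant times each chunk's natural variance.
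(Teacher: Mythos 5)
Your proposal is correct and takes essentially the same route as the paper's proof: establish a per-chunk bias of $\Omega(\sqrt{\epsilon})\cdot\sqrt{\threshold}$ via Lemmas~1.1 and~1.2, then combine the $c$ disjoint coordinate blocks in quadrature to obtain $\Omega(\sqrt{\epsilon c})\cdot\sqrt{\threshold}$. Your write-up is in fact more careful than the paper's two-line argument, which silently assumes the per-chunk corruptions are simultaneously realizable and all succeed (your union bound with $\delta = \Omega(\sqrt{\log c})$ addresses this), and which even miswrites the Pythagorean combination as $\sqrt{\sum_i \mathrm{bias}_i}$ where your $\sqrt{\sum_i \mathrm{bias}_i^2}$ is what the claimed bound actually requires.
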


\begin{mdframed}
\textbf{Elements of Construction (EC)}
 \begin{enumerate}
  \item $\mathcal{D} \leftarrow $ $d$ dimensional spherical Gaussian distribution $(O, I)$
  \item $Y = \{y_1, \ldots, y_{n - n \epsilon}\} \overset{\text{i.i.d.}}{\sim} \mathcal{D}$:
    \begin{enumerate}
        \item $\frac{1}{n} \sum_{i=1}^{n - n\epsilon} y_i = \hat{\mu}$ 
        \item Cov($Y$) $ = \hat{I}$
        where $\hat{\mu} \approx O$ and $\hat{I} \approx I$
    \end{enumerate}
    
  \item $B = \{b_1, b_2\} \leftarrow$ randomly chosen $2$ orthogonal unit vectors
  \item $L_i = \{\sqrt{d} + l_{i1}, \ldots, \sqrt{d} + l_{i\frac{n \epsilon}{2}}\} \leftarrow$ distance of corrupted vectors from the origin along $i$th vector in $B$ such that :
    \begin{enumerate}
        \item $|l_{ij}| \ll \sqrt{d} \hspace*{0.5cm} \forall j \in [1, \frac{n \epsilon}{2}]$
        
            for instance, $|l_{ij}|\in [0,1]$
        \item $\sum_{j=1}^{\frac{n \epsilon}{2}} l_{ij} = 0$
        \item $\sum_{j=1}^{\frac{n \epsilon}{2}} l_{1j}^2 = \sum_{j=1}^{\frac{n \epsilon}{\kappa}} l_{2j}^2 $ 
    \end{enumerate}
  \item $c_1, c_2 \leftarrow$ set of corrupted vectors along each vector in $B$ such that for each $c_i$: \\
    $c_i = \{(\sqrt{d} + l_{i1})\cdot b_i, \ldots, (\sqrt{d} + l_{i\frac{n \epsilon}{2}})\cdot b_i\}$
  \item $C = \{y_{n - n\epsilon + 1}, \ldots, y_n\} \leftarrow c_1 \cup c_2$  
  \item $Y' = Y \cup C$
\end{enumerate} 
\end{mdframed}

\subsection{Is Computational Bottleneck Fundamental?}
\label{sec:tradeoff}

\attack~works against several strong \defences, all of which face the computational bottleneck of computing the maximum variance direction of given vectors. It corresponds to computing the largest eigenvector, a problem of broad interest in ML that does not scale with dimensions. How fundamental is the connection between designing a strong \defence~and computing the maximum variance direction of the given vectors? Specifically, {\em is it necessary for strong \defences~to be as computationally expensive as finding the maximum variance direction?} 

\paragraph{Bias vs. computational complexity.}

We will construct a set of vectors, call it $Y'$, for which computing its maximum variance direction approximately will have complexity not much worse than that of computing the aggregate with a strong \defence~$f$. Theorem~\ref{thm:connection} given later formalizes the claim. To the best of our knowledge, the worst-case time complexity of computing the maximum variance direction approximately\footnote{We seek approximations where the approximation error decreases linearly with the number of vectors $n$. With regards to such error, approximate methods like power iteration operate with a time complexity of $\tilde{O}(n^2d)$ \cite{lecture_notes_shayan}.} for the set $Y'$ is $min(\tilde{O}(n^2d),O(d^3))$~\cite{lecture_notes_shayan ,cuppen1980divide, arbenz2012lecture} and hence, devising more efficient $f$ would imply faster algorithms for the former on $Y'$. We refer to such sets $Y'$ as a {\em non-trivial instances} for computing the maximum variance direction.  To create $Y'$ we rely on an initial set of vectors sampled from a spherical Gaussian \footnote{These are distributions with equal variance along every direction.} distribution as it suffices to show the existence of such sets for which the reduction is valid.

% We will show that there exist infinitely many sets of vectors for which solving \defence~is at least as expensive as computing the maximum variance direction exactly. Specifically, the Theorem~\ref{thm:connection} provides a reduction.
% %
% {\em There exist a set of vectors $V$ for which a \defence~algorithm with bias at most $O(\sqrt{\epsilon\cdot d})$, the the maximum variance direction for $V$ can be computed in time $O(d^3$).} 
% We use a set of vectors sampled from any spherical Gaussian \footnote{distributions with equal variance along every direction} distribution as it suffices to show the existence of such sets for which the reduction is valid.

% \begin{minipage}{\linewidth}
% \begin{enumerate}
%   \item First item
%   \item Second item
%   \item Third item
% \end{enumerate}
% \end{minipage}

\paragraph{Constructing $Y'$.} We follow the aforementioned elements of construction to construct the set $Y'$. Consider a $d$ dimensional spherical Gaussian distribution with mean at Origin and $I$ as covariance ($1$). Now, sample a set of vectors $Y$, i.i.d from this distribution as shown in ($2$). We construct the set $Y'$ from $Y$  by adding $n\epsilon$ {\em corrupted vectors} to $Y$ following the steps from ($3$) to ($7$). First, we choose a set $B$, as per ($3$), which represents $2$ mutually perpendicular directions. Now we distribute $n\epsilon$ vectors equally along these $2$ directions. The distances of these vectors from the origin in each of the directions in $B$ are given in ($4$). Informally, all vectors are slightly different from each other but are close to $\sqrt{d}$ distance from the origin. Accordingly, we get $\frac{n\epsilon}{2}$ corrupted vectors in each of these directions as given in ($5$).  Finally, we add these vectors to the set $Y$ to create $Y'$ ($6$, $7$).

Note that while constructing $Y'$ we choose the corrupted vectors to be at distances of about $\sqrt{d}$ to make them indistinguishable from the benign vectors of $Y$, since samples from a $d$ dimensional Gaussian are at a distance of $\sqrt{d}$ with very high probability~\cite{diakonikolas2023algorithmic} (also see Figure $2.1$ in~\cite{diakonikolas2023algorithmic} for illustration). Further, the different added $L_i$s make sure that the corrupted vectors are also different from each other with respect to their magnitude ($L_2$ norm). Hence, the corrupted and benign vector magnitudes follow the same distribution.

We first establish in Lemma~\ref{lem:1.1} that the maximum variance of vectors in $Y'$ lies approximately along the difference of vectors in $B$. We provide the proof in Appendix~\ref{sec:full_proofs}. 
%The term $\delta$ characterizes the approximation error. It is a constant in our subsequent analysis.

% but explain it intuitively here. Note that each corrupted vector in $Y'$ is trying to bias the mean along its own direction, hence increasing the variance in that direction (since originally the distribution is spherical). Further, the variance will increase similarly along each of the $\kappa$ directions in $B$ since they are orthogonal to each other. Therefore, the maximum variance direction cannot align closer to any specific direction in $B$ than others. Stated precisely,
% the maximum variance direction of $Y'$ is unique and aligns with the average of $B$. The proof for Lemma~\ref{lem:1.1} algebraically derives this fact. 

\refstepcounter{theorem} 

\begin{restatable}[]{lemma}{lemfa}\label{lem:1.1}
    Given an $\epsilon$-corrupted set of vectors $Y'$, as constructed using the elements of construction (EC), the direction of maximum variance of $Y'$ lies with a small angle of $cos^{-1}\left(\sqrt{1 - O(\frac{\delta^2}{n}})\right)$ from the difference of vectors in set $B$ with probability at least $1 - 2\exp (\frac{-\delta^2}{2})$, for all $\delta > 2$.
\end{restatable}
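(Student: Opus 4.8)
The plan is to compute the covariance matrix of $Y'$ in closed form, extract an \emph{idealized} matrix $M_0$ obtained by pretending the benign sub-sample has exact mean $O$ and exact covariance $I$, show that the top eigenvector of $M_0$ is \emph{exactly} the normalized difference $b_1 - b_2$, and then treat the sampling fluctuations of the $n(1-\epsilon)$ Gaussian draws as a perturbation whose effect on the leading eigenvector I control with a Davis--Kahan type bound.

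First I would decompose $\Sigma_{Y'} = \frac1n\sum_{y\in Y'}yy^\top - \bar y\bar y^\top$ into benign and corrupted contributions. The benign part is $(1-\epsilon)(\hat I + \hat\mu\hat\mu^\top)$. Because the corrupted vectors lie purely along $b_1$ or $b_2$ at radius $\sqrt d + l_{ij}$, condition (4b) $\sum_j l_{ij}=0$ kills the terms linear in $l$, so the corrupted second moment is $(\tfrac{\epsilon d}{2} + \tfrac1n\sum_j l_{ij}^2)(b_1b_1^\top + b_2b_2^\top)$, and condition (4c) makes the two coefficients equal. Using (4b) again, the overall mean is $\bar y = (1-\epsilon)\hat\mu + \tfrac{\epsilon\sqrt d}{2}(b_1+b_2)$. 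Substituting the idealized values $\hat\mu = O$, $\hat I = I$ and dropping the harmless $O(\epsilon)$ diagonal term $\tfrac1n\sum_j l_{ij}^2$ gives
\[
M_0 = (1-\epsilon)I + \tfrac{\epsilon d}{2}(b_1b_1^\top + b_2b_2^\top) - \tfrac{\epsilon^2 d}{4}(b_1+b_2)(b_1+b_2)^\top .
\]
Restricted to $\mathrm{span}(b_1,b_2)$, $M_0$ is a scalar multiple of $I_2$ minus an off-diagonal coupling, so its eigenvectors are $\tfrac{1}{\sqrt2}(b_1\pm b_2)$; a direct $2\times2$ computation yields eigenvalue $(1-\epsilon)+\tfrac{\epsilon d}{2}$ along $b_1-b_2$ and $(1-\epsilon)+\tfrac{\epsilon d}{2}-\tfrac{\epsilon^2 d}{2}$ along $b_1+b_2$, while the orthogonal complement carries eigenvalue $(1-\epsilon)$. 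Hence the top eigenvector of $M_0$ is exactly $\tfrac{1}{\sqrt2}(b_1-b_2)$, with a spectral gap of $\Theta(\epsilon^2 d)$ to the next eigenvalue inside the subspace and $\Theta(\epsilon d)$ to its complement. Intuitively, the corruptions form two clusters centered at $\sqrt d\,b_1$ and $\sqrt d\,b_2$, and the direction separating them is $b_1-b_2$.

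Finally I would bound the rotation caused by $E = \Sigma_{Y'} - M_0$, whose pieces are $(1-\epsilon)(\hat I - I)$, the discarded $O(\epsilon)$ diagonal term, $\epsilon(1-\epsilon)\hat\mu\hat\mu^\top$, and the cross term $-(1-\epsilon)[\hat\mu w^\top + w\hat\mu^\top]$ with $w = \tfrac{\epsilon\sqrt d}{2}(b_1+b_2)$. By Davis--Kahan, $\sin\theta \le \|E\|/\mathrm{gap}$, but the contribution that actually rotates $v_1 = \tfrac{1}{\sqrt2}(b_1-b_2)$ is governed by scalar projections such as $\hat\mu^\top v_1$, each a one-dimensional Gaussian of variance $\tfrac{1}{n(1-\epsilon)}$. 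A standard Gaussian tail bound gives $|\hat\mu^\top v_1|\le \delta/\sqrt{n(1-\epsilon)}$ with probability at least $1 - 2\exp(-\delta^2/2)$ for $\delta>2$; combined with the $\Theta(\epsilon^2 d)$ gap and treating $\epsilon$ as a constant, this produces $\sin^2\theta = O(\delta^2/n)$, i.e. $\cos\theta \ge \sqrt{1-O(\delta^2/n)}$, matching the claimed angular bound and failure probability exactly.

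The main obstacle I anticipate is the perturbation step rather than the idealized eigen-computation. I must verify that the gap genuinely dominates and, more delicately, that the sampling noise coupling $b_1-b_2$ to the orthogonal directions is only $O(1/\sqrt n)$ after projection. The subtlety is that $\hat\mu$ lies almost entirely outside $\mathrm{span}(b_1,b_2)$, so the cross term, although large in operator norm ($\Theta(\epsilon d/\sqrt n)$), rotates $v_1$ only through the small scalar $\hat\mu^\top v_1$. Getting this bookkeeping right---so that only genuinely $\Theta(1/\sqrt n)$ quantities survive and the spurious $\sqrt d$ and operator-norm factors cancel against the gap---is the crux of the argument.
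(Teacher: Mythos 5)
Your proposal is correct and follows essentially the same route as the paper's proof: both compute $\Sigma_{Y'}$ in closed form, isolate an idealized part whose top direction is exactly $\tfrac{1}{\sqrt{2}}(b_1-b_2)$ with spectral gap $\Theta(\epsilon^2 d)$ (the paper's $\alpha,\beta$ matrix is your $M_0$ up to the harmless $(1-\epsilon)I$ and $\gamma/n$ shifts), and then control the Gaussian sampling terms as a perturbation whose relevant projections are $O(\delta/\sqrt{n})$ with the same $1-2\exp(-\delta^2/2)$ tail. The only difference is the finishing step: you invoke a Davis--Kahan-type bound, sharpened by the observation that $w=\tfrac{\epsilon\sqrt{d}}{2}(b_1+b_2)\perp v_1$ so the cross term rotates $v_1$ only through the scalar $\hat\mu^\top v_1$, whereas the paper proves the same estimate by a direct variational comparison of the quadratic form (any direction at angle $\theta$ from the maximizer loses at least $2\sin^2\theta\,|\beta|$ of ``exact'' variance, which the $O(\delta^2 d/n)$ error terms cannot repay) --- two packagings of the same gap-versus-noise argument, with yours being, if anything, slightly sharper on the cross term.
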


We see that the approximation error in the angle drops with $n$. The failure probability drops exponentially in $\delta$, which is a tunable constant in the analysis.

% \paragraph{Remark.} In step $2$ of our construction (EC), the mean and covariance will have small errors due to sampling that we have neglected during the construction. We prove our theorem by ignoring the sampling error, thus obtaining the exact maximum variance direction. However, we also show that~\sarthak{Show this in the extended version} in the presence of these errors we recover the maximum variance direction approximately with a small error of $O(\frac{1}{d})$.

\paragraph{Reduction.} Lemma~\ref{lem:1.1} provides a way to compute the maximum variance direction of $Y'$ if we know the difference of vectors in $B$. So, how does one find this difference given $Y'$? We provide a reduction in Alg.~\ref{alg:reduction_algo} to show that the direction of maximum variance of any set $Y'$, constructed as above, can be computed approximately using the outputs of robust aggregator $f(Y')$ in quasilinear extra time $\tilde{\Theta}(|Y'|)$.
Theorem~\ref{thm:connection} presented later formalizes the precise claim.

Alg.~\ref{alg:reduction_algo} works as follows. It computes two vectors $\mu'$, the average of $Y'$ (line $1$) and $\hat{\mu}$, the robust aggregate $f(Y')$ (line $2$). Then, it iterates over all the vectors in $Y'$ and checks for the projection of $\mu'-\hat{\mu}$ along each of them (lines $4$-$6$). The $n\epsilon$ vectors with the largest projections are identified as the set of corrupted vectors $C'$ (line $8$). Next, it iterates over $C'$ to find a pair of mutually perpendicular vectors. Finally, it takes the unit vectors along each of the mutually perpendicular vectors and returns the difference of these unit vectors as output (lines $9$-$16$). 

\begin{algorithm}[t]
\caption{Reduction Algorithm } 
\label{alg:reduction_algo}
\begin{algorithmic}[1]
\REQUIRE $Y' =\{y_1, \dots, y_n\} \subseteq \mathbb{R}^d$, strongly-bounded \defence~algorithm $f$ 
\ENSURE $v^*$, direction of maximum variance
\STATE $\mu' := \frac{1}{n}\sum_{i=1}^{n} y_i$ \hspace{0.95cm} $\triangleright$ compute average of the set 
\STATE $\tilde{\mu} := f(Y')$ \hspace{1cm} $\triangleright$ compute robust mean of the set 
\STATE $S := \{\}$ $\hspace*{3.5cm} \triangleright$ create an empty set 
\FOR{$k=1$ {\bfseries to} $k=n$} 
    \STATE $s_i := \langle \mu' - \tilde{\mu}, \frac{y_i}{||y_i||_2} \rangle$
    \STATE $S \leftarrow s_i$ $\hspace{4.3cm} \triangleright$ add $s_i$ to $S$
\ENDFOR
\STATE $C' \leftarrow$ set of $n \epsilon$ vectors with highest value in $S$
\STATE $c_1 := C'[1]$  \hspace{2.0cm} $\triangleright$ take the first vector of $C'$ 
\FOR{$k=2$ {\bfseries to} $k=n\epsilon$} 
    \STATE $c_2 := C'[k]$
    \IF{$\langle c_1, c_2\rangle = 0$}
        \RETURN $\frac{c_1}{||c_1||_2} - \frac{c_2}{||c_2||_2}$
    \ENDIF
\ENDFOR
\RETURN $\frac{c_1}{||c_1||_2} - \frac{c_2}{||c_2||_2}$
\end{algorithmic}
\end{algorithm}

Lemma~\ref{lem:1.2} below states that Alg.~\ref{alg:reduction_algo} indeed finds the difference of vectors in set $B$ with very high probability. 

\begin{restatable}[]{lemma}{lemfb}\label{lem:1.2}
    Let $f(Y')=\tilde{\mu}$ and $\mu'=\frac{1}{n}\sum_{i=0}^{n}{Y'[i]}$, then Alg.~\ref{alg:reduction_algo} returns the difference of vectors in set $B$ with probability at least $1 - n\cdot \exp( \frac{-n\delta^2}{2})$, for all $\delta > 1$.
\end{restatable}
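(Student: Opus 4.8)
The plan is to track the two quantities the reduction computes---the plain average $\mu'$ and the robust aggregate $\tilde\mu=f(Y')$---and show that their difference points essentially along $b_1+b_2$, so that the scores $s_i=\langle \mu'-\tilde\mu,\,y_i/\lVert y_i\rVert_2\rangle$ cleanly separate the corrupted vectors from the benign ones. First I would compute $\mu'$. Since the benign block $Y$ is drawn i.i.d.\ from the spherical Gaussian centered at the origin, its empirical average $\bar y$ concentrates near $O$, while the $n\epsilon$ corrupted vectors, split evenly along the orthogonal unit directions $b_1,b_2$ at radius $\approx\sqrt d$ with offsets $l_{ij}$ summing to zero (EC~4(b)), contribute exactly $\tfrac{\epsilon\sqrt d}{2}(b_1+b_2)$. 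Hence $\mu'=(1-\epsilon)\bar y+\tfrac{\epsilon\sqrt d}{2}(b_1+b_2)$. For $\tilde\mu$ I would invoke the strong-aggregator guarantee: the benign covariance here is $\approx I$, so $\lVert\Sigma\rVert_2\approx1$ and $\lVert\tilde\mu-\bar y\rVert_2=O(\sqrt\epsilon)$. Combining, $\mu'-\tilde\mu=\tfrac{\epsilon\sqrt d}{2}(b_1+b_2)+O(\sqrt{d/n})+O(\sqrt\epsilon)$, which for $\epsilon d\gg1$ is aligned with $b_1+b_2$ up to lower-order terms.

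Next I would establish the score separation. For a corrupted vector with unit direction $b_1$ (resp.\ $b_2$), the leading term of $s_i$ is $\tfrac{\epsilon\sqrt d}{2}\langle b_1+b_2,b_1\rangle=\tfrac{\epsilon\sqrt d}{2}$, and the same holds for the $b_2$-vectors. For a benign vector, $y_i/\lVert y_i\rVert_2$ is a uniformly random unit vector essentially independent of $b_1,b_2$, so $\langle b_1+b_2,\,y_i/\lVert y_i\rVert_2\rangle$ is $O(1/\sqrt d)$ in magnitude and the benign score is only $O(\epsilon)$. Thus every corrupted score exceeds every benign score by a factor $\Theta(\sqrt d)$, so the $n\epsilon$ largest scores selected in Alg.~\ref{alg:reduction_algo} (line~8) are exactly the corrupted set, i.e.\ $C'=C$. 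The probability bound is where the work concentrates: I would control the benign contributions $\langle(1-\epsilon)\bar y,\,y_i/\lVert y_i\rVert_2\rangle$ uniformly over all $n$ vectors using that each coordinate of $\bar y$ is $\mathcal N(0,\tfrac{1}{n(1-\epsilon)})$, giving a per-vector Gaussian tail $\exp(-n\delta^2/2)$, and then union-bound over the $n$ vectors to obtain the claimed $1-n\exp(-n\delta^2/2)$.

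Finally, once $C'=C$, the perpendicularity test (lines~9--16) is purely algebraic: any two corrupted vectors sharing a direction have strictly positive inner product, whereas a $b_1$-vector and a $b_2$-vector are orthogonal because $b_1\perp b_2$. Hence the loop necessarily pairs $c_1$ along one of $b_1,b_2$ with $c_2$ along the other and returns $c_1/\lVert c_1\rVert_2-c_2/\lVert c_2\rVert_2=\pm(b_1-b_2)$, the difference of the vectors in $B$ as claimed.

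The main obstacle I anticipate is the clean separation of scores together with its failure probability: I must simultaneously argue that $\tilde\mu$ is pulled close enough to the origin by $f$ that $\mu'-\tilde\mu$ is well aligned with $b_1+b_2$, and that no benign vector's random direction aligns so strongly with $b_1+b_2$ that its score overtakes a corrupted one. Handling these two high-dimensional concentration statements jointly---the sample-average fluctuation over the $n$ Gaussian draws and the spherical concentration of the benign unit vectors---and packaging them into the single stated bound is the delicate part; everything after $C'=C$ is deterministic.
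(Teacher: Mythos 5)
Your proposal is correct and follows essentially the same route as the paper's proof: decompose $\mu'$ into $(1-\epsilon)\hat\mu + \tfrac{\epsilon\sqrt d}{2}(b_1+b_2)$, invoke the $O(\sqrt\epsilon)$ strong-aggregator guarantee to control $\tilde\mu$, show the projection scores of corrupted vectors ($\approx \tfrac{\epsilon\sqrt d}{2}$) dominate those of benign vectors (whose excess would require a Gaussian coordinate of size $\approx\tfrac{\sqrt d}{2}$, an event exponentially unlikely in $d$), union-bound the $\hat\mu$-fluctuation tails over the $n$ vectors to get $1-n\exp(-n\delta^2/2)$, and finish with the deterministic orthogonality argument yielding $\pm(b_1-b_2)$. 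The paper packages the "benign score overtakes corrupted score" event via a per-coordinate Chernoff bound that is exponentially small in $d$ and then absorbed as negligible against the $n\exp(-n\delta^2/2)$ term, exactly as your two-part concentration plan anticipates.
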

    
\begin{proofsketch}
    We defer the full proof to Appendix~\ref{sec:full_proofs} and provide a sketch here. We show that the projection of $\mu'-\tilde{\mu}$ is much larger along the corrupted vectors than along others in set $Y'$ with very high probability. Hence, it is sufficient to iterate over vectors in $Y'$ and get the top $n \cdot \epsilon$ vectors in the order of $\mu' - \tilde{\mu}$'s projections along them to find the corrupted ones. Thus, in Line $8$, $C'$ represents the set of corrupted vectors. Since equal numbers of vectors in $C'$ align along each vector in set $B$, if we choose any vector from $C'$ we can find a perpendicular corrupted to it in a single iteration. The difference of these corrupted vectors normalized by their norm is equal to the difference of vectors in set $B$. It implies Alg.~\ref{alg:reduction_algo} returns the difference of vectors in set $B$.  
    % which implies $f$ returns the average of $B$.
    \qed
\end{proofsketch}

\addtocounter{theorem}{-1}

\noindent Using Lemmas~\ref{lem:1.1} and~\ref{lem:1.2}, we state and prove the desired Theorem~\ref{thm:connection} below. 
We use the $\tilde{\Theta}$ version of the asymptotic complexity in the proof, which ignores logarithmic factors.

\begin{theorem}\label{thm:connection}
% Let $f$ be any deterministic robust aggregator with an optimal bias of $\tau \cdot \sqrt{\epsilon}\cdot||\Sigma_{Y}||_2^{\frac{1}{2}}$, for some small non-negative constant $\tau$. Then, for the non-trivial instance $Y'$ constructed from EC the computational complexity of computing $f(Y')$ is at least as high as the complexity of approximately computing the direction of the maximum variance of set $Y'$ with high probability.
Let $f$ be a strong robust aggregator with running time $T(f)$ and $Y'$ be a set of vectors constructed as in EC. Then, the direction of maximum variance of $Y'$ can be computed with a small approximation error, in time $T(f) + \tilde{\Theta}(|Y'|)$, with very high probability.
\end{theorem}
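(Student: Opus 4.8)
The plan is to derive Theorem~\ref{thm:connection} as a direct composition of Lemmas~\ref{lem:1.1} and~\ref{lem:1.2} together with a line-by-line running-time accounting of Algorithm~\ref{alg:reduction_algo}. The crucial structural observation is that Algorithm~\ref{alg:reduction_algo} invokes the robust aggregator $f$ \emph{exactly once} (Line $2$), and every remaining operation is elementary: computing a plain mean, $n$ normalized inner products, a top-$n\epsilon$ selection, and a search for an orthogonal pair. So the strategy splits cleanly into two parts: (i) a correctness argument, showing the algorithm's output is within a small angle of the true maximum variance direction with high probability, and (ii) a complexity argument, showing that all steps other than the single call to $f$ cost only $\tilde{\Theta}(|Y'|)$ in total.

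For correctness, I would first invoke Lemma~\ref{lem:1.2} to conclude that, given $\tilde{\mu}=f(Y')$, Algorithm~\ref{alg:reduction_algo} returns exactly the difference of the unit vectors in $B$, i.e. $b_1-b_2$, with probability at least $1-n\exp(-n\delta^2/2)$. I would then invoke Lemma~\ref{lem:1.1}, which guarantees that the genuine maximum variance direction of $Y'$ lies within angle $\cos^{-1}\!\bigl(\sqrt{1-O(\delta^2/n)}\bigr)$ of $b_1-b_2$ with probability at least $1-2\exp(-\delta^2/2)$. A union bound over these two events shows that, with very high probability, the vector returned by Algorithm~\ref{alg:reduction_algo} is within this same small angle of the true maximum variance direction. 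This angular gap is exactly the \emph{small approximation error} claimed in the statement, and it shrinks as $n$ grows.

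For the running time, I would charge each line separately. Computing $\mu'$ (Line $1$) and the $n$ normalized projections $s_i$ (Lines $4$--$7$) each touch every coordinate once, costing $O(nd)$; extracting the top $n\epsilon$ scores (Line $8$) costs $O(n\log n)$ by sorting, or $O(n)$ by selection. The orthogonal-pair search (Lines $9$--$16$) is the only step needing care: because the construction places exactly $\tfrac{n\epsilon}{2}$ corrupted vectors along each of the mutually orthogonal directions $b_1,b_2$, any fixed $c_1\in C'$ is orthogonal to roughly half of $C'$, so the loop terminates within $O(n\epsilon)$ iterations, each an $O(d)$ inner product. Summing all contributions and absorbing logarithmic factors, every step except the call to $f$ is quasilinear in the input, i.e. $\tilde{\Theta}(|Y'|)$, yielding the total $T(f)+\tilde{\Theta}(|Y'|)$.

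The main obstacle I anticipate is not the arithmetic but the separation property underlying Lemma~\ref{lem:1.2}: that projecting $\mu'-\tilde{\mu}$ onto each normalized vector reliably distinguishes the $n\epsilon$ corrupted vectors from the benign ones, so that $C'$ in Line $8$ is precisely the corrupted set. This rests on $f$ pulling $\tilde{\mu}$ back toward the origin along the $B$-directions (it must downweight the far corrupted mass to keep the maximum variance bounded), while the plain mean $\mu'$ retains that mass; hence $\mu'-\tilde{\mu}$ points along the corrupted directions and has large projection exactly on corrupted vectors, whereas benign Gaussian vectors are near-orthogonal to it. Since that separation is what Lemma~\ref{lem:1.2} establishes, the only remaining work for the theorem itself is the composition and the time bookkeeping above, which are routine.
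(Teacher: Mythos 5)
Your proposal is correct and follows essentially the same route as the paper's proof: compose Lemma~\ref{lem:1.1} and Lemma~\ref{lem:1.2} via a union bound for correctness, then account for Algorithm~\ref{alg:reduction_algo} line by line, isolating the single call to $f$ on Line~2, to obtain the $T(f) + \tilde{\Theta}(|Y'|)$ bound. Your treatment is in fact slightly more careful than the paper's in two spots --- you justify why the orthogonal-pair search terminates in $O(n\epsilon)$ iterations, and you correctly note that the top-$n\epsilon$ selection costs only $O(n\log n)$ rather than the paper's stated $\tilde{\Theta}(n\cdot d)$ --- but these refinements do not change the approach or the result.
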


\begin{proof}
Given the set $Y'$,  Lemma ~\ref{lem:1.1} asserts that the direction of maximum variance of $Y'$ lies approximately along the difference of vectors in set $B$. The approximation error in the angle is stated in Lemma ~\ref{lem:1.1} and approaches zero with increasing number of samples $n$. Lemma ~\ref{lem:1.2} asserts that the reduction stated in Algorithm~\ref{alg:reduction_algo} finds the difference of vectors in set $B$ using $Y'$. The total failure probability in these two Lemmas is either exponentially small in $n$ or in $\delta$, so taking a union bound, the final result is correct with very high probability. 
% Let the time taken to compute $f(Y')$ be $T(f)$.  
The time taken by Algorithm~\ref{alg:reduction_algo}, excluding Line 2, is $\tilde{\Theta}(n\cdot d)$, which can be verified by inspection. Each vector summation and dot product $\langle \cdot \rangle$ is O($d$). The $2$ loops run $O(n)$ times, with each iteration O($d$). Picking the highest $n\epsilon$ values in $S$ (Line 8) takes $\tilde{\Theta}(n\cdot d)$, which is the same as $\tilde{\Theta}(|Y'|)$. The Line 2 takes time $T(f)$. So, the total running time of Alg.~\ref{alg:reduction_algo} is thus $T(f) + \tilde{\Theta}(|Y'|)$.
\end{proof}

We have shown that the existence of efficient strong aggregators $f$ implies being able to compute the approximate direction of maximum variance, for non-trivial inputs as in EC.
The latter problem, to the best of our knowledge however, has $O(d^3)$ algorithms for general inputs. We are not aware of any faster solutions for the input class EC.

\section{Evaluation}
\label{sec:eval}
%-------------------------------------------------------------------------------
\paragraph{Goals.} Our experimental evaluation of machine learning tasks aims to answer two primary questions:
\begin{enumerate}
    \item Does the bias introduced by \attack~during training match the theoretical bias our analysis expects?

    \item How effective is \attack~as an untargeted poisoning attack, i.e., what drop in model accuracy does it induce when used at each training step?
\end{enumerate}

We evaluate on standard image classification datasets, training with SGD using state-of-the-art strong \defences~with strong bias bounds. We evaluate both the partial and full knowledge settings. Furthermore, we aim to compare the impact of our attack on training accuracy with that of existing attacks in these scenarios.

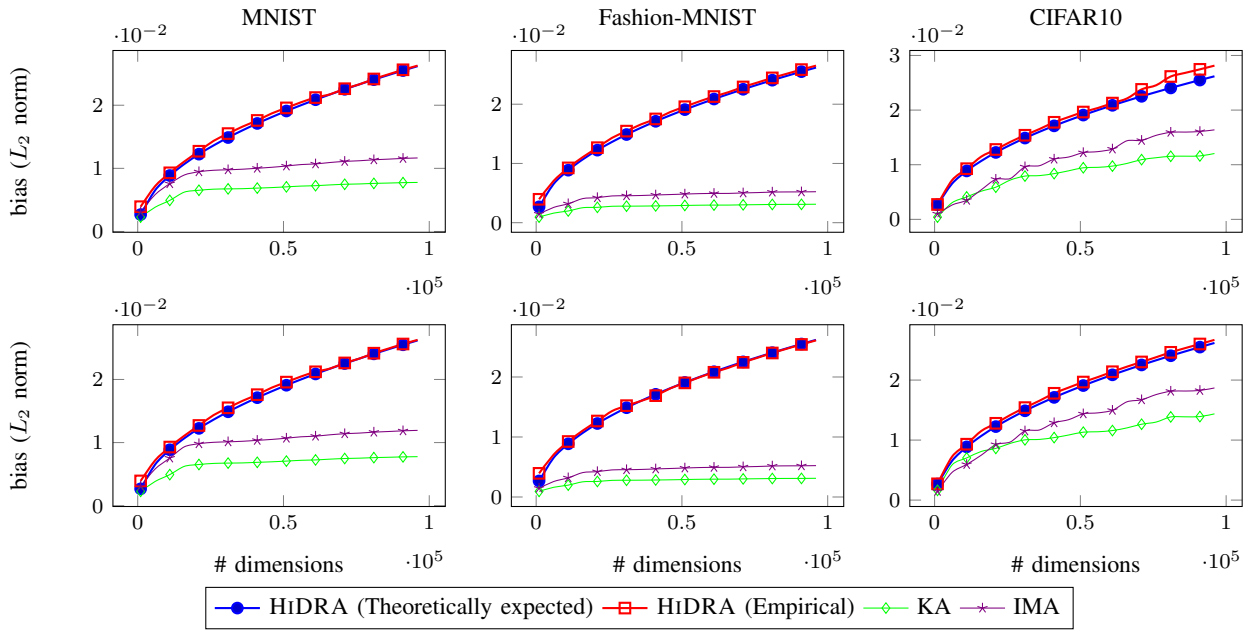
\begin{figure*}[h!]
    \centering
  \begin{tikzpicture}
    \begin{groupplot}[
group style = {group size = 3 by 2, horizontal sep = 25pt, vertical sep = 35pt, xlabels at=edge bottom, ylabels at=edge bottom},
      width=6.0cm,
      height=4.0cm,
    ]

    \nextgroupplot[
      title={MNIST},
      ylabel={bias ($L_2$ norm)},
      grid style=dashed,
      legend style={column sep = 1pt, legend columns = 5, legend to name = grouplegend1, font=\small},
      tick label style={font=\footnotesize}, % Adjust tick label font size
      label style={font=\small},  
      title style={font=\small},
    ]
    
    % First line plot in group 1
    \addplot+[smooth, mark=*, mark repeat=2, thick] table [y=y1, x=x, col sep=comma] {figs/empirical_bias_filtering_mnist.csv};

    % Second line plot in group 1
    \addplot+[smooth, mark=square, mark repeat=2, thick] table [y=y2, x=x, col sep=comma] {figs/empirical_bias_filtering_mnist.csv}; 

    \addplot+[smooth , mark=diamond, mark repeat=2, color=green] table [y=y3, x=x, col sep=comma]
    {figs/empirical_bias_filtering_mnist.csv};

    \addplot+[smooth , mark=star,  mark repeat=2, color=violet] table [y=y4, x=x, col sep=comma]
    {figs/empirical_bias_filtering_mnist.csv};

    \legend{\attack~(Theoretically expected) , \attack~(Empirical), KA, IMA}
    \nextgroupplot[
      title={Fashion-MNIST},
      grid style=dashed,
      tick label style={font=\footnotesize}, % Adjust tick label font size
      label style={font=\small},  
      title style={font=\small},
    ]

    % Third line plot in group 2
    \addplot+[smooth, mark=*, mark repeat=2, thick] table [y=y1, x=x, col sep=comma] {figs/empirical_bias_filtering_fmnist.csv};

    % Second line plot in group 1
    \addplot+[smooth, mark=square, mark repeat=2, thick] table [y=y2, x=x, col sep=comma] {figs/empirical_bias_filtering_fmnist.csv}; 

    \addplot+[smooth , mark=diamond, mark repeat=2, color=green] table [y=y3, x=x, col sep=comma]
    {figs/empirical_bias_filtering_fmnist.csv};

    \addplot+[smooth , mark=star,  mark repeat=2, color=violet] table [y=y4, x=x, col sep=comma]
    {figs/empirical_bias_filtering_fmnist.csv};

    \nextgroupplot[
      title={CIFAR10},
      grid style=dashed,
      tick label style={font=\footnotesize}, % Adjust tick label font size
      label style={font=\small},  
      title style={font=\small},
    ]

    % Third line plot in group 2
    \addplot+[smooth, mark=*, mark repeat=2, thick] table [y=y1, x=x, col sep=comma] {figs/empirical_bias_filtering_cifar.csv};

    % Second line plot in group 1
    \addplot+[smooth, mark=square, mark repeat=2, thick] table [y=y2, x=x, col sep=comma] {figs/empirical_bias_filtering_cifar.csv}; 

    \addplot+[smooth , mark=diamond, mark repeat=2, color=green] table [y=y3, x=x, col sep=comma]
    {figs/empirical_bias_filtering_cifar.csv};

    \addplot+[smooth , mark=star,  mark repeat=2, color=violet] table [y=y4, x=x, col sep=comma]
    {figs/empirical_bias_filtering_cifar.csv};

    \nextgroupplot[
      grid style=dashed,
      xlabel={\# dimensions},
      ylabel={bias ($L_2$ norm)},
      legend style={column sep = 1pt, legend columns = 5, legend to name = grouplegend1, font=\small},
      tick label style={font=\footnotesize}, % Adjust tick label font size
      label style={font=\small},  
      title style={font=\small},
    ]
    
    % First line plot in group 1
    \addplot+[smooth, mark=*, mark repeat=2, thick] table [y=y1, x=x, col sep=comma] {figs/empirical_bias_noregret_mnist.csv};

    % Second line plot in group 1
    \addplot+[smooth, mark=square, mark repeat=2, thick] table [y=y2, x=x, col sep=comma] {figs/empirical_bias_noregret_mnist.csv}; 

    \addplot+[smooth , mark=diamond, mark repeat=2, color=green] table [y=y3, x=x, col sep=comma]
    {figs/empirical_bias_noregret_mnist.csv};

    \addplot+[smooth , mark=star,  mark repeat=2, color=violet] table [y=y4, x=x, col sep=comma]
    {figs/empirical_bias_noregret_mnist.csv};

    \legend{\attack~(Theoretically expected) , \attack~(Empirical), KA, IMA}
    \nextgroupplot[
      xlabel={\# dimensions},
      grid style=dashed,
      tick label style={font=\footnotesize}, % Adjust tick label font size
      label style={font=\small},  
      title style={font=\small},
    ]

    % Third line plot in group 2
    \addplot+[smooth, mark=*, mark repeat=2, thick] table [y=y1, x=x, col sep=comma] {figs/empirical_bias_noregret_fmnist.csv};

    % Second line plot in group 1
    \addplot+[smooth, mark=square, mark repeat=2, thick] table [y=y2, x=x, col sep=comma] {figs/empirical_bias_noregret_fmnist.csv}; 

    \addplot+[smooth , mark=diamond, mark repeat=2, color=green] table [y=y3, x=x, col sep=comma]
    {figs/empirical_bias_noregret_fmnist.csv};

    \addplot+[smooth , mark=star,  mark repeat=2, color=violet] table [y=y4, x=x, col sep=comma]
    {figs/empirical_bias_noregret_fmnist.csv};

    \nextgroupplot[
      xlabel={\# dimensions},
      grid style=dashed,
      tick label style={font=\footnotesize}, % Adjust tick label font size
      label style={font=\small},  
      title style={font=\small},
    ]

    % Third line plot in group 2
    \addplot+[smooth, mark=*, mark repeat=2, thick] table [y=y1, x=x, col sep=comma] {figs/empirical_bias_noregret_cifar.csv};

    % Second line plot in group 1
    \addplot+[smooth, mark=square, mark repeat=2, thick] table [y=y2, x=x, col sep=comma] {figs/empirical_bias_noregret_cifar.csv}; 

    \addplot+[smooth , mark=diamond, mark repeat=2, color=green] table [y=y3, x=x, col sep=comma]
    {figs/empirical_bias_noregret_cifar.csv};

    \addplot+[smooth , mark=star,  mark repeat=2, color=violet] table [y=y4, x=x, col sep=comma]
    {figs/empirical_bias_noregret_cifar.csv};
    \end{groupplot}
    \node at (7, -5.0) {\ref{grouplegend1}};
  \end{tikzpicture}
  \caption{Bias vs. \# of Dimensions against FILTERING (top) and NO-REGRET (bottom) strong aggregators.}
  \label{fig:emp_bias}
\end{figure*}

\subsection{Experimental Setup}
\label{sec:eval_setup}

Untargeted poisoning attacks \cite{fang2020local, xie2020fall} are most often evaluated in federated learning setups, which consist of several untrusted clients and a trusted server. Clients train locally and send local model updates (vector) which are aggregated by the server as discussed in Section~\ref{sec:backandprob}. We evaluate in this setup assuming an $\epsilon$ fraction of the clients are malicious and send update vectors created using \attack.

\paragraph{Datasets.} We use $3$ image classification datasets: MNIST ~\cite{lecun1998gradient}, Fashion-MNIST ~\cite{xiao2017fashion}, and CIFAR10 ~\cite{krizhevsky2009learning}. Each of these datasets comprises $60,000$ training and $10,000$ test examples split across $10$ classes. In MNIST and Fashion-MNIST, the examples are $28 \times 28$ grayscale images, while in CIFAR10, they are $32 \times 32$ color (RGB) images.

\paragraph{Models and training.} We use convolutional neural networks (CNN) with two convolutional layers followed by two fully connected layers with ReLU activations. The CNNs used for MNIST, Fashion-MNIST, and CIFAR10 have $3.2\times10^5$, $3.2\times10^5$, and $1.7\times10^6$ parameters respectively, which are the dimensions $d$ of the vectors being aggregated. In all experiments, the data is independently and identically distributed across $100$ federated clients. We fix $\epsilon=0.2$ for all of our experiments to match the setup with the prior work that proposes strong \defences~\cite{zhu2023byzantine}. We also report on varying $\epsilon$ at the end of this section. In each round of federated learning, every client runs $5$ local epochs with batch size $10$ before submitting the updated local model to the server. We fix the learning rate at $0.001$ for MNIST and Fashion-MNIST, training the models for $100$ rounds. In the case of CIFAR10, we use a learning rate of $0.01$ and extend the training to $400$ rounds.
% We chose these values for the aforementioned hyperparameters based on the best achieved classification accuracy during benign training runs.
We design and implement all our experiments using Python and PyTorch. All the experiments are conducted on Ubuntu 20.04 LTS servers with $64$ AMD Ryzen Threadripper $3970$X CPUs, $96$G RAM, and $2$ NVIDIA GeForce RTX $3090$. The code for evaluation is provided in~\cite{hidracode}.
%
%The code for evaluation is provided in \todo{add repo link}

\paragraph{Strong \defences.} We evaluate all strong \defences~that are tractable to run on high-dimensional vectors. These include FILTERING and NO-REGRET. We follow the setup identical to prior work, where the chunk size is $1000$, the threshold $\threshold$ per chunk is $\sqrt{20}\times10^{-5}$, and the estimated upper bound for $\sigma_{max}^2$ is $10^{-5}$ for each chunk~\cite{zhu2023byzantine}. 
We cannot run the third known strong \defence, called SoS, since it requires solving a system of degree-$4$ equations using a semi-definite programming (SDP) solver. Current SDP solvers cannot do so in a reasonable time for $d>10$, so it is intractable for our tasks.

\paragraph{Prior attacks.} There are $3$ untargeted poisoning attacks considered in prior work evaluating strong aggregators~\cite{zhu2023byzantine}: Krum Attack (KA) ~\cite{fang2020local}, Trimmed Mean Attack (TMA) ~\cite{fang2020local} and Inner-Product Manipulation Attack (IMA)~\cite{xie2020fall}. We
report the bias, and drop in model accuracy, achieved by KA and IMA
against all tractable defenses we evaluate. 

We evaluated TMA for $20$ training iterations on all defenses on all datasets, but we do not report on TMA in detail here. This is because all strong aggregators run very effectively against TMA, but take too much time to run in full. Specifically, we verified that the defenses filter out (assign zero weight to) {\em all the poisoned gradients} created by TMA---unlike for other attacks---in all iterations we tested, so the bias induced by TMA is zero. But such filtering by defenses is iterative and removes each of the $n \cdot \epsilon$ poisoned gradients one at a time, taking $O(d^3)$ time for each. This makes the training very slow (and pointless) to run in full.

% \prateek{Please put in the citations now.}

% \prateek{Please place figure as close as possibe to the text describing the findings. It's a mess to find the right plot.}
% In our extended evaluation, we explore the use of adaptive variance bounds for each interval, as outlined in the appendix.

% \paragraph{Full knowledge vs Partial knowledge attacks.} As previously discussed in Section~\ref{sec:setup}, we assess the performance of the \attack~under two scenarios: one where the attacker possesses knowledge of the local models on all clients and the other where the attacker has information only about the corrupted clients, as elaborated in the appendix.

\subsection{Empirical vs. Theoretical Bias}
\label{sec:eval_empirical_bias}

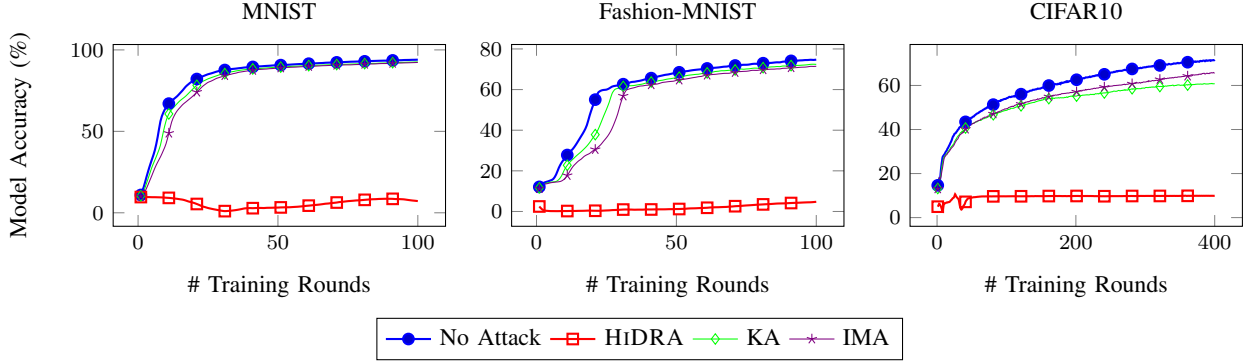
\begin{figure*}[h!]
    \centering
  \begin{tikzpicture}
    \begin{groupplot}[
group style = {group size = 3 by 1, horizontal sep = 25pt, vertical sep = 35pt, xlabels at=edge bottom, ylabels at=edge bottom},
      width=6cm,
      height=4cm,
    ]

    \nextgroupplot[
      title={MNIST},
      xlabel={\# Training Rounds},
      ylabel={Model Accuracy (\%)},
      grid style=dashed,
      legend style={column sep = 1pt, legend columns = 5, legend to name = grouplegend1, font=\small},
      tick label style={font=\footnotesize}, % Adjust tick label font size
      label style={font=\small},  
      title style={font=\small},
    ]
    
    % First line plot in group 1
    \addplot+[smooth, mark=*, mark repeat=10, thick] table [y=y1, x=x, col sep=comma] {figs/filtering_training_mnist_orig_threshold.csv};

    % Second line plot in group 1
    \addplot+[smooth, mark=square, mark repeat=10, thick] table [y=y2, x=x, col sep=comma] {figs/filtering_training_mnist_orig_threshold.csv}; 

    % \addplot+[smooth , mark=triangle, mark repeat=10, color=orange] table [y=y3, x=x, col sep=comma]
    % {figs/filtering_training_mnist_orig_threshold.csv};

    \addplot+[smooth , mark=diamond,  mark repeat=10, color=green] table [y=y4, x=x, col sep=comma]
    {figs/filtering_training_mnist_orig_threshold.csv};

    \addplot+[smooth , mark=star, mark repeat=10, color=violet] table [y=y5, x=x, col sep=comma]
    {figs/filtering_training_mnist_orig_threshold.csv};

    \legend{No Attack, \attack, KA, IMA}
    \nextgroupplot[
      title={Fashion-MNIST},
      grid style=dashed,
      xlabel={\# Training Rounds},
      tick label style={font=\footnotesize}, % Adjust tick label font size
      label style={font=\small},  
      title style={font=\small},
    ]

    % Third line plot in group 2
    \addplot+[smooth, mark=*, mark repeat=10, thick] table [y=y1, x=x, col sep=comma] {figs/filtering_training_fmnist_orig_threshold.csv};

    % Second line plot in group 1
    \addplot+[smooth, mark=square, mark repeat=10, thick] table [y=y2, x=x, col sep=comma] {figs/filtering_training_fmnist_orig_threshold.csv}; 

    % \addplot+[smooth , mark=triangle, mark repeat=10, color=orange] table [y=y3, x=x, col sep=comma]
    % {figs/filtering_training_fmnist_orig_threshold.csv};

    \addplot+[smooth , mark=diamond,  mark repeat=10, color=green] table [y=y4, x=x, col sep=comma]
    {figs/filtering_training_fmnist_orig_threshold.csv};

    \addplot+[smooth , mark=star, mark repeat=10, color=violet] table [y=y5, x=x, col sep=comma]
    {figs/filtering_training_fmnist_orig_threshold.csv};
    \nextgroupplot[
      title={CIFAR10},
      grid style=dashed,
      xlabel={\# Training Rounds},
      tick label style={font=\footnotesize}, % Adjust tick label font size
      label style={font=\small},  
      title style={font=\small},
    ]

    % Third line plot in group 2
    \addplot+[smooth, mark=*, mark repeat=40, thick] table [y=y1, x=x, col sep=comma] {figs/filtering_training_cifar_orig_threshold.csv};

    % Second line plot in group 1
    \addplot+[smooth, mark=square, mark repeat=40, thick] table [y=y2, x=x, col sep=comma] {figs/filtering_training_cifar_orig_threshold.csv}; 

    % \addplot+[smooth , mark=triangle, mark repeat=40, color=orange] table [y=y3, x=x, col sep=comma]
    % {figs/filtering_training_cifar_orig_threshold.csv};

    \addplot+[smooth , mark=diamond,  mark repeat=40, color=green] table [y=y4, x=x, col sep=comma]
    {figs/filtering_training_cifar_orig_threshold.csv};

    \addplot+[smooth , mark=star, mark repeat=40, color=violet] table [y=y5, x=x, col sep=comma]
    {figs/filtering_training_cifar_orig_threshold.csv};
    \end{groupplot}
    \node at (7, -1.5) {\ref{grouplegend1}};
  \end{tikzpicture}
  \caption{Impact of \attack~on accuracy against FILTERING}
  \label{fig:orig_thres_filtering}
\end{figure*}
\begin{figure}[h!]
    \centering
  \begin{tikzpicture}
    \begin{groupplot}[
group style = {group size = 1 by 2, horizontal sep = 25pt, vertical sep = 35pt, xlabels at=edge bottom, ylabels at=edge bottom},
      width=6cm,
      height=4cm,
    ]

    \nextgroupplot[
      title={MNIST},
      ylabel={Model Accuracy (\%)},
      grid style=dashed,
      legend style={column sep = 1pt, legend columns = 5, legend to name = grouplegend1, font=\small},
      tick label style={font=\footnotesize}, % Adjust tick label font size
      label style={font=\small},  
      title style={font=\small},
    ]
    
    % First line plot in group 1
    \addplot+[smooth, mark=*, mark repeat=10, thick] table [y=y1, x=x, col sep=comma] {figs/noregret_training_mnist_orig_threshold.csv};

    % Second line plot in group 1
    \addplot+[smooth, mark=square, mark repeat=10, thick] table [y=y2, x=x, col sep=comma] {figs/noregret_training_mnist_orig_threshold.csv}; 

    % \addplot+[smooth , mark=triangle,  mark repeat=10, color=orange] table [y=y3, x=x, col sep=comma]
    % {figs/noregret_training_mnist_orig_threshold.csv};

    \addplot+[smooth , mark=diamond,  mark repeat=10, color=green] table [y=y4, x=x, col sep=comma]
    {figs/noregret_training_mnist_orig_threshold.csv};

    \addplot+[smooth , mark=star, mark repeat=10, color=violet] table [y=y5, x=x, col sep=comma]
    {figs/noregret_training_mnist_orig_threshold.csv};

    \legend{No Attack, \attack, KA, IMA}
    \nextgroupplot[
      title={Fashion-MNIST},
      xlabel={\# Training Rounds},
      ylabel={Model Accuracy (\%)},
      grid style=dashed,
      tick label style={font=\footnotesize}, % Adjust tick label font size
      label style={font=\small},  
      title style={font=\small},
    ]

    % Third line plot in group 2
    \addplot+[smooth, mark=*, mark repeat=10, thick] table [y=y1, x=x, col sep=comma] {figs/noregret_training_fmnist_orig_threshold.csv};
    
    \addplot+[smooth, mark=square, mark repeat=10, thick] table [y=y2, x=x, col sep=comma] {figs/noregret_training_fmnist_orig_threshold.csv}; 

    % \addplot+[smooth , mark=triangle,  mark repeat=10, color=orange] table [y=y3, x=x, col sep=comma]
    % {figs/noregret_training_fmnist_orig_threshold.csv};

    \addplot+[smooth , mark=diamond,  mark repeat=10, color=green] table [y=y4, x=x, col sep=comma]
    {figs/noregret_training_fmnist_orig_threshold.csv};

    \addplot+[smooth , mark=star,  mark repeat=10, color=violet] table [y=y5, x=x, col sep=comma]
    {figs/noregret_training_fmnist_orig_threshold.csv};

    % \nextgroupplot[
    %   title={CIFAR10},
    %   grid style=dashed,
    %   tick label style={font=\footnotesize}, % Adjust tick label font size
    %   label style={font=\small},  
    %   title style={font=\small},
    % ]

    % Third line plot in group 2
    % \addplot+[smooth, mark=*, mark repeat=40, thick] table [y=y1, x=x, col sep=comma] {figs/noregret_training_cifar_orig_threshold.csv};
    
    % \addplot+[smooth, mark=square, mark repeat=40, thick] table [y=y2, x=x, col sep=comma] {figs/noregret_training_cifar_orig_threshold.csv}; 

    % \addplot+[smooth , mark=none,  dashed, dash pattern=on 4pt off 4pt, color=orange] table [y=y3, x=x, col sep=comma]
    % {figs/noregret_training_fmnist_orig_threshold.csv};

    % \addplot+[smooth , mark=none,  dashed, dash pattern=on 4pt off 4pt, color=green] table [y=y4, x=x, col sep=comma]
    % {figs/noregret_training_fmnist_orig_threshold.csv};

    % \addplot+[smooth , mark=none,  dashed, dash pattern=on 4pt off 4pt, color=violet] table [y=y5, x=x, col sep=comma]
    % {figs/noregret_training_fmnist_orig_threshold.csv};
    \end{groupplot}
    \node at (2.25, -5.2) {\ref{grouplegend1}};
  \end{tikzpicture}
  \caption{Impact of \attack~ on accuracy against NO-REGRET}
   \label{fig:orig_thres_noregret}
\end{figure}
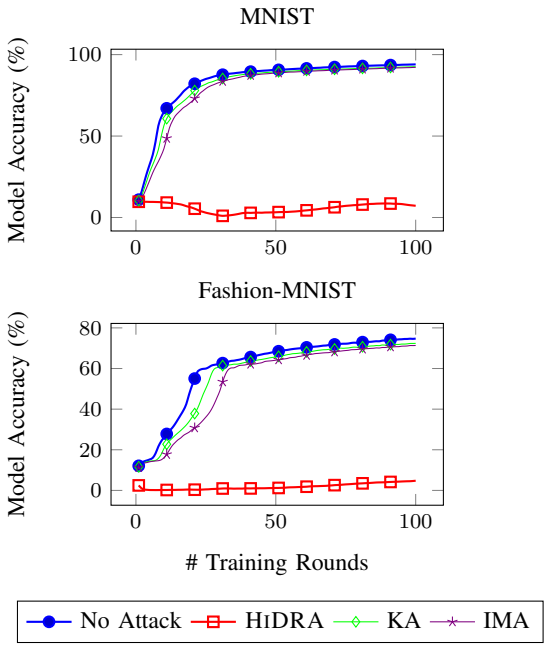

In Section~\ref{sec:analysis}, we prove that \attack~will result in a bias of at least $\sqrt{\epsilon c}\sqrt{\frac{\threshold}{\epsilon + (1-\epsilon)^2}}$. Since the chunk size is $1000$, the theoretically anticipated bias is $\frac{\sqrt{\epsilon d}}{\sqrt{1000}}\cdot\sqrt{\frac{\threshold}{\epsilon + (1-\epsilon)^2}}$. 

We plot the empirical and the theoretical bias using \attack~on all the $3$ datasets against FILTERING and NO-REGRET defenses in Fig.~\ref{fig:emp_bias}. Specifically, for each configuration, consisting of a dataset and a defense, we plot the bias introduced by \attack~for increasing number of dimensions. The empirical bias created by prior attacks KA and IMA attacks is also shown. We choose an arbitrary training round number $10$ to plot the bias, however, the observations are largely the same for all training rounds we sampled. 

% \prateek{Did you check for at least a few more samples, e.g. 10, at different points in training? You don't have to report but just check.}

% Therefore, if we take $m$ chunks at a time then the empirical bias will be computed across all the indices corresponding to those $m$ chunks. Note that the theoretical upper bound $O(\sqrt{\epsilon c}\sqrt{\threshold})$ can be written as $k\sqrt{\epsilon c}\sqrt{\threshold}$ where $k$ is an unknown constant. To compute $k$ we take the first chunk, i.e., $1000$ dimensions, and match the theoretical and empirically observed bias. We use the same $k$ as the number of chunks increase. 

The empirical efficacy of \attack~aligns almost exactly with the theoretical bias anticipated. This shows that our analysis is tight. Our result also shows that when \attack~is used against the state-of-the-art strong \defences~the bias increases proportionately to $\sqrt{d}$, confirming our  main claim. Further, observe that the bias introduced by other attacks is much smaller than \attack, highlighting the drastic efficacy gains achieved. The efficacy of prior attacks also increases marginally as $d$, since they introduce some bias in each chunk that adds up across all chunks. But the difference in efficiency between \attack~and prior attacks widens sharply as $d$ increases.

\begin{mdframed}[backgroundcolor=mygray]
    \attack~induces bias proportional to $\sqrt{\epsilon d}$ against strong \defences, showing a contrast to the idealized analysis of these aggregators. Prior attacks have efficacy well below that of our attack. 
\end{mdframed}

% \prateek{Where is NO-REGRET bias?}

\subsection{Impact of \attack~on Accuracy}
\label{sec:eval_accuracy}

% \prateek{Why is every sentence written structurally inverted? E.g, Across ...., we measure. In summary, against ..., we find ... Pay attentiont to how repetitive your writing is sounding.}

The main motivation for untargeted poisoning attacks is to reduce the overall performance of the trained model by introducing bias. Therefore, we also measure how the classification accuracy of the models gets affected after end-to-end training. We find that \attack~has a sharp and negative effect on the model performance, unlike prior attacks which do not affect the model performance much at all when trained with strong \defences.

First, we report the results for the full-knowledge setting and later report the results for the partial-knowledge setting.

\paragraph{\attack~on FILTERING.} Fig.~\ref{fig:orig_thres_filtering} captures the performance of all the considered attacks against FILTERING. \attack~causes the accuracy of CNN on the MNIST dataset to drop by $\mathbf{87}\%$ (from $\mathbf{94}$ to $\mathbf{7}\%$). Similarly, on the Fashion-MNIST dataset and CIFAR10 datasets, the accuracy drops by $\mathbf{70}\%$ and $\mathbf{62}\%$. Existing attacks do not affect the accuracy much. We verify that this is because corrupted vectors created by prior attacks are filtered out by strong \defence~defenses. In contrast, none of the added corrupted vectors added by \attack~are filtered out.
\begin{figure*}
    \centering
  \begin{tikzpicture}
    \begin{groupplot}[
group style = {group size = 3 by 1, horizontal sep = 25pt, vertical sep = 35pt, xlabels at=edge bottom, ylabels at=edge bottom},
      width=6cm,
      height=4cm,
    ]

    \nextgroupplot[
      title={MNIST},
      xlabel={\# Training Rounds},
      ylabel={Model Accuracy (\%)},
      grid style=dashed,
      legend style={column sep = 1pt, legend columns = 2, legend to name = grouplegend1, font=\small},
      tick label style={font=\footnotesize}, % Adjust tick label font size
      label style={font=\small},  
      title style={font=\small},
    ]
    
    % First line plot in group 1
    \addplot+[smooth, mark=*, mark repeat=10, thick] table [y=y1, x=x, col sep=comma] {figs/filtering_training_mnist_partial_orig_threshold.csv};

    % Second line plot in group 1
    \addplot+[smooth, mark=square, mark repeat=10, thick] table [y=y2, x=x, col sep=comma] {figs/filtering_training_mnist_partial_orig_threshold.csv}; 

    \legend{No Attack, \attack}
    \nextgroupplot[
      title={Fashion-MNIST},
       xlabel={\# Training Rounds},
      grid style=dashed,
      tick label style={font=\footnotesize}, % Adjust tick label font size
      label style={font=\small},  
      title style={font=\small},
    ]

    % Third line plot in group 2
    \addplot+[smooth, mark=*, mark repeat=10, thick] table [y=y1, x=x, col sep=comma] {figs/filtering_training_fmnist_partial_orig_threshold.csv};
    
    \addplot+[smooth, mark=square, mark repeat=10, thick] table [y=y2, x=x, col sep=comma] {figs/filtering_training_fmnist_partial_orig_threshold.csv};

    \nextgroupplot[
      title={CIFAR10},
      grid style=dashed,
       xlabel={\# Training Rounds},
      tick label style={font=\footnotesize}, % Adjust tick label font size
      label style={font=\small},  
      title style={font=\small},
    ]

    % Third line plot in group 2
    \addplot+[smooth, mark=*, mark repeat=40, thick] table [y=y1, x=x, col sep=comma] {figs/filtering_training_cifar_partial_orig_threshold.csv};
    
    \addplot+[smooth, mark=square, mark repeat=40, thick] table [y=y2, x=x, col sep=comma] {figs/filtering_training_cifar_partial_orig_threshold.csv}; 

    \end{groupplot}
    \node at (7, -1.5) {\ref{grouplegend1}};
  \end{tikzpicture}
  \caption{Impact of \attack~(Partial Knowledge) on accuracy against FILTERING.} 
  \label{fig:partial_filtering}
  % \prateek{Why are we not plotting the other other attacks?}}
\end{figure*}
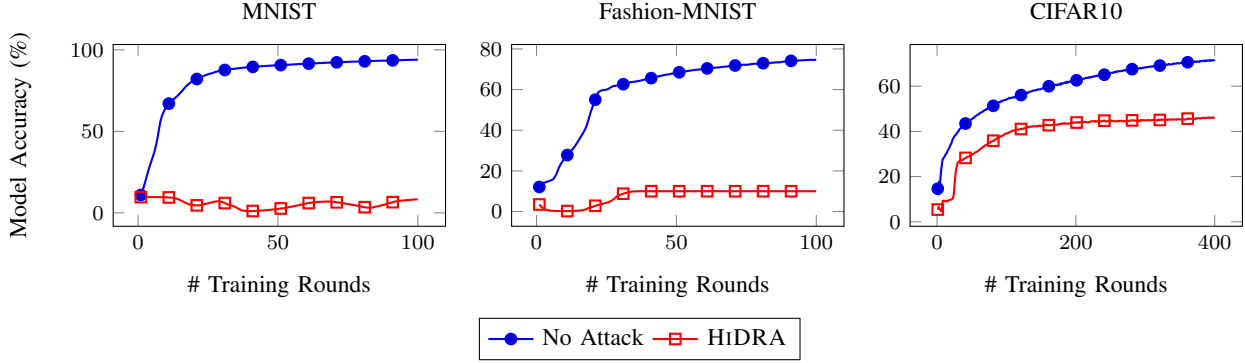
\begin{figure}
    \centering
  \begin{tikzpicture}
    \begin{groupplot}[
group style = {group size = 1 by 2, horizontal sep = 25pt, vertical sep = 35pt, xlabels at=edge bottom, ylabels at=edge bottom},
      width=6cm,
      height=4cm,
    ]

    \nextgroupplot[
      title={MNIST},
      ylabel={Model Accuracy (\%)},
      grid style=dashed,
      legend style={column sep = 1pt, legend columns = 2, legend to name = grouplegend1, font=\small},
      tick label style={font=\footnotesize}, % Adjust tick label font size
      label style={font=\small},  
      title style={font=\small},
    ]
    
    % First line plot in group 1
    \addplot+[smooth, mark=*, mark repeat=10, thick] table [y=y1, x=x, col sep=comma] {figs/noregret_training_mnist_partial_orig_threshold.csv};

    % Second line plot in group 1
    \addplot+[smooth, mark=square, mark repeat=10, thick] table [y=y2, x=x, col sep=comma] {figs/noregret_training_mnist_partial_orig_threshold.csv}; 

    \legend{No Attack, \attack}
    \nextgroupplot[
      title={Fashion-MNIST},
      xlabel={\# Training Rounds},
      ylabel={Model Accuracy (\%)},
      grid style=dashed,
      tick label style={font=\footnotesize}, % Adjust tick label font size
      label style={font=\small},  
      title style={font=\small},
    ]

    % Third line plot in group 2
    \addplot+[smooth, mark=*, mark repeat=10, thick] table [y=y1, x=x, col sep=comma] {figs/noregret_training_fmnist_partial_orig_threshold.csv};
    
    \addplot+[smooth, mark=square, mark repeat=10, thick] table [y=y2, x=x, col sep=comma] {figs/noregret_training_fmnist_partial_orig_threshold.csv};

    % \nextgroupplot[
    %   title={CIFAR10},
    %   grid style=dashed,
    %   tick label style={font=\footnotesize}, % Adjust tick label font size
    %   label style={font=\small},  
    %   title style={font=\small},
    % ]

    % % Third line plot in group 2
    % \addplot+[smooth, mark=none, thick] table [y=y1, x=x, col sep=comma] {figs/filtering_training_cifar_adap_threshold.csv};
    
    % \addplot+[smooth, mark=none, thick] table [y=y2, x=x, col sep=comma] {figs/filtering_training_cifar_adap_threshold.csv}; 

    \end{groupplot}
    \node at (2.00, -5.2) {\ref{grouplegend1}};
  \end{tikzpicture}
  \caption{Impact of \attack~(Partial-Knowledge) on accuracy against NO-REGRET.}
  \label{fig:partial_noregret}
  % \prateek{Why are we not plotting the other other attacks? And CIFAR10 is missing}}
\end{figure}
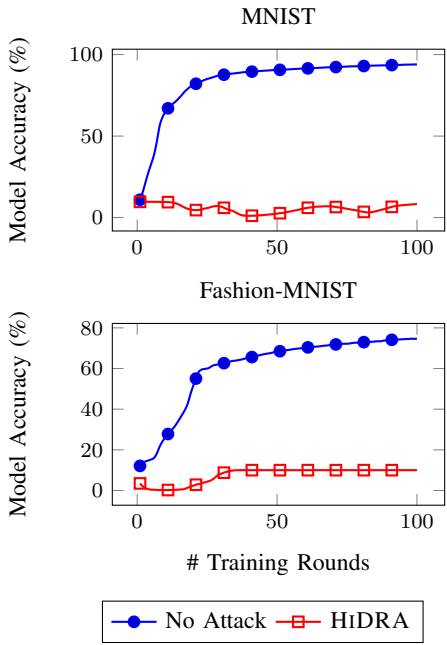
\paragraph{\attack~on NO-REGRET.} 
Fig.~\ref{fig:orig_thres_noregret} illustrates the impact of \attack~on the model performance against NO-REGRET. \attack~ drastically reduces accuracy by $\mathbf{87\%}$ and $\mathbf{70\%}$ for MNIST and Fashion-MNIST respectively.
NO-REGRET is computationally $d$ times more expensive than FILTERING, hence, it is not practical to evaluate it on the CIFAR10 dataset which has at least $5\times$ higher $d$ than the other one. It would take several days to train one model using NO-REGRET on CIFAR10 using our testbed. Therefore, we report results on MNIST and Fashion-MNIST only. 

% \sarthak{Recall from Section ~\ref{sec:eval_empirical_bias} that the defenses run very slowly, but effectively, for TMA. All corrupted gradients are filtered out in the iterations we checked. So, to evaluate the defenses and evaluate its impact on the model accuracy.} 

\begin{mdframed}[backgroundcolor=mygray]
    \attack~completely destroys the ML model performance for all datasets against \defence~defenses we evaluated in the full-knowledge setting.
\end{mdframed}

\begin{figure*}[h]
    \centering
  \begin{tikzpicture}
    \begin{groupplot}[
group style = {group size = 3 by 1, horizontal sep = 25pt, vertical sep = 35pt, xlabels at=edge bottom, ylabels at=edge bottom},
      width=6cm,
      height=4cm,
    ]

    \nextgroupplot[
      title={MNIST},
      xlabel={\# Training Rounds},
      ylabel={Model Accuracy (\%)},
      grid style=dashed,
      legend style={column sep = 1pt, legend columns = 5, legend to name = grouplegend1, font=\small},
      tick label style={font=\footnotesize}, % Adjust tick label font size
      label style={font=\small},  
      title style={font=\small},
    ]
    
    % First line plot in group 1
    \addplot+[smooth, mark=*, mark repeat=10, thick] table [y=y1, x=x, col sep=comma] {figs/training_mnist_varying_mal_num.csv};

    % Second line plot in group 1
    \addplot+[smooth, mark=star, mark repeat=10, thick, color=violet] table [y=y2, x=x, col sep=comma] {figs/training_mnist_varying_mal_num.csv}; 

    \addplot+[smooth, mark=triangle, mark repeat=10, color=orange, thick] table [y=y3, x=x, col sep=comma]
    {figs/training_mnist_varying_mal_num.csv};

    \addplot+[smooth , mark=diamond, mark repeat=10, color=green, thick] table [y=y4, x=x, col sep=comma]
    {figs/training_mnist_varying_mal_num.csv};

    \addplot+[smooth , mark=square, mark repeat=10, color=red, thick] table [y=y5, x=x, col sep=comma]
    {figs/training_mnist_varying_mal_num.csv};

    \legend{No Attack, 0.01, 0.05, 0.10, 0.20}
    \nextgroupplot[
      title={Fashion-MNIST},
      xlabel={\# Training Rounds},
      grid style=dashed,
      tick label style={font=\footnotesize}, % Adjust tick label font size
      label style={font=\small},  
      title style={font=\small},
    ]

    % First line plot in group 1
    \addplot+[smooth, mark=*, mark repeat=10, thick] table [y=y1, x=x, col sep=comma] {figs/training_fmnist_varying_mal_num.csv};

    % Second line plot in group 1
    \addplot+[smooth, mark=star, mark repeat=10, thick, color=violet] table [y=y2, x=x, col sep=comma] {figs/training_fmnist_varying_mal_num.csv}; 

    \addplot+[smooth , mark=triangle, mark repeat=10, color=orange, thick] table [y=y3, x=x, col sep=comma]
    {figs/training_fmnist_varying_mal_num.csv};

    \addplot+[smooth, mark=diamond, mark repeat=10, color=green, thick] table [y=y4, x=x, col sep=comma]
    {figs/training_fmnist_varying_mal_num.csv};

    \addplot+[smooth, mark=square, mark repeat=10, thick, color=red] table [y=y5, x=x, col sep=comma]
    {figs/training_fmnist_varying_mal_num.csv};

    % Third line plot in group 2
    \nextgroupplot[
      title={CIFAR10},
      xlabel={\# Training Rounds},
      grid style=dashed,
      tick label style={font=\footnotesize}, % Adjust tick label font size
      label style={font=\small},  
      title style={font=\small},
    ]

    % First line plot in group 1
    \addplot+[smooth, mark=*, mark repeat=40, thick] table [y=y1, x=x, col sep=comma] {figs/training_cifar_varying_mal_num.csv};

    % Second line plot in group 1
    \addplot+[smooth, mark=star, mark repeat=40, thick, color=violet] table [y=y2, x=x, col sep=comma] {figs/training_cifar_varying_mal_num.csv}; 

    \addplot+[smooth , mark=triangle, mark repeat=40, color=orange, thick] table [y=y3, x=x, col sep=comma]
    {figs/training_cifar_varying_mal_num.csv};

    \addplot+[smooth, mark=diamond, mark repeat=40, color=green, thick] table [y=y4, x=x, col sep=comma]
    {figs/training_cifar_varying_mal_num.csv};

    \addplot+[smooth, mark=square, mark repeat=40, thick, color=red] table [y=y5, x=x, col sep=comma]
    {figs/training_cifar_varying_mal_num.csv};

    \end{groupplot}
    % \node at (2.25, -8.7)
    \node at (7, -1.5){\ref{grouplegend1}};
  \end{tikzpicture}
  \caption{Impact of \attack~on accuracy with varying $\epsilon$}
  \label{fig:varying_mal_num}
\end{figure*}
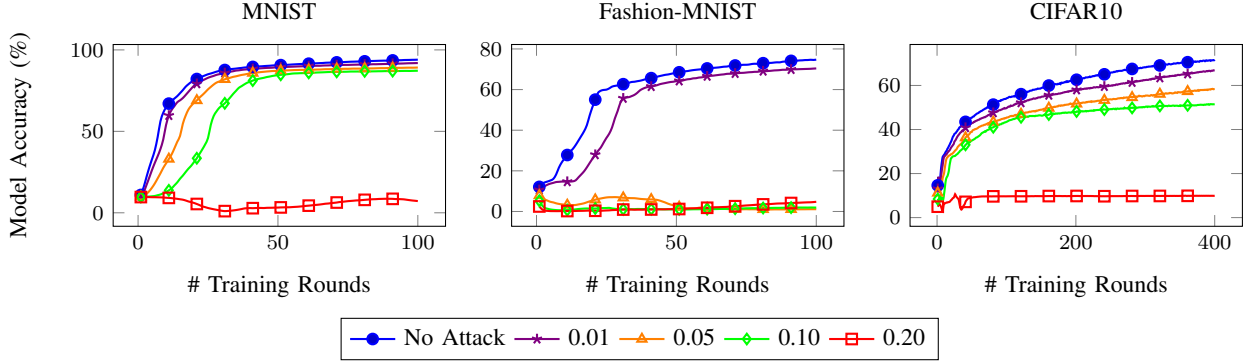
\paragraph{Partial-knowledge setting.} 
We consider only the gradients of the first $\epsilon \cdot n$ samples chosen randomly in each SGD round are seen and controlled by the adversary.
\attack~works effectively in this setting as well as shown in Fig.~\ref{fig:partial_filtering}.  \attack~reduces the model accuracy by $\mathbf{86\%}$, $\mathbf{65\%}$, and $\mathbf{26\%}$ in MNIST, Fashion-MNIST, and CIFAR10 datasets respectively against FILTERING. The results are similar against NO-REGRET (see Fig~\ref{fig:partial_noregret}). The impact of \attack~in the partial-knowledge setting is much more pronounced than that of prior attacks even with full-knowledge.

\begin{mdframed}[backgroundcolor=mygray]
    \attack~ with partial-knowledge is more effective even when compared to prior attacks with full-knowledge.
\end{mdframed}

One expects the attack efficacy to reduce in the partial-knowledge setting compared to the full-knowledge. This is because the adversary has to estimate the direction of the benign mean using only points it can see, i.e., its own vectors. Our result shows that the difference between the two settings, though visible, is relatively small on real datasets.

\paragraph{\attack~performance with varying $\epsilon$.} We have reported all the results at $\epsilon=0.2$ so far. We show the impact of \attack~for varying $\epsilon=\{0.01, 0.05, 0.1, 0.2\}$, the fraction of corrupted samples, in the full-knowledge setting.

% However, does the attack performance remain high if the fraction of corrupted clients is smaller? To answer this,  

Fig.~\ref{fig:varying_mal_num} shows the results. \attack~continues to exhibit a large drop in model accuracy even at lower fractions. For instance, at $\epsilon=0.05$, \attack~lowers the performance by $\mathbf{74\%}$ on Fashion-MNIST. At $\epsilon=0.1$, the drop induced is over $20\%$ for CIFAR10.
The sharpest drops occur at $0.1 \leq \epsilon < 0.2$ for MNIST and CIFAR10.

\begin{mdframed}[backgroundcolor=mygray]
    \attack~can induce over $70\%$ loss of model accuracy even at smaller $\epsilon=0.05$ on some datasets. The sharpest drop in accuracy is between $\epsilon$ of $0.1$ and $0.2$ for others.
\end{mdframed}

% , . However, at the same $\epsilon$, the model performance still remains the same for MNIST. This signifies that the bias introduced, while being optimal, was not sufficient to affect the model training process for MNIST while it was sufficient for Fashion-MNIST. Therefore,  it is possible that as $\epsilon$ decreases the bias introduced by \attack~is sufficient to affect the performance significantly and that depends on the dataset and model architecture. 

% \input{figs/adap_threshold_filtering}

\section{Related Work}
\label{sec:related}

% Algorithms with strong bias guarantees for byzantine FL

Byzantine Robust aggregation has its roots in robust statistics, a subject with a rich history~\cite{huber2004robust,maronna2019robust}. The threat of poisoning attacks on large neural nets has brought urgent attention to it, and in particular, to the challenge of minimizing bias when aggregating high-dimensional vectors. 

\paragraph{Poisoning attacks.}
Poisoning attacks are a long-standing issue for ML in adversarial environments~\cite{newsome2006paragraph}. Their threat was highlighted early in federated learning systems for general ML classifiers~\cite{wang2014man} and neural networks~\cite{shen2016auror}. Since then there has been an evolving cat-and-mouse game between poisoning attacks~\cite{fang2020local, xie2019dba, shejwalkar2021manipulating, bagdasaryan2020backdoor, wang2020attack} and defenses to mitigate them~\cite{nguyen2022flame, sun2019can, wu2020mitigating, xie2022zenops, jia2021intrinsic, jia2022certified, levine2020deep, rosenfeld2020certified, weber2023rab, krauss2023mesas, xie2023unraveling}.
Targeted poisoning attacks aim to train models that misclassify a certain class of inputs~\cite{bhagoji2019analyzing, tolpegin2020data}. Backdoor attacks aim to train models that misclassify inputs that have planted trigger patterns~\cite{bagdasaryan2020backdoor, xie2019dba, wang2020attack}. These attacks also bias the aggregate gradient of the model, however, they do not aim to create optimal bias. Therefore, existing strong \defences~have been shown to mitigate these attacks~\cite{zhu2023byzantine}. 
Untargeted poisoning attacks,  our motivating application,
also create bias but differ in their goal, i.e., to destroy model performance~\cite{aisec2017}.
The unifying vulnerability across poisoning attacks is that they bias gradient vectors used in averaging, with the malicious ones having orientation and magnitude different from the benign vectors~\cite{hong2020effectiveness}. Knowing something about the distribution of benign vectors, therefore, is a universal basis to build defenses.
% Designing \attack-like strategies to break defences that mitigate these poisoning attacks is also an interesting future work direction.  
%
The intuition that outlier removal based on the distributional properties of benign vectors can increase the robustness of modern ML models is evident in early work~\cite{shen2016auror}, but many such defenses lack theoretical analysis and evaluation on non-adaptive attacks. The main issue is how to provably defeat an adaptive attacker, one which carefully adjusts the attack strategy with knowledge of the defense. 

% Prior to this work, several attacks have been proposed to break defences . However, none of the prior attacks have 

% is an important problem that has been addressed by several recent works.

\paragraph{Weak \defences.} 
Many aggregators are accompanied by analyses under adaptive worst-case inputs. They have weak bounds that are proportional to $O(\sqrt{\epsilon \cdot d})$.
Yin et al. proposed coordinate-wise trimmed mean and median for byzantine robust federated learning~\cite{yin2018byzantine}. Concurrently, defenses based on Euclidean distances between vectors, Krum, and geometric median, were proposed~\cite{blanchard2017machine, pillutla2022robust}
and subsequently extended~\cite{guerraoui2018hidden}.
In Krum, for each vector, the sum of distances of the closest $n-n\epsilon -1$ vectors is computed, and the vector with the smallest such aggregate distance is chosen as the mean. Analyses for all of these aggregators give a bias upper bound of $O(\sqrt{\epsilon\cdot d})$~\cite{lai2016agnostic, lugosi2021robust}.

\paragraph{Poisioning attacks against weak \defences.} Untargeted poisoning attacks, especially in federated learning, are the prime nemesis for~\defences.
Bit flipping and label flipping were early attacks that proved to be effective without any defense in place~\cite{yin2018byzantine, paudice2019label}. Several weak~\defences~such as coordinate-wise, trimmed mean, and Krum have demonstrably mitigated these attacks~\cite{yin2018byzantine, blanchard2017machine}. Xie et al.~\cite{xie2020fall} and Fang et al.~\cite{fang2020local} have proposed several attacks that mitigate known weak \defences~as well. They follow a similar strategy of placing corrupted gradients to ours along the opposite direction of the benign gradient and tuning the magnitude of the corrupted gradients to defeat the specific attacks considered. Their bias introduced, however, is far from optimal. Shejwalkar et al. have recently shown that these attacks do not affect the training process much at all when the corruption fraction is low ($\epsilon<0.01$)~\cite{shejwalkar2022back}. Nevertheless, all of these attacks have been recently shown to be mitigated by much stronger defenses, even at higher $\epsilon\in [0.1-0.5)$. This is why our focus has been on strong \defences.

\paragraph{Strong \defences.} These aggregators achieve an $\tilde{O}(\sqrt{\epsilon})$ upper bound on the bias, thus, removing the dependence on $d$~\cite{zhu2023byzantine, zhu2020deconstructing, hopkins2020robust} (see Section~\ref{sec:strong_defences} for details). Zhu et al. show that these aggregators when implemented in practice mitigate all aforementioned weak \defences. We design \attack~to break the guarantees offered strong defenses~\cite{zhu2023byzantine, zhu2020deconstructing, hopkins2020robust}. We give optimal attacks in the low dimensional regime and identify a common computational bottleneck in them. We are not aware of prior work that carefully explains why the bottleneck is somewhat fundamental and identifies its existence as an exploit opportunity. The state-of-the-art practical realization of strong \defence~ is given by~\cite{zhu2023byzantine}. Our attack creates bias proportional to $\sqrt{d}$ in these practical realizations. Our results leave the gap wide open between the ideally desired robustness and that which is practical presently in high dimensional settings.

% \prateek{The above algorithms offer distribution agnostic bias guarantees and hence we focus on them. Certain defenses like DnC~\cite{DnC} have shown strong bounds for very restrictive  corruption strategies, which can be defeated by specialized attacks that violate the assumptions of their analysis directly.}

% We show that \attack introduces optimal bias that matches their upper bound on the bias confirming that their analysis is tight. However, these algorithms are computationally expensive $\min \left(O(nd^2), O(d^3) \right)$ and are impractical to implement for high-dimensional vectors (for instance, $d > 10^6)$. Hence, practical realizations of these algorithms, used for federated learning, operate on smaller chunks of the original vectors to utilize parallel compute. This degrades the bias guarantees to scale to $O(\sqrt{\epsilon\cdot d})$ and \attack~also achieves optimal bias against these algorithms implemented in practice. Finally, we show that this computational bottleneck is fundamental to this problem. Specifically, we show that there cannot exist a \defence~algorithm that provide optimal bias guarantees that are agnostic to input distributions while also having a computational complexity of $\Theta (nd)$.

\paragraph{Robust mean estimation.}
Robust mean estimation in high dimensions lends a theoretical underpinning to byzantine robust federated learning, as explained in Section~\ref{sec:backandprob}. Apart from the \defences~we discussed in Section~\ref{sec:strong_defences}, Zhu et al. propose a robust aggregator using Generative Adversarial Networks that are trained to remove the outliers. This method relies heavily on training and tuning the hyperparameters of GANs to provide to optimal bias guarantees which is not feasible in practical scenarios~\cite{zhu2023byzantine}. Cheng et al. proposed a robust mean aggregator based on Filtering with optimal bias guarantees with time complexity $O(\frac{nd}{\epsilon^6})$ ~\cite{cheng2019high}. This aggregator relies on solving a dual SDP problem with $2$-degree constraints in linear time. However, we are not aware of any such efficient SDP solvers for high $d$. Another work proposes a robust aggregator with optimal bias guarantees with $O(nd\cdot poly(\log d))$ ~\cite{dong2019quantum}. The analysis of this aggregator relies on having auxiliary information about the number of corrupted directions.
% Even otherwise, their aggregator uses the power iteration procedure to compute arbitrarily close approximate of maximum eigenvalue whose time complexity and convergence guarantees are unknown.

% Algorithms with weak bias guarantees for byzantine FL

% Algorithms for byzantine FL but with different setups and proving different things (no bias guarantees)

% Attacks against Byzantine FL
%  - untargeted, targeted

% Robust mean estimation: The algorithms with some of them being adapted to byantine robust FL and the other either have strong assumptions / bottlenecks that prevent them from being adapted.
\section{Conclusion \& Future Work}

% Robust aggregation of high-dimensional vectors is a fundamental challenge, prompting an ongoing struggle between poisoning attacks and provable defenses. Recent proposals for strong \defences~with optimal bias guarantees have offered a limit on poisoning attacks, conceptually and experimentally. However, 

We have shown nearly optimal attacks against practical realizations of strong \defences.  In our experiments, untargeted poisoning attacks using our approach against these algorithms almost completely destroy model performance where previous attacks fail to have much impact. Strong aggregators are thus practically much more vulnerable than anticipated by prior theoretical analysis when working with high-dimensional vectors.
%
% Notably, the vulnerability is computational which is ignored by prior information-theoretic analyses.

We have argued that the vulnerability is fundamental to strong aggregators that are deterministic and aim to work for all feasible distributions generically. Future work can explore provable algorithms to directly improve the computational bottlenecks we highlight, consider randomized defenses, or specialize for features arising in certain gradient distributions. \attack~is not meant to be a stealthy attack; a specialized defense targeting \attack~can detect its signature. Devising stealthier attacks that next adaptation of defenses cannot detect would be interesting. Extending \attack~to other poisoning attacks is another possibility. 

%It is also an open question to explore which poisoning attacks work for idealized implementations of strong aggregators.
\section*{Acknowledgements}

We are thankful to the anonymous reviewers and our shepherd on the program committee. We also wish to thank Ankit Pensia, Kareem Shehata, and Jason Zhijingcheng Yu for their helpful feedback on previous drafts. 
This research is supported by the research funds of the Crystal Centre at National University of Singapore and the Ministry of Education Singapore grants: Tier-2
grant MOE-T2EP20220-0014 and Tier-1 grant T1 251RES2023. All opinions expressed in the work are those of the authors.

\bibliographystyle{IEEEtran}
%\bibliography{IEEEabrv,../bib/paper}
\bibliography{IEEEabrv, paper}
% \begin{thebibliography}{1}

% \bibitem{IEEEhowto:kopka}
% H.~Kopka and P.~W. Daly, \emph{A Guide to \LaTeX}, 3rd~ed.\hskip 1em plus
%   0.5em minus 0.4em\relax Harlow, England: Addison-Wesley, 1999.

% \end{thebibliography}

\appendices
\section{Theoretical Analysis: Complete Proofs}
\label{sec:full_proofs}

Here we provide the complete proofs for Lemmas in Sections~\ref{sec:analysis} and~\ref{sec:tradeoff}.

\lemla*
\begin{proof}

We are given $X: \{x_1, x_2, \ldots, x_n\}$ and $Y: \{y_1, y_2, \ldots, y_n\}$ and a direction $\hat{s}$. Without loss of generality, let's assume that we corrupt the first $n \cdot \epsilon$ update vectors in the following manner:
\begin{align}
    y_i = -z\hat{s}\text{, } i \in [1,\ldots,\epsilon n]
\end{align}
We have to find this value of $z$.

\begin{align}
    bias &= \frac{1}{n}||\sum x_i - \sum y_i||_2\\
         &\geq \frac{1}{n}\left[\sum\langle x_i, \hat{s}\rangle - \sum \langle y_i, \hat{s} \rangle\right]\\
         &\geq \mu^{\hat{s}} - \mu_c^{\hat{s}}\\
    \mu_c^{\hat{s}} &= \frac{1}{n} \left(\sum_{i=1}^{n\epsilon}\langle y_i, \hat{s}\rangle + \sum_{i=n\epsilon+1}^{n}\langle x_i, \hat{s}\rangle\right)\\
    % &= \frac{1}{n} \left(n\epsilon(-z) + n(1-\epsilon)\mu^{\hat{s}}\right) \{\textit{ for large }n\}\\
    &= \epsilon(-z) + \frac{1}{n}\sum_{i=n\epsilon + 1}^{n} \langle x_i, \hat{s}\rangle\\
    &= \epsilon (-z) + (1-\epsilon)\left(\mu^{\hat{s}} \pm \frac{\delta \sigma_{\hat{s}}}{\sqrt{n}}\right)\\
    \text{where } &\frac{\delta \sigma_{\hat{s}}}{\sqrt{n}} \text{ is the error with probability } 1 - e^{-\delta^2} \nonumber\\
    \text{using chernoff } &\text{bound, we will add this to $z$ at the end} \nonumber\\
    \implies \mu^{\hat{s}} - \mu_{c}^{\hat{s}} &= \epsilon(z+\mu^{\hat{s}})  \label{eq:t0}
\end{align}

% \prateek{Eq. 9 does not exactly follow from 8. It's true only in the limit $n \rightarrow \infty$. Lemma statement is therefore incorrect.}
We find that value of $z$ for which the variance along $\hat{s}$ goes from $\sigma^2_{\hat{s}}$ to $\threshold$ and does not exceed it. 
% This is necessary for the corruptions to not be filtered out (or reweighted) by \defence.

\begin{align}
    \frac{1}{n}\left(\sum_{i=1}^{n\epsilon} \left(-z - \mu_c^{\hat{s}}\right)^2 + \sum_{i=n\epsilon+1}^{n} \left(x_i^{\hat{s}} - \mu_{c}^{\hat{s}}\right)^2 \right) \leq \threshold\\
    \frac{1}{n}\left(\sum_{i=1}^{n\epsilon} \left(-z - \mu_c^{\hat{s}}\right)^2 + \sum_{i=1}^{n}\left(x_i^{\hat{s}} - \mu_c^{\hat{s}}\right)^2 \right. \notag \\ 
    \left. - \sum_{i=1}^{n\epsilon} \left(x_i^{\hat{s}} - \mu_{c}^{\hat{s}}\right)^2 \right) \leq \threshold \label{eq:t1}
\end{align}

We solve for the Equation~\ref{eq:t2} since it implies~\ref{eq:t1} 
% \prateek{Really, how?}.

\begin{align}
    \frac{1}{n}\left(\sum_{i=1}^{n\epsilon}\left(-z - \mu_c^{\hat{s}}\right)^2 + \sum_{i=1}^{n}\left(x_i^{\hat{s}} - \mu_c^{\hat{s}}\right)^2\right) = \threshold \label{eq:t2}\\
    \frac{1}{n}\left(n\epsilon(-z-\mu_c^{\hat{s}})^2 + \sum_{i=1}^{n}\left(x_i^{\hat{s}}-\mu^{\hat{s}} + \mu^{\hat{s}} - \mu_c^{\hat{s}}\right)^2\right)\\
    \frac{1}{n}\left(n\epsilon\left(-z-\mu_c^{\hat{s}}\right)^2 + \Sigma_{i=1}^{n}\left(x_i^{\hat{s}} - \mu^{\hat{s}}\right)^2 + n\left(\mu^{\hat{s}} - \mu_c^{\hat{s}}\right)^2\right)\\
    \implies \epsilon\left(-z - \mu_c^{\hat{s}}\right)^2 + \sigma_{\hat{s}}^2 + \epsilon^2\left(z + \mu^{\hat{s}}\right)^2 = \threshold \\
    \text{after substituting from }\ref{eq:t0}\notag\\
    \epsilon\left(-z + \epsilon\left(z+\mu^{\hat{s}}\right) - \mu^{\hat{s}} + \delta \right)^2 + \sigma_{\hat{s}}^2 + \epsilon^2\left(z+ \mu^{\hat{s}}\right)^2 = \threshold\\
    \left(z + \mu^{\hat{s}}\right)^2\cdot\left(\epsilon^2 + \epsilon(1-\epsilon)^2\right) = \threshold - \sigma_{\hat{s}}^2 \\
    z + \mu^{\hat{s}} = \sqrt{\frac{\threshold - \sigma_{\hat{s}}^2}{\epsilon^2 + \epsilon(1-\epsilon)^2}}
\end{align}
Therefore, 
\begin{align}
    z = \sqrt{\frac{\threshold - \sigma_{\hat{s}}^2}{\epsilon^2 + \epsilon(1-\epsilon)^2}} - \mu^{\hat{s}} \pm \frac{\delta \sigma_{\hat{s}}}{ \sqrt{n}}\label{eq:z1}
\end{align}

Substituting the value of $z+\mu^{\hat{s}}$ in Equation~\ref{eq:t0}, we can bound the bias as follows.

\begin{align}
   bias &\geq \sqrt{\epsilon}\cdot\sqrt{\frac{\threshold - \sigma_{\hat{s}}^2}{\epsilon + (1-\epsilon)^2}}\\
   bias &\geq \Omega(\sqrt{\epsilon})\cdot\sqrt{\threshold} \text{, }\{\because \threshold > 9 ||\Sigma||_2, \sigma_{\hat{s}} < ||\Sigma||_2^{\frac{1}{2}}\}
\end{align}
\end{proof}

\lemlb*
\begin{proof}
    
    Consider a random direction $\hat{t}$ and say the corruptions are $-z'\hat{t}$ (notice that $z'$ and $\hat{t}$ are different from $z$ and $\hat{s}$). Let's also consider $\threshold$ as the final variance $\bar{\sigma}^2_{\hat{t}}$ that will be attained along the direction $\hat{t}$. Then as long as $\bar{\sigma}^2_{\hat{t}}>\sigma^2_{max}$, the equation $19$ from Lemma~\ref{lem:l1} can be rewritten as 
    \begin{align}
        \bar{\sigma}_{\hat{t}}^2 = K\cdot(\mu^{\hat{t}} + z')^2 + \sigma_{\hat{t}}^2 \text{, } \{K = \sqrt{\epsilon^2 + \epsilon(1-\epsilon)^2}\}
    \end{align}

    Now, say $\hat{s}$ is the direction of benign mean $\mu$ and we place the corruptions along $-\hat{s}$ rather than $-\hat{t}$. Therefore, 

    \begin{align}
        \bar{\sigma}_{\hat{s}}^2 = K\cdot(\mu^{\hat{s}} + z)^2 + \sigma_{\hat{s}}^2 \text{, } \{K = \sqrt{\epsilon^2 + \epsilon(1-\epsilon)^2}\}
    \end{align}

    In this scenario, the corruptions will have a projection along $-\hat{t}$ which with be $z'=\langle-z\hat{s}, -\hat{t}\rangle$ and $\mu$ will have a component $\mu^{\hat{t}}=\langle\mu, \hat{t}\rangle$ along $\hat{t}$. So, if we add corruptions along $\hat{s}$ then the final variance in the direction $\hat{t}$ can be computed by substituting these projections $z'$ and $\mu^{\hat{t}}$ in equation $22$. Hence we get,
    
    \begin{align*}
    \bar{\sigma}_{\hat{t}}^2 &= K\cdot(\langle\mu, \hat{t}\rangle + \langle-z\hat{s}, -\hat{t}\rangle)^2 + \sigma_{\hat{t}}^2\\
    \bar{\sigma}_{\hat{s}}^2 &= K\cdot(\mu^{\hat{s}} + z)^2 + \sigma_{\hat{s}}^2
    \end{align*}
    Notice that, if we choose $\sigma^2_{\hat{s}}=\sigma^2_{max}$ then,
    \begin{align*}
    \sigma_{\hat{t}}^2 &\leq \sigma^2_{\hat{s}}\text{, }\because \sigma^2_{\hat{s}}=\sigma^2_{max}\\
    (\langle -z\hat{s},\hat{-t}\rangle + \langle\mu, \hat{t}\rangle)^2 &\leq (\mu^{\hat{s}}+z)^2\\
    \implies \bar{\sigma}_{\hat{s}}^2 \geq \bar{\sigma}_{\hat{t}}^2\text{, } \forall \hat{t}\\
    \implies ||\Sigma_Y||_2  = \sigma_{\hat{s}}^2 \leq \threshold
    \end{align*}
\end{proof}

\lemlc*
\begin{proof}
    For each chunk $i$, 
    $$
    bias_i \geq k\sqrt{\epsilon} \cdot \sqrt{\threshold}
    $$
    Therefore, 
    \begin{align*}
        bias &\geq \sqrt{\sum_{i=1}^{c}(bias_i)}\\
        bias &\geq k\cdot\sqrt{\epsilon c}\cdot \sqrt{\threshold}
    \end{align*}
\end{proof}

\lemfa*
% \prateek{What is 'above' ref to? Better to give Construction a name?}
\begin{proof}For set $Y = \{y_1, \dots, y_{n - n \epsilon}\}$ and $Y' = \{y_1, \dots, y_{n - n\epsilon}, y_{n - n\epsilon}, \dots, y_n\}$. Consider, 
\begin{align}
    \mu &= \frac{1}{n - n\epsilon} \sum_{i=1}^{n - n\epsilon}y_i = \hat{\mu} \\
    \mu' &= \frac{1}{n} \sum_{i=1}^{n} y_i = (1 - \epsilon)\hat{\mu} + \frac{1}{n}\sum_{i=n-n\epsilon+1}^{n} y_i \\
    \mu' &= (1 - \epsilon)\hat{\mu} + \mu_{c} \hspace{0.5cm} \text{where } \mu_{c} = \frac{1}{n}\sum_{i=n-n\epsilon+1}^{n} y_i \label{eq:corrupted_mean}
\end{align}
Each $y_i$ for $i \in [n - n\epsilon + 1, n]$ aligns along one of the directions in $B = \{b_1, b_2\}$, so

\begin{align}
    \mu_{c} &= \frac{1}{n}\left( \sum_{j=1}^{\frac{n \epsilon}{2}}(\sqrt{d} + l_{1j})\cdot b_1 + \sum_{j=1}^{\frac{n \epsilon}{2}}(\sqrt{d} + l_{2j}) \cdot b_2 \right) \\
    \mu_{c} &= \frac{\epsilon \sqrt{d}}{2}(b_1 + b_{2})
\end{align}
Given the set $B$ is chosen randomly without any correlation to $Y$, without loss of generality, we can consider $B$ as unit vectors along any $2$ axes in $\mathbb{R}^d$. For simplicity, we consider $B$ as the unit vectors along the first $2$ axes. Following this, $\mu_{c} \in \mathbb{R}^d$ has first $2$ components as $\frac{\epsilon \sqrt{d}}{2}$ and rest as $0$. 
% So we can write as following later in our analysis:
\begin{align}
    \mu_c = \Big[ \frac{\epsilon \sqrt{d}}{2}, \frac{\epsilon \sqrt{d}}{2},0 , \dots, 0\Big] \label{eq:avg_corruption_vector}
\end{align}

To find the direction of the maximum variance of $Y'$, first, we compute the covariance matrix of $Y'$ as follows.
\begin{align}
    \Sigma_{Y'} &= \frac{1}{n} \sum_{i=1}^{n} (y_i - \mu')^T \cdot (y_i - \mu') 
\end{align}
Substitute $\mu'$ from eq. ~\ref{eq:corrupted_mean}
\begin{align}
    \Sigma_{Y'} &= \frac{1}{n} \sum_{i=1}^{n} (y_i - (1-\epsilon)\hat{\mu} - \mu_c)^T \cdot (y_i - (1-\epsilon)\hat{\mu} - \mu_c) \label{eq:corrupted_covariance_first}
\end{align}
To simplify the eq. ~\ref{eq:corrupted_covariance_first} , consider the following equations. 
\begin{align}
    \text{Cov}(Y) &= \frac{1}{n - n\epsilon}\sum_{i=1}^{n - n\epsilon}(y_i - \hat{\mu})^{T} \cdot (y_i - \hat{\mu}) = \hat{I} \\
    \frac{1}{n - n\epsilon} &\left( \sum_{i=1}^{n - n\epsilon}y_i^T \cdot y_i \right) - \hat{\mu}^T \cdot \hat{\mu} = \hat{I} \\
    \sum_{i = 1}^{n - n\epsilon} y_i^T &\cdot y_i = (n - n\epsilon) ( \hat{I} + \hat{\mu}^T\cdot\hat{\mu}) \label{eq:subs_1}
\end{align}
Further from eq. ~\ref{eq:corrupted_mean}, we can write
\begin{align}
    \sum_{i = 1}^{n}y_i = (n - n \epsilon)\hat{\mu} + n\mu_c \label{eq:subs_2}
\end{align}
Now, expand the RHS of eq. ~\ref{eq:corrupted_covariance_first} by multiplying each term and replacing the following terms using eq. ~\ref{eq:subs_1} and ~\ref{eq:subs_2}
\begin{align}
    \sum_{i=1}^{n} y_i^T\cdot y_i &\rightarrow (n - n\epsilon) (\hat{I} + \hat{\mu}^T\cdot\hat{\mu}) + \sum_{i=n-n\epsilon + 1}^{n} y_i^T\cdot y_i \nonumber\\
    \sum_{i = 1}^{n}y_i &\rightarrow (n - n \epsilon)\hat{\mu} + n\mu_c \nonumber
\end{align}
The simplified expression of $\Sigma_{Y'}$ would be:
\begin{align}
    \Sigma_{Y'} &= \frac{1}{n}\sum_{i=n - n\epsilon + 1}^{n} y_i^T\cdot y_i  - \mu_c^T\mu_c \nonumber\\
    + (1 - \epsilon)\hat{I}  &- (1 - \epsilon)(\mu_c^T \cdot \hat{\mu} +  \hat{\mu}^T \cdot {\mu_c}) + (1-\epsilon)(\epsilon)\hat{\mu}^T \cdot \hat{\mu} \label{eq:covariance_long}
\end{align}

Notice the last three terms of the above expression contain the errors due to sampling from the spherical Gaussian in $\hat{\mu}$ and $\hat{I}$. Next, we write the matrix expression of all terms and bound the variance incurred by the error terms by a constant which is negligible compared to the total. Hence, we show that the unit vector direction that maximizes the variance of $\Sigma_{Y'}$ including the error terms will be approximately close to the exact maximum variance direction computed without them. 

As all the corrupted vectors are along one of the first $2$ axes in $\mathbb{R}^d$, $\sum_{i=n - n\epsilon + 1}^{n} y_i^T\cdot y_i$ is a diagonal matrix with the first $2$ as non-zero and equal to $\sum_{j=1}^{\frac{n \epsilon}{2}} (\sqrt{d} + l_{ij})^2$ for $i \in [1, 2]$. Therefore, the first term of eq. ~\ref{eq:covariance_long} can be written as following where $\gamma = \sum_{j=1}^{\frac{n \epsilon}{2}} l_{ij}^2, \text{ for} i \in [1, 2]$:
\begin{align}
    \frac{1}{n}\sum_{i=n - n\epsilon + 1}^{n} y_i^T \cdot y_i &= \left(\frac{\epsilon d}{2} + \frac{\gamma}{n}\right)
    \begin{bmatrix}
        1 & 0 & 0 & \dots & 0 \\
        0 & 1 & 0 & \dots & 0\\
        0 & 0 & 0 & \dots & 0\\
        \vdots & \vdots & \vdots & \ddots \\
        0 & 0 & 0 & 0 & 0 \\
    \end{bmatrix} 
\end{align}
Further, from eq. \ref{eq:avg_corruption_vector}, $\mu_c^T \cdot \mu_c$ has first $2 \times 2$ terms as non-zero and equal to $\frac{\epsilon \sqrt{d}}{2} \times \frac{\epsilon \sqrt{d}}{2}$. So, we have:
\begin{align}
    \mu_c^T \cdot \mu_c &= \frac{\epsilon^2 d}{4}
    \begin{bmatrix}
    1& 1 & 0 & \dots & 0\\
    1& 1 & 0 & \dots & 0\\
    0& 0 & 0 & \dots & 0\\
    \vdots & \vdots & \vdots & \ddots \\
    0 & 0 & 0 & 0 & 0 \\
    \end{bmatrix} 
\end{align}
% Now, for the sampling error terms in eq.~\ref{eq:covariance_long}, we know that set $Y$ is sampled from a gaussian with mean zero of unit variance in all directions. It implies that each component of $\hat{\mu}$ is $O(\frac{\delta}{\sqrt{n}})$ with probability $1 - e^{\frac{-\delta^2}{2}}$. Hence, the matrix form of sampling error terms is as follows:
Now, for the sampling error terms in eq.~\ref{eq:covariance_long}, assume the sampling error in the first $2$ component of $\hat{\mu}$ is $\hat{\mu}_1, \hat{\mu}_2$. Then, the matrix form of the sampling error terms is as follows:
\begin{align}
     (1 - \epsilon)&(\mu_c^T \cdot \hat{\mu} +  \hat{\mu}^T \cdot {\mu_c}) \nonumber\\
     &= \frac{(1 -\epsilon)\epsilon \sqrt{d}}{2}\cdot 
    \begin{bmatrix}
    2\hat{\mu}_1& \hat{\mu}_1 + \hat{\mu}_2 & 0 & \dots & 0\\
    \hat{\mu}_1 + \hat{\mu}_2&  2\hat{\mu}_2& 0 & \dots & 0\\
    0& 0 & 0 & \dots & 0\\
    \vdots & \vdots & \vdots & \ddots \\
    0 & 0 & 0 & 0 & 0 \\
    \end{bmatrix} 
\end{align}
As set $Y$ is sampled from a spherical Gaussian, it implies that both $\hat{\mu}_1$ and $\hat{\mu}_2$ is less than $O(\frac{\delta}{\sqrt{n}})$ with probability at least $(1 - 2e^{\frac{-\delta^2}{2}})$ using Chernoff bound. Using Frobenius norm as bound for Spectral norm, we can claim the variance incurred by this is O($\frac{\delta}{\sqrt{n}})\cdot\sqrt{d}$. Similarly, the variance incurred by the error term $\hat{\mu}^T\cdot\hat{\mu}$ is $O(\frac{\delta^2}{n})\cdot d$ with probability at least $(1 - (\frac{\delta}{\sqrt{n}})^d)$ using Chernoff bound. And for $\hat{I}$, the spectral norm is bounded by $O(\frac{d}{n})$ ~\cite{vershynin2012close}. 

Now, we combine the exact and sampling terms in eq.~\ref{eq:covariance_long} and write them in matrix form where $\alpha = \frac{\epsilon d}{2}(1 - \frac{\epsilon}{2}) + \frac{\gamma}{n}$, $\beta = -\frac{\epsilon^2 d}{4}$, and $\Sigma_{err}$ is the combined covariance matrix for sampling error terms.
\begin{align}
    \Sigma_{Y'} =     
    \begin{bmatrix}
        \alpha & \beta & 0 & \dots & 0\\
        \beta& \alpha & 0& \dots & 0\\
        0 & 0 & 0 & \dots & 0\\
        \vdots & \vdots & \vdots & \ddots \\
        0 & 0 & 0 & 0 & 0 \\
    \end{bmatrix}  + \Sigma_{err}
\end{align}
Finally, we have a matrix form of the exact covariance term and upper bound of the error terms. Next, take an arbitrary unit vector $v = [v_1, v_2, \dots, v_d]$ and the variance of set $Y'$ along $v$, denoted by $\sigma_v^2$, is as follows: 
\begin{align}
    \sigma_v^2 &= \langle v, \Sigma_{Y'} \cdot v \rangle \\
    \sigma_v^2 &= \alpha(v_1^2 + v_2^2) + 2\beta(v_1v_2) + \langle v, \Sigma_{err} \cdot v \rangle
\end{align}
Notice the exact part of the variance scales with $d$ which is much larger than the variance incurred by error terms. Since $\alpha$ is positive and $\beta$ is negative, the exact part is maximum when $v_1^2 + v_2^2$ gets equal to $1$ and $v_1v_2$ attains its minimum value. Hence, the exact part attains its maximum value at $v_1 = \frac{1}{\sqrt{2}}$, $v_2 = -\frac{1}{\sqrt{2}}$ or vice-versa which is the direction of difference of vectors in set $B$. Now, we argue that the maxima of the whole expression should be close to the above solution. Consider another unit vector $v'\in \mathbb{R}^d$ close to the above solution $v=[\frac{1}{\sqrt{2}}, -\frac{1}{\sqrt{2}}]$ such that their dot product is $\rho$, we can write 
\begin{align}
    v_1' - v_2' = \sqrt{2}\rho\\
    v_1'^2 + v_2'^2 - 2v_1'v_2' = 2\rho^2 
\end{align}
Given $\alpha > |\beta|$, the decrease in variance incurred by exact terms is at least $2(1 - \rho^2)|\beta|$. $v'$ is the maxima for the whole expression as long this decrease is less than the variance incurred by error terms. As shown earlier, the maximum value of $\langle v, \Sigma_{err} \cdot v \rangle$ is $O(\frac{\delta^2}{n}\cdot d)$ with probability at least $1 - 2e^{\frac{-\delta^2}{2}}$. For $v'$ to be the direction of maximum variance, 
\begin{align}
    2(1-\rho^2)|\beta| \leq O\left(\frac{\delta^2}{n}\right)\cdot d\\
    2(1-\rho^2) \frac{\epsilon^2 d}{4} \leq O\left(\frac{\delta^2}{n}\right)\cdot d\\
    \sqrt{1 - \frac{2}{\epsilon^2}O\left(\frac{\delta^2}{n}\right)} \leq \rho
\end{align}
Hence, ignoring $\frac{2}{\epsilon^2}$ we have shown that the direction of the maximum variance of $Y'$ has an angle of less than $cos^{-1}\left(\sqrt{1 - O(\frac{\delta^2}{n}})\right)$ from the difference of vectors in set $B$ with probability at least $1 - 2e^{\frac{-\delta^2}{2}}$.
\end{proof}

\lemfb*
\begin{proof}
Similar to the proof of Lemma ~\ref{lem:1.1}, since the set $B$ is chosen randomly without any correlation to $Y$, we can take $B$ as any $2$ axes in $\mathbb{R}^d$. For simplicity of calculations, let's take $B$ as the first $2$ axes. Then, we can write $\mu'$ can be written as the following using eq.\ref{eq:corrupted_mean} and \ref{eq:avg_corruption_vector},
\begin{align}
    \mu' = [\frac{\epsilon\sqrt{d}}{2}, \frac{\epsilon\sqrt{d}}{2}, 0, \dots, 0] + (1 - \epsilon)\hat{\mu}
\end{align}
Notice for all $y_i \in Y'$, $\langle \frac{y_i}{||y_i||_2}, \hat{\mu}\rangle$ follows a $1$-D Gaussian distribution with mean $0$ and variance $\frac{1}{\sqrt{n}}$. Therefore, it would be $O(\delta)$ with probability of $1 - e^{\frac{-n\delta^2}{2}}$
So, for all corrupted vectors $y_i \in Y'$ with probability $1 - e^{\frac{-n\delta^2}{2}}$,  
\begin{align}
\langle \frac{y_i}{||y_i||_2}, \mu'  \rangle &= \langle b_j, \mu' \rangle \hspace{0.4cm} b_j \in B \\
\langle \frac{y_i}{||y_i||_2}, \mu'  \rangle &= \frac{\epsilon \sqrt{d}}{2} \pm (1 - \epsilon)O(\delta) \label{eq:corrupted_proj}
\end{align}
For uncorrupted $y_i \in Y'$, consider $y_1 = [y_{11}, y_{12}, \dots, y_{1d}]$, 
\begin{align}
    \langle \frac{y_1}{||y_1||_2}, \mu' \rangle = \frac{\epsilon \sqrt{d}}{2||y_1||_2} (y_{11} + y_{12}) + (1 - \epsilon)O(\delta)\label{eq:20}
\end{align}
For such uncorrupted $y_i$ in $Y'$, $\frac{\sqrt{d}}{||y_i||_2}$ is a constant $t\approx 1$ since each of them is a random sample from $d$-dimensional spherical gaussian with mean as $0$ and covariance as $I$, hence each of them has an $L_2$ norm of $\sqrt{d}$ on expectation.
\begin{align}
    \langle \frac{y_1}{||y_1||_2}, \mu'\rangle = \frac{\epsilon t}{2}(y_{11} + y_{12}) + (1 - \epsilon)O(\delta)\label{eq:21}
\end{align} 
For the projection in eq. ~\ref{eq:21} to be more than $\frac{\epsilon\sqrt{d}}{2}$, at least one of the two components should be greater than $\frac{\sqrt{d}}{2 t} - O(\delta)$. But all components of an uncorrupted vector in $Y'$, as $y_1$, follow a single dimensional Gaussian distribution with mean as $0$ and variance as $1$. Then, using Chernoff bounds for $1$-D Gaussian, we can calculate the probability of any of the first $2$ components being more than $\frac{\sqrt{d}}{2 t} - O(\delta)$ as follows,
\begin{align}
    \text{Pr}(y_{11} \text{ or } y_{12} \geq \frac{\sqrt{d}}{2 t} - O\left(\delta\right)) \leq 2 \cdot e ^{(-\frac{d}{8t^2} + O(\delta^2))}
\end{align}
Note that the probability of the event mentioned above is exponentially small in $d$. Hence, we can claim that the projection of $\mu'$ along corrupted vectors in $Y$ is larger compared to the uncorrupted vectors with extremely high probability. Furthermore, given $||f(Y') - \mu||_2 \leq \tau \sqrt{\epsilon}$ for some constant $\tau$, for $\mu = 0$ we can write, 
\begin{align}
\langle \tilde{\mu}, v\rangle &\leq \tau \sqrt{\epsilon} \hspace{1cm} \forall v \in \mathbb{R}^d, ||v||_2 = 1 \label{eq:23}
\end{align}
\
From eq. \ref{eq:corrupted_proj} and eq. \ref{eq:23}, for all corrupted vectors in $Y'$
\begin{align}
\langle \mu' - \tilde{\mu}, \frac{y_i}{||y_i||_2} \rangle &\geq \left(\frac{\epsilon}{2}\sqrt{d} - \tau\sqrt{\epsilon}  + (1 - \epsilon) O(\delta)\right) \label{eq:24}
\end{align}
\
And for all other uncorrupted vectors in $Y'$ with probability $1 - n e^{ \frac{-n\delta^2}{2}}$ (ignoring negligibly small terms than $n e^{ \frac{-n\delta^2}{2}}$), 
\begin{align}
    \langle \mu' - \tilde{\mu}, \frac{y_i}{||y_i||_2} \rangle < \left(\frac{\epsilon}{2}\sqrt{d} - \tau\sqrt{\epsilon}  + (1 - \epsilon) O(\delta)\right) \label{eq:25}
\end{align}
It implies that if we arrange all the vectors of $Y'$ in the order of projection of $\mu' - \tilde{\mu}$ along them, then top $n \cdot \epsilon$ vectors will be the set of corrupted vectors $C$ with very high probability. 
Given the corrupted set contains $\frac{n \epsilon}{2}$ along each vector in $B$. It implies that for each corrupted vector there will be $\frac{n \epsilon}{2}$ other corrupted vectors perpendicular to it. For any such mutually perpendicular corrupted vectors, one of them is along $b_1$ and the other is along $b_2$. Then, 
\begin{align}
    \frac{c_i}{||c_i||_2} - \frac{c_j}{||c_j||_2} = \pm (b_1 - b_2)  \hspace{0.5cm} c_i, c_j \in C, \langle c_i, c_j \rangle = 0
\end{align}
% \begin{align}
%     \sum_{i=n - n\epsilon + 1}^{n}\frac{y_i}{||y_i||_2} = \frac{n \epsilon}{\kappa} \sum_{j = 1}^{\kappa}b_j \\
%     \frac{1}{n \epsilon} \sum_{i=n - n\epsilon + 1}^{n}\frac{y_i}{||y_i||_2} = \frac{1}{\kappa}\sum_{j = 1}^{\kappa}b_j
% \end{align}
This confirms that $\frac{c_i}{||c_i||_2} - \frac{c_j}{||c_j||_2} $ equals to the difference of vectors in set $B$ and hence aligns with the direction of maximum variance of $Y'$ as well (Lemma ~\ref{lem:1.1}). The reduction algorithm ~\ref{alg:reduction_algo} creates the set $C$ and finds two perpendicular vectors in $C$. Finally, the algorithm returns the difference of these vectors normalized by their norm which is equal to the difference of vectors in $B$. 
% This and Lemma~\ref{lem:1.1} imply that the reduction algorithm \ref{alg:reduction_algo} returns the approximate direction of maximum variance.
\end{proof}

\section{Addressing Meta-Review Comments}
\label{sec: dnceval}
We address the noteworthy concern 2) raised in the meta-review by explaining why we did not evaluate DnC defense (Alg. 2 in \cite{shejwalkar2021manipulating}). This is because DnC does not satisfy the definitions of a strong robust aggregator, to the best of our knowledge. 

We point readers to our meta Algorithm~\ref{alg:existing_algos} outlined a framework for strong robust aggregators, featuring a loop (Line 3) for the iterative removal of corrupted vectors.  DnC does not have such an iterative step. Note that prior analyses of strong robust aggregators underscore that iterations at least equal to the number of corrupted vectors are necessary to provide provable guarantees in the worst-case scenario (i.e. where all corruptions are mutually orthogonal).

To illustrate, we give a straightforward attack against DnC. The absence of the iterative steps means DnC can only eliminate corrupted vectors aligned with the direction of the maximum eigenvector. It fails to detect corrupted vectors aligned with other eigenvectors, such as the eigenvector with the second-highest variance. This limitation arises because DnC never recalculates eigenvectors after removing corrupted vectors along the eigenvector direction with the largest variance, unlike other existing strong aggregators.

Building on this insight, our attack works as follows: Begin by choosing vector, $b_1$, of dimension $d$ (the dimension of benign gradients) with binary values $\{0,1\}$ chosen randomly, and its complement vector, $b_2$. This strategy ensures that the dot product of $b_1$ and $b_2$ is zero, as well as the dot product of any subset of dimensions of $b_1$ and $b_2$. Next, place a single corrupted vector in the direction of $b_1$ from the mean, at a distance of $\beta \cdot ||avg||$, where $||avg||$ represent the average distance of benign vectors from the mean. Subsequently, position the remaining corrupted vectors in the direction of $b_2$ from the mean, at a distance of $c \cdot \beta \cdot ||avg||$. Select a small value for $c$ to ensure that the maximum eigenvector after corruption aligns closely with the direction of $b_1$ from the mean. Under this attack, DnC only detects and filters out a single corrupted vector along $b_1$ from the mean, while marking the remaining vectors as inliers and computing the arithmetic mean with them included. The magnitude of the resulting bias incurred by such a corruption strategy depends on the parameter $\beta$ chosen by the attacker and can be scaled arbitrarily. We have experimentally verified this claim by considering a subset of $1000$ dimensions of gradients while training a CNN on CIFAR10 with $\epsilon=0.2$, as described in section \ref{sec:eval_setup}. We corrupted this set using the aforementioned attack strategy with $c=0.02$ and employed DnC to compute its robust mean at one step. Fig. \ref{fig:dnc_attack} illustrates the bias incurred by the attack on the same set with varying $\beta$. It scales up with $\beta$ and can be further scaled arbitrarily. Furthermore, we confirmed that all vectors along $b_2$ from the mean are marked as inliers by DnC for all the varying values of $\beta$. The code for our attack and our implementation\footnote{The artefacts of the original paper did not have the algorithm's implementation. Hence, we re-implemented it and privately corresponded with one of the authors of that work to check its correctness.} of DnC is provided in~\cite{hidracode}.

The DnC defense also proposes to sample a subset of dimensions and considers only them to compute the largest eigenvector to filter outlier vectors in one shot. This is fundamentally different from iterative removal of corrupted vectors by computing a maximum eigenvector in every iteration as done in strong robust aggregators. Our presented attack here will work even if a smaller or larger subset of dimensions are used to compute eigenvectors.

% We further provide an example where DnC cannot filter the corrupted vectors altogether. Consider an attack where all corrupted vectors, except one, are aligned and orthogonal to it. If this one vector is significantly larger compared to the others, then the top eigenvector almost aligns in its direction. In this case, the outlier score computed by DnC for the remaining corrupted vectors becomes negligible, as they are nearly orthogonal to the top eigenvector. Consequently, DnC does not eliminate any corrupted vectors except the largest one. We have experimentally checked our observations too.

\begin{figure}
    \centering
  \begin{tikzpicture}
    \begin{groupplot}[
group style = {group size = 1 by 1, horizontal sep = 25pt, vertical sep = 35pt, xlabels at=edge bottom, ylabels at=edge bottom},
      width=6cm,
      height=4cm,
    ]

    \nextgroupplot[
      title={},
      xlabel={$\beta$ (parameter in the attack)},
      ylabel={bias ($L_2$ norm)},
      grid style=dashed,
      legend style={column sep = 1pt, legend columns = 2, legend to name = grouplegend1, font=\small},
      tick label style={font=\footnotesize}, % Adjust tick label font size
      label style={font=\small},  
      title style={font=\small},
    ]
    
    % First line plot in group 1
    \addplot+[smooth, mark=*, thick] table [y=y1, x=x, col sep=comma] {figs/dnc_attack.csv};

    % % Second line plot in group 1
    % \addplot+[smooth, mark=square, mark repeat=10, thick] table [y=y2, x=x, col sep=comma] {figs/noregret_training_mnist_partial_orig_threshold.csv}; 

    % \legend{No Attack, \attack}
    % \nextgroupplot[
    %   title={Fashion-MNIST},
    %   xlabel={\# Training Rounds},
    %   ylabel={Model Accuracy (\%)},
    %   grid style=dashed,
    %   tick label style={font=\footnotesize}, % Adjust tick label font size
    %   label style={font=\small},  
    %   title style={font=\small},
    % ]

    % % Third line plot in group 2
    % \addplot+[smooth, mark=*, mark repeat=10, thick] table [y=y1, x=x, col sep=comma] {figs/noregret_training_fmnist_partial_orig_threshold.csv};
    
    % \addplot+[smooth, mark=square, mark repeat=10, thick] table [y=y2, x=x, col sep=comma] {figs/noregret_training_fmnist_partial_orig_threshold.csv}; 

    % \nextgroupplot[
    %   title={CIFAR10},
    %   grid style=dashed,
    %   tick label style={font=\footnotesize}, % Adjust tick label font size
    %   label style={font=\small},  
    %   title style={font=\small},
    % ]

    % % Third line plot in group 2
    % \addplot+[smooth, mark=none, thick] table [y=y1, x=x, col sep=comma] {figs/filtering_training_cifar_adap_threshold.csv};
    
    % \addplot+[smooth, mark=none, thick] table [y=y2, x=x, col sep=comma] {figs/filtering_training_cifar_adap_threshold.csv}; 

    \end{groupplot}
    % \node at (2.00, -5.2) {\ref{grouplegend1}};
  \end{tikzpicture}
  \caption{Bias (in $L_2$ norm) incurred by the proposed attack against DnC with varying $\beta$ (parameter in the attack).}
  \label{fig:dnc_attack}
  % \prateek{Why are we not plotting the other other attacks? And CIFAR10 is missing}}
\end{figure}
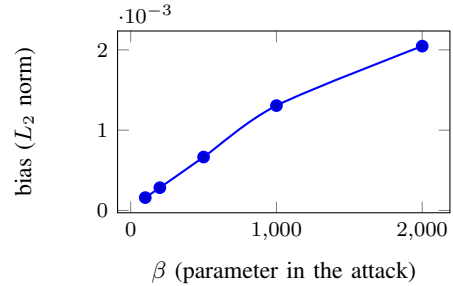

\newpage
\section{Meta-Review}

The following meta-review was prepared by the program committee for the 2024
IEEE Symposium on Security and Privacy (S\&P) as part of the review process as
detailed in the call for papers.

\subsection{Summary}
This paper introduces \attack, a novel attack that targets strong robust aggregators in high-dimensional federated learning environments. Such aggregators provide an upper bound on the bias against poisoning attacks that aim to corrupt a fraction of inputs. However, they introduce a computational bottleneck that limits their application to high-dimensional data like those present in most deep learning settings. To address the bottleneck, practical adaptations split dimensions into disjoint chunks and operate on each chunk individually. \attack~biases the aggregate of each chunk in the dimensionally split robust aggregators, and the paper shows analytically and experimentally that the resulting total bias approaches theoretical upper bounds and degrades model performance.

\subsection{Scientific Contributions}
\begin{itemize}
\item Provides a Valuable Step Forward in an Established Field
\item Identifies an Impactful Vulnerability
\end{itemize}

\subsection{Reasons for Acceptance}
\begin{enumerate}
\item \attack~reveals a new vulnerability in the practical implementation of strong robust aggregators.
\item The paper provides solid theoretical analysis and empirical results that match the theory.
\item The results demonstrate that \attack~achieves near-optimal bias against robust aggregators in untargeted poisoning attacks.
\end{enumerate}

\subsection{Noteworthy Concerns} % Exclude if your meta-review does not have noteworthy concerns
\begin{enumerate} % Enumerate environment is not necessary if there is only one
\item The paper does not propose countermeasures or mitigation strategies against \attack.
\item The performance of \attack~against DnC-based federated learning, a potential defense, is not evaluated.
\item The evaluation focuses on a single architecture, and the impact that this has on the severity of the attack remains uncertain.
\end{enumerate}

\section{Response to the Meta-Review} % Optional
Thank you reviewers for providing helpful reviews. We have addressed the meta-review comment 2) above regarding DnC in our Appendix~\ref{sec: dnceval}, showing that it does not satisfy the guarantees of a strong aggregator. We leave addressing the remaining comments for future work.

\end{document}